\documentclass[11pt]{article}
\usepackage{macros}

\newcommand*\samethanks[1][\value{footnote}]{\footnotemark[#1]}

\title{Understanding Robust Catalytic Computing}
\author{ Michal Kouck\'y\thanks{Partially supported by the Grant Agency of the Czech Republic under the grant agreement no. 24-10306S and by Charles Univ. project UNCE 24/SCI/008.} \\ Charles University \\ \texttt{koucky@iuuk.mff.cuni.cz} \and Ian Mertz\samethanks[1] \\ Charles University \\ \texttt{iwmertz@iuuk.mff.cuni.cz} \and Sasha Sami\samethanks[1] \\ Charles University \\ \texttt{sashasami@iuuk.mff.cuni.cz} }

\date{}

\begin{document}


\maketitle

\vspace{-3em}
\begin{abstract}
    \noindent
    Catalytic computing concerns space bounded computation which starts with memory full of data that have to be restored by the end of the computation.
    Lossy catalytic computing, defined by Gupta et al.~\cite{GuptaJainSharmaTewari24} and fully characterized by Folkertsma et al.~\cite{FolkertsmaMertzSpeelmanTupker25}, is the study of allowing a small number of errors when resetting the catalytic tape at the end of a computation. Such a notion is useful when considering the robust use of catalytic techniques in the study of ordinary space-bounded algorithms. To that end however, defining and characterizing less strict notions of error was left open by \cite{FolkertsmaMertzSpeelmanTupker25} and other works~\cite{Mertz23}.

    \vspace{0.5em}
    \noindent
    We expand the definition of possible resetting error in three natural ways:
    \begin{enumerate}
        \item randomized catalytic computation which can \emph{completely destroy} the catalytic tape with some probability over the randomness
        \item randomized catalytic computation which makes a bounded number of errors \emph{in expectation over the randomness}
        \item deterministic catalytic computation which makes a bounded number of errors \emph{in expectation over the initial catalytic tape itself}
    \end{enumerate}
    We show a near complete characterization of the above models, both in the general case and in the logspace polynomial-time regime, by showing equivalences either between one another, to errorless catalytic space models, or to standard time or space complexity classes. Under a derandomization assumption, we show a near full collapse of all existing catalytic classes in the logspace regime.
\end{abstract}
\thispagestyle{empty}
\setcounter{tocdepth}{2}
\tableofcontents
\thispagestyle{empty}
\setcounter{page}{0}
\clearpage

\section{Introduction}
\label{sec:intro}

\subsection{Catalytic computation}
\label{sec:intro:catalytic}
When designing space-bounded algorithms, a typical assumption has been that memory being used for storage is locked away and should not be touched while running unrelated subroutines. A recent paradigm called \emph{catalytic computing} has challenged this assumption, showing that given memory which is full with \emph{arbitrary data}, a machine can use this memory to aid computations while still \emph{perfectly resetting it} at the end of the computation.

Since the formulation of the model by Buhrman, Cleve, Kouck{\'y}, Loff, and Speelman~\cite{BuhrmanCleveKouckyLoffSpeelman14}, many works have studied the properties of computing with the assistance of a large full hard drive, known as \emph{catalytic space}.
The original result of \cite{BuhrmanCleveKouckyLoffSpeelman14} showed that \emph{catalytic logspace} ($\CL$) is seemingly strictly more powerful than $\NL$, $\BPL$, or even stronger classes; this was later strengthened to include important problems such as bipartite matching~\cite{AgarwalaMertz25}, linear matroid intersection~\cite{AgarwalaAlekseevVinciguerra26}, and matrix powering~\cite{AlekseevFilmusMertzSmalVinciguerra25}.
A structural theory of catalytic space has emerged regarding the power of randomness~\cite{DattaGuptaJainSharmaTewari20,CookLiMertzPyne25,Pyne25,KouckyMertzPyneSami25}, non-determinism~\cite{BuhrmanKouckyLoffSpeelman18,GuptaJainSharmaTewari19,CookLiMertzPyne25,KouckyMertzPyneSami25}, and non-uniformity~\cite{Potechin17,RobereZuiddam21,CookMertz22,CookMertz24}.
Finally there have been attempts to situate catalytic space in other settings, such as time-space efficient algorithms~\cite{CookPyne26}, alternative models~\cite{BisoyiDineshSarma22,BuhrmanFolkertsmaMertzSpeelmanStrelchukSubramanianTupker25,CookGhentiyalaMertzPyneSheffield25}, and, most important, the study of non-catalytic space.

With regards to this final point, the techniques of using full memory were also used in the context of ordinary space-bounded computation, in particular in the realms of space-bounded derandomization~\cite{DoronPyneTell24,LiPyneTell24,Pyne25,DoronPyneTellWilliams25,GuptaSharmaTewari25} and recently on the relationship of space and time \cite{CookMertz20,CookMertz21,CookMertz24,Williams25,Shalunov25}. These works used the aforementioned framework to repeatedly compute subroutines without a commensurate blowup in space, thus validating a key motivation for the introduction of catalytic space in the first place.

\subsection{Robust catalytic computation}
The ultimate goal of studying catalytic computation in the context of non-catalytic space is to utilize memory during a computation for both storage and other subcomputations. To that end, we may eject some of the restrictions of fully general catalytic computing by considering the problem and stored data at hand. For example, Bisoyi et al.~\cite{BisoyiDineshRaiSarma25} study the question of resetting the catalytic tape only when it contains some worthwhile information, or alternatively of ending with the tape in a string which shares some property of interest with its initial configuration.

Gupta et al.~\cite{GuptaJainSharmaTewari24} took a different approach, asking whether the model is \emph{robust} to allowing a small number of errors when resetting the catalytic tape, a variant they called \emph{lossy catalytic computing}. This latter question was answered by Folkertsma et al.~\cite{FolkertsmaMertzSpeelmanTupker25} when the number of errors is bounded for any run of the machine, showing that $e$ errors on a catalytic tape of length $c$ is equivalent in power to having $\Theta(e \log c)$ bits of additional free memory.

The more general question of failing to reset the catalytic tape, such as having a bounded number of errors in \emph{expectation} over a generic run of the machine, were posed by Mertz~\cite{Mertz23} and remain open. Understanding the more rich landscape of potential robustness conditions on the catalytic tape will allow us to understand when memory can and cannot be reused outside of the strict catalytic context, where averaging arguments and other structural results are known.

\subsection{Our results}

In this paper we take a systematic approach to understand lossy catalytic computation.
To do so we introduce a host of different lossy catalytic models and show how the resulting complexity classes group around one another and around established classes. Our three variants are
\begin{enumerate}
    \item $\BPepsCdeltSpace{s}{c}{\delta}{\epsilon}$: randomized catalytic computation which, in addition to succeeding with probability $\frac{1}{2} + \epsilon$, has a probability $\delta$ to \emph{arbitrarily fail} to reset the catalytic tape, and which faithfully resets the catalytic tape otherwise
    \item $\AVGrBPLCSPACE{s}{c}{e}$: randomized catalytic computation which makes at most $e$ resetting errors \emph{in expectation over the randomness}
    \item $\AVGtauLCSPACE{s}{c}{e}$: \emph{deterministic} catalytic computation which makes at most $e$ resetting errors \emph{in expectation over the initial contents of the catalytic tape itself}
\end{enumerate}

Here $s$ denotes the amount of work space, $c$ denotes the amount of catalytic space and $e$ the number of bit errors one can make when restoring the catalytic tape.

\noindent
We focus on the \emph{logspace} versions of these classes, which we denote the same way as $\CL$; therein we can also study the \emph{polynomial-time} bounded variants of the catalytic logspace classes, which we denote by appending $\PT$.
For example $\BPepsCdeltL{\delta}{\epsilon}\PT$ is the class  $\BPepsCdeltSpace{O(\log n)}{n^{O(1)}}{\delta}{\epsilon}$ where machines are restricted to run always in polynomial time.

The classes $\BPepsCdeltSpace{s}{c}{\delta}{\epsilon}$ and $\AVGrBPLCSPACE{s}{c}{e}$ capture two seemingly incomparable notions of randomized error. The former makes $\Omega(c)$ errors in expectation
but does so in a highly structured way, whereas the latter has more leeway about how the errors appear but only permits $e$ (typically $O(1)$) on average. Our third class $\AVGtauLCSPACE{s}{c}{e}$ appears to be much different, lacking randomness per se but being allowed to behave in a less careful way when the catalytic tape is ``far from random''.

We show a near full characterization of the above models. Focusing on the logspace setting we obtain the following:
\begin{itemize}
    \item we show three regimes for $\BPepsCdeltL{\delta}{\epsilon}$: 1) $\BPepsCdeltL{\delta}{\epsilon}_{\high}$ when $\delta > 2\epsilon$; 2) $\BPepsCdeltL{\delta}{\epsilon}_{\low}$ when $\delta < 2\epsilon < 1$; and 3) $\BPepsCdeltL{\delta}{\frac{1}{2}}$, i.e. $\delta < 2\epsilon = 1$. While the third is an intermediate class with no clear connections, the first can capture the full power of $\PSPACE$, while the second is equivalent to the $\AVGrBPLCL{e}$ model for $e = O(1)$ (up to some parameter slackness).
    \item with the additional poly-time restriction, the three aforementioned regimes of $\BPepsCdeltL{\delta}{\epsilon}\PT$ correspond exactly to $\BPP$, $\CL$ (and $\AVGrBPLCL{O(1)}\PT$), and poly-time $\CL$ respectively.
    \item $\AVGtauLCL{e}$ captures (seemingly) more power than $\AVGrBPLCL{e}$ when no time bound is imposed; we show that it exactly corresponds to (zero-error) randomized catalytic computing with \emph{two-way access} to its randomness, a class which, if shown to be equal to $\CL$, would imply major non-catalytic derandomizations as well.
    \item with the additional poly-time restriction, $\AVGtauLCL{e}\PT$ and $\AVGrBPLCL{e}\PT$ become incomparable, with the former being in $\PT$ for all $e$ and capturing the intersection of $\AVGtauLCL{e}$ and $\PT$ in the case of $e=O(1)$ (note that $\AVGrBPLCL{O(1)}\PT = \CL$ is not known to be in $\PT$)
    \item under a standard derandomization assumption about space (\autoref{conj:derandom_assumption}), all of the aforementioned models collapse to $\CL$ and even poly-time $\CL$ (besides $\BPepsCdeltL{\delta}{\epsilon}_{\high} = \PSPACE$ and $\BPepsCdeltL{\delta}{\epsilon}\PT_{\high} = \BPP = \PT$).
\end{itemize}

\begin{figure}[!thb]
\begin{tikzpicture}[scale=0.9
]

\node(pspace) at (10,10) {$\PSPACE$};
\node(bpp) at (10,8) {$\BPP$};
\node(p) at (9,5) {$\PT$};
\node(cl) at (11,2.5) {$\CL$};
\node(clmore1) at (11,2.1) {\footnotesize $= \BPCL = \BPCLP$};
\node(clmore2) at (11,1.7) {\footnotesize $= \LBPCL{O(1)}$};
\node(clmore3) at (11,1.3) {\footnotesize $= \LCL{O(1)}$};
\node(clp) at (10,0) {$\CL\PT$};
\node(clpmore) at (10,-0.4) {\footnotesize $= \LCLP{O(1)}$};
\node(tc1) at (9,-2) {$\TCo$};

\node(zpstarcl) at (13,9.5) {\color{blue} $\ZPstartCL$};
\node(zpstarclp) at (11,5) {\color{blue} $\ZPstartCLP$};
\node(zptc1) at (8.5,2) {\color{blue} $\ZPTCo$};

\node(bplclb) at (17,7.5) {\color{OliveGreen} $\BPepsCdeltLB$};
\node(bplclbp) at (14,6) {\color{OliveGreen} $\BPepsCdeltLBP$};
\node(bplclg) at (17,5.5) {\color{OliveGreen} $\BPepsCdeltLG$};
\node(bplclgp) at (14,4) {\color{OliveGreen} $\BPepsCdeltLGP$};
\node(bplclone) at (17,3.5) {\color{OliveGreen} $\BPepsCdeltL{\delta}{\frac{1}{2}}$};
\node(bplclonep) at (14,2) {\color{OliveGreen} $\BPepsCdeltL{\delta}{\frac{1}{2}}\PT$};

\draw[dashed,draw = OliveGreen] (12.5,1.5) rectangle (18.25,8);
\node(label1) at (15.75,1.25) {\color{OliveGreen} \footnotesize Probabilistic destruction classes};

\node(erlbpcl) at (2.5,6) {\color{OliveGreen} $\AVGrBPLCL{O(1)}$};
\node(erlbpclp) at (2.5,3) {\color{OliveGreen} $\AVGrBPLCL{O(1)}\PT$};
\node(etlcl) at (6,6) {\color{OliveGreen} $\AVGtauLCL{O(1)}$};
\node(etlclp) at (6,3) {\color{OliveGreen} $\AVGtauLCL{O(1)}\PT$};

\draw[dashed,draw = OliveGreen] (0.25,2.5) rectangle (7.75,6.5);
\node(label2) at (2.05,2.25) {\color{OliveGreen} \footnotesize Expected errors classes};


\draw (pspace) -- (bpp);
\draw (bpp) -- (p);
\draw (p) -- (clp);
\draw (clmore3) -- (clp);

\draw (bpp) -- (zpstarclp);
\draw (zpstarcl) -- (zpstarclp);
\draw (zpstarclp) -- (zptc1);
\draw (zpstarclp) -- (cl);
\draw (zptc1) -- (tc1);
\draw (clpmore) -- (tc1);

\draw (pspace) -- (bplclb);
\draw (bpp) -- (bplclbp);
\draw (bplclb) -- (bplclbp);
\draw (bplclb) -- (bplclg);
\draw (bplclg) -- (bplclgp);
\draw (bplclbp) -- (bplclgp);
\draw (bplclg) -- (bplclone);
\draw (bplclone) -- (bplclonep);
\draw (bplclgp) -- (bplclonep);
\draw (bplclone) -- (cl);
\draw (bplclonep) -- (clp);

\draw (pspace) -- (erlbpcl);
\draw (pspace) -- (etlcl);
\draw (erlbpcl) -- (erlbpclp);
\draw (etlcl) -- (etlclp);
\draw (p) -- (etlclp);
\draw (erlbpcl) -- (cl);
\draw (etlcl) -- (cl);
\draw (erlbpclp) -- (clp);
\draw (etlclp) -- (clp);

\end{tikzpicture}
\caption{Basic class structure for logspace catalytic classes. Color coding: black = old classes; green + rectangle = robust catalytic classes introduced in this work; blue = reference classes which were implicitly previously defined but which have not been previously studied.  Upwards lines indicate known or definitional inclusions.} 
\label{fig:definitions}
\end{figure}


\begin{figure}[!thb]
\begin{tikzpicture}[scale=0.9
]

\node(pspace1) at (9,10) {$\PSPACE$};
\node(pspace2) at (9,9.5) {\color{OliveGreen} $= \BPepsCdeltLB$};
\node(bpp1) at (9,8) {$\BPP$};
\node(bpp2) at (9,7.5) {\color{OliveGreen} $= \BPepsCdeltLBP$};
\node(p) at (9,5.5) {$\PT$};


\node(zpstarcl) at (13,8.5) {\color{blue} $\ZPstartCL$};
\node(zpstarclp) at (13,7) {\color{blue} $\ZPstartCLP$};
\node(zptc1) at (11,3) {\color{blue} $\ZPTCo$};
\node(tc1) at (11,-1) {$\TCo$};

\node(etlclp) at (9,4) {\color{OliveGreen} $\AVGtauLCL{O(1)}\PT$};

\node(etlcl2) at (13,6.5) {\color{OliveGreen} $= \AVGtauLCL{O(1)}$};

\node(erlbpcl1) at (14,5) {\color{OliveGreen} $\BPepsCdeltLG$};
\node(erlbpcl2) at (14,4.5) {\color{OliveGreen} $\approx \AVGrBPLCL{O(1)}$};

\node(bplclone) at (14,3.5) {\color{OliveGreen} $\BPepsCdeltL{\delta}{\frac{1}{2}}$};

\node(cl1) at (14,2.5) {$\CL$};
\node(cl2) at (14,2) {\color{OliveGreen} $= \BPepsCdeltLGP$};
\node(cl2) at (14,1.5) {\color{OliveGreen} $=\AVGrBPLCL{O(1)}\PT$};

\node(clp1) at (9,0.5) {$\CL\PT$};
\node(clp2) at (9,0) {\color{OliveGreen} $= \BPepsCdeltL{\delta}{\frac{1}{2}}\PT$};

\node (cap) at (11.5,-2.5) {Unconditional Results};

\draw (pspace2) -- (bpp1);
\draw (bpp2) -- (p);
\draw (p) -- (etlclp);
\draw (etlclp) -- (clp1);

\draw (bpp2) -- (zpstarclp);
\draw[draw=OliveGreen] (etlcl2) -- (erlbpcl1);
\draw (erlbpcl2) -- (bplclone);
\draw (bplclone) -- (cl1);
\draw (cl2) -- (clp1);
\draw (clp2) -- (tc1);

\draw (etlcl2) -- (zptc1);
\draw (etlcl2) -- (etlclp);
\draw (zptc1) -- (tc1);
\draw (zpstarcl) -- (zpstarclp);

\end{tikzpicture}
\hspace{2em}
\vline
\hspace{2em}
%
\begin{tikzpicture}[
]

\node(pspace) at (11.25,8) {$\PSPACE$ {\color{OliveGreen} $= \BPepsCdeltLB$}};

\node(bpp1) at (11.25,6.5) {$\BPP = \PT$ {\color{OliveGreen} $= \BPepsCdeltLBP$}};

\node(cl1) at (10,5) {$\CL$};
\node(cl2) at (10,4.5) {\color{OliveGreen} $=\BPepsCdeltLG = \AVGrBPLCL{O(1)}$};
\node(cl3) at (10,4) {\color{OliveGreen} $= \BPepsCdeltLGP =\AVGrBPLCL{O(1)}\PT$};
\node(cl4) at (10,3.5) {\color{OliveGreen} $= \AVGtauLCL{O(1)} = \AVGtauLCL{O(1)}\PT$};
\node(cl5) at (10,3) {{\color{OliveGreen} $ = \BPepsCdeltL{\delta}{\frac{1}{2}} = \BPepsCdeltL{\delta}{\frac{1}{2}}\PT$}};
\node(cl6) at (10,2.5) {$= \CL\PT$ {\color{OliveGreen} $ =$} {\color{blue} $\ZPstartCLP$}};

\node(zptc1) at (10,1.25) {\color{blue} $\ZPTCo$};
\node(tc1) at (10,0) {$\TCo$};

\node(zpstarcl) at (7.75,7.25) {\color{blue} $\ZPstartCL$};

\node (cap) at (10,-2) {Results Assuming \autoref{conj:derandom_assumption}};

\draw (pspace) -- (bpp1);
\draw (bpp1) -- (cl1);
\draw (zpstarcl) -- (cl1);
\draw (cl6) -- (zptc1);
\draw (zptc1) -- (tc1);


\end{tikzpicture}

\caption{Summary of results (all green equalities and containments). In the unconditional setting, all new classes are either 1) equal to a preexisting class; 2) find themselves in two new clusters (note that $\approx$ means the parameters are not tight enough for full equality); or 3) are one of two stray classes. In the conditional setting, almost all new classes collapse to $\CL$ and all classes are comparable with the sole exception of $\ZPstartCL$.} 
\label{fig:results}
\end{figure}

\noindent
See Figures \ref{fig:definitions} and \ref{fig:results} for a summary of our new classes and subsequent results, respectively.


\subsection{Open questions}
The main gap in our results is to show that $\BPepsCdeltL{\delta}{\epsilon}_{\low}$ and $\AVGrBPLCL{O(1)}$ both collapse to $\CL$; this would seem most doable for $\BPepsCdeltL{\delta}{\frac{1}{2}}$ as a warm-up. The only other stray class is $\AVGtauLCL{O(1)}\PT$, whose complexity remains a mystery which would be useful to solve but may require new ideas in catalytic computing.

Our results seem to capture all remaining natural models of resetting error.
One could always take further combinations of the aforementioned relaxations, such as allowing expected error over both the initial tape $\tau$ as well as randomness, but we expect these to collapse to one of our clusters without much difficulty. Alternatively, one could use this framework to study \emph{average-case catalytic computation}, i.e. where the resetting error is in expectation over the \emph{input}; as a setting this seems orthogonal to catalytic robustness, but perhaps similar techniques can be of use.

\subsection{Our techniques}
Because we have many results related to a host of different catalytic classes, there are various techniques at play.
However, a key throughline through many of our proofs is an analysis of the \emph{configuration graph} of our robust catalytic machines.
Thus we will briefly discuss the common factors of these results at a high level.

\paragraph{Configuration graphs.}
Configuration graphs of catalytic machines are one of the key tools in understanding catalytic machines, underlying results such as $\CL = \CBPL$ \cite{CookLiMertzPyne25}, $\CL=\CNL$ \cite{KouckyMertzPyneSami25} and $\CL \cap \PT = \CL\PT$ \cite{CookLiMertzPyne25}.
The collapses to $\CL$ of both $\CBPL$ \cite{CookLiMertzPyne25,KouckyMertzPyneSami25} and $\CNL$ \cite{KouckyMertzPyneSami25}
are facilitated by efficient procedures that explore and analyze the configuration graph on a given input.
The configuration graph of a catalytic machine that uses work space $s$ and catalytic space $c$ on a given input $x$ might have up-to $2^{s+c}$ configurations, one for each $\conf{\pi}{u}$ where the machine has  $\pi$ on its catalytic tape and $u$ on its work tape.\footnote{Here we assume that the head positions and internal state are automatically recorded as part of the work space.}
When $s=O(\log n)$ and $c=n^{O(1)}$, the graph is exponentially large, which presents a challenge to analyze it.

For a machine $M$ and input $x$, we denote the resulting configuration graph by $\gmx$. 
For a given initial setting $\tau$ of the catalytic tape, we let $\gmxtau$ be the induced subgraph of $\gmx$ on configurations reachable from the starting configuration $\start$.
The computation of any catalytic machine is always required to reach a final configuration with $\tau$ back on the catalytic tape.
Without loss of generality, we can assume that $M$ has two final configurations, $\acc$ and $\rej$, which are easy to recognize; these configurations are sinks of $\gmxtau$.

For deterministic machines $M$, it follows that $\gmxtau$ and $\gmxtau[\tau']$ are disjoint for distinct $\tau$ and $\tau'$.
Thus, on average (over $\tau$), $\gmxtau$ is of size at most $2^s = \poly(n)$.
This fact is heavily used in many previous results on catalytic computation.
Most important for our algorithms is that any $\tau$ that has a much larger size is far from average, i.e., it cannot be very random.

Using this logic, Cook et al.~\cite{CookLiMertzPyne25} coined a technique called \emph{compress-or-random}.
Their approach is to analyze $\gmxtau$ and compress $\tau$
if the algorithm detects that $\gmxtau$ has size $\gg 2^s$, and then restart the analysis with a new section of the catalytic tape.
This eventually leads to either finding a $\tau$ that has a small configuration graph $\gmxtau$, or freeing up so much space on the catalytic tape 
that we can run an ordinary space-unbounded algorithm for our original problem.
Crucially, we are required to design efficient compression and decompression methods for $\tau$ in order to carry out this argument.

In our setting, the configuration graphs are more complicated.
Since the machine does not always have to restore the content of the catalytic tape, two different configuration graphs $\gmxtau$ and $\gmxtau[\tau']$ might intersect for $\tau \neq \tau'$.
This presents a challenge for analyzing $\gmx$ as the above averaging argument does not work.

\paragraph{Compression for lossy configuration graphs.}
Our key technical lemmas \autoref{lemmvianisansprggeneral} and \autoref{lemmvianisansprgforcl} address this challenge.
For a randomized machine $M$, we call a configuration of $\gmxtau$ a $\taubeta$-node if, from that configuration, we reset $\tau$ correctly with a probability of at least $\beta$.
Clearly, the same configuration can be a $\taubeta$-node for at most $1/\beta$ distinct values of $\tau$;
if the machine $M$ restores the catalytic tape with a probability of at least $1-\delta$, then the probability of reaching a node that is {\em not} a $\taubeta$-node is at most $\delta/(1-\beta)$.
Thus, for $\beta<1/2$, if the randomized machine resets correctly the catalytic tape with high probability,
then it must be visiting only $\taubeta$-nodes with high probability.
Therefore, to understand the behavior of $M$ on $x$, we only need to analyze the subgraph of $\gmxtau$ consisting of $\taubeta$-nodes, and so our goal while exploring $\gmxtau$ will be to identify all $\taubeta$-nodes, of which we expect to have at most $2^s/\beta$, and analyze the subgraph consisting of them.

As observed in \cite{KouckyMertzPyneSami25}, it is beneficial to explore $\gmxtau$ \emph{backwards}, starting from the final configurations $\acc$ and $\rej$; this allows us to explore the graph while always ensuring that we can return to the correct $\acc$ or $\rej$.
Since we do not have space to store the explored portion of the graph, and we cannot use randomness for the exploration, as that could lead us to a place where we cannot reset, 
we will explore the graph using a \emph{pseudo-random generator}.
We will use a modification of Nisan's pseudo-random generator \cite{Nisan92} based on hash functions, where we will use as many hash functions in a sequence as we are generating bits.
The sequence of hash functions will be taken from a portion of the catalytic tape, and we will use the compression technique of $\cite{Pyne25}$ to select good hash functions.

We will use the bits from the generator to guide us backwards from the final nodes $\acc$ and $\rej$, adding more hash functions to extend our horizon.
Keeping the same hash functions for the later portion of the walk is useful because it fixes the part of the graph we have already explored.
The pseudo-random generator also allows us to estimate well the probability of reaching either $\acc$ or $\rej$, hence determining whether a node is a $\taubeta$-node or not.

Our ultimate goal is to find the node $\start$, classify it as a $\taubeta$-node, and estimate the probability of reaching $\acc$.
As we extend the horizon, we will always check how many nodes we see in total using the pseudo-random walks.
If we encounter too many nodes, we will be able to use the large graph compression of \cite{CookLiMertzPyne25} to compress our current catalytic tape and restart the exploration.
Similarly, if we see sufficiently many $\taubeta$-nodes, we will also be able to compress the catalytic tape.

The only issue that will arise is if the start node $\start$ is very far from the final nodes on average.
In that case, we will run out of bits produced by our pseudo-random generator, and yet we will not be able to compress to the best of our knowledge.
In \autoref{lemmvianisansprgforcl}, we avoid this case altogether as the computation time is bounded by some fixed polynomial,
but in the general case, we will give up, as this will be a rare occurrence over the choice of $\tau$; this will determine the equivalences of our various error-prone catalytic classes.

For our conditional results, we can overcome the issue of such $\tau$'s by using strong pseudo-random generators and XORing $\tau$ with their output to obtain a more ``typical'' starting tape similar to the technique of \cite{BuhrmanKouckyLoffSpeelman18}.
Under a standard derandomization assumption, the Nisan-Wigderson pseudo-random generator is strong enough to guarantee that most of its output XOR-ed with $\tau$
avoids the aforementioned bad case.
\section{Preliminaries}
\label{sec:prelims}

Let $[n] = \{1,\ldots,n\}$. For a graph $G$, we denote its vertex set by $V(G)$. Unless stated otherwise, by the size of a graph $G$ we mean $|V(G)|$. We denote the length of a string $y$ by $|y|$. The notation $x \circ y$ stands for the concatenation of the strings $x$ and $y$. For a string $y \in \{0,1\}^n$, we write $y_i$ for the $i$-th bit of $y$ (where $i \in [n]$),  $y_{\le i}$ for the prefix of $y$ consisting of its first $i$ bits, and $y_{\ge i}$ for the suffix of $y$ starting at $y_i$.

\subsection{Catalytic computation}

For the remainder of the paper, we will define functions \( s \coloneqq s(n) \), \( c \coloneqq c(n) \), and sometimes \(e \coloneqq e(n)\), such that all are non-decreasing logspace-constructible
functions satisfying 1) \( s \geq \log{n} \); 2) \(s \leq c \leq 2^s \); and 3) \(e \leq c\). Note that putting all the above bounds together implies that all functions can be computed in space $O(s)$.

Our paper concerns the \emph{catalytic computing} model, as defined by Buhrman et al.~\cite{BuhrmanCleveKouckyLoffSpeelman14}:

\begin{definition}[Catalytic computation]
    A \emph{catalytic} Turing machine is a machine $M$ with four tapes: 1) a read-only input tape; 2) a write-only output tape; 3) a read-write work tape; and 4) a read-write \emph{catalytic} tape. $M$ has the property that for any initializations $x$ to the input tape and $\tau$ to the catalytic tape, $M$ halts with $\tau$ in the catalytic tape.
\end{definition}

\noindent
Such catalytic machines naturally give rise to complexity classes based on the available resource parameters.

\begin{definition}
    We define $\CSPACE{s}{c}$ to be the class of languages $L$ that can be decided by a catalytic machine with $s$ bits of work space and $c$ bits of catalytic space on inputs of length $n$, i.e., on input $x \in \{0,1\}^n$ and initial catalytic tape $\tau \in \{0,1\}^c$, $M$ halts with $L(x)$ in the output tape and $\tau$ in the catalytic tape.
\end{definition}

\noindent
The most important instantiation of catalytic space class is \emph{catalytic logspace}. In this context we can also talk about an additional \emph{polynomial time} restriction.

\begin{definition}[Catalytic logspace (CL) and polytime CL]
    We define the class \emph{catalytic logspace} as
    $$\CL \coloneqq \bigcup_{k \in \mathbb{N}} \CSPACE{k \log n}{n^k}$$
    We will also use the suffix $\P$ to impose an additional constraint to run in polynomial time, i.e.
    $$\CL\PT \coloneqq \bigcup_{k \in \mathbb{N}} \CTIMESPACE{n^k}{k \log n}{n^k}$$
    where $\CTIMESPACE{t}{s}{c}$ is defined as $\CSPACE{s}{c}$ machines which always halt in time $t$.
\end{definition}

\noindent
An important relaxation to the base catalytic model is allowing additional access to \emph{randomness}.

\begin{definition}[Randomized catalytic computation]
    We define $\BPCSPACE{s}{c}$ to be the class of languages $L$ for which there exists a randomized catalytic machine $M$ with $s$ bits of work space, $c$ bits of catalytic space, and access to $2^c$ random bits in a read-only read-once fashion, such that $M(x) = L(x)$ with probability $\geq \frac{2}{3}$ over the choice of the randomness of $M$.
\end{definition}



\noindent
The focus of this paper will be on \emph{lossy} catalytic computing, another relaxation defined by Gupta et al.~\cite{GuptaJainSharmaTewari24} where we may make errors on the catalytic tape.

\begin{definition}[Lossy catalytic computation]
    A \emph{lossy catalytic Turing machine} with $e$ errors is a Turing machine $M$ with the same four tapes as a catalytic Turing machine, but where on any input $x$ and initial catalytic tape $\tau$, the machine must halt with some $\tau'$ in the catalytic tape such that $\tau$ and $\tau'$ differ in at most $e$ locations.

    Let $n \in \mathbb{N}$ and let $s \coloneqq s(n)$, $c \coloneqq c(n)$, $e \coloneqq e(n)$. We say a language is in $\LCSPACE{s}{c}{e}$ if it can be decided by a lossy catalytic Turing machine with free space $s$, catalytic space $c$, and making at most $e$ errors on inputs of length $n$.
\end{definition}

\noindent
We also define $\BPCL$, $\BPCL\PT$, $\LCL{e}$, and $\LCL{e}\PT$ analogously to $\CL$, as well as $\BPLCSPACE{s}{c}{e}$ and $\LBPCL{e}$ as the intersection of both restrictions. In this context, we have a number of collapses to $\CL$ and $\CL\PT$ by connecting results of Cook et al.~\cite{CookLiMertzPyne25} and Kouck\'y et al.~\cite{KouckyMertzPyneSami25} for $\CBPL$ with those of Folkertsma et al.~\cite{FolkertsmaMertzSpeelmanTupker25} for $\LCL{e}$ (reflected in \autoref{fig:definitions}).



\begin{lemma}[\cite{CookLiMertzPyne25,KouckyMertzPyneSami25,FolkertsmaMertzSpeelmanTupker25}]\label{thm:old_collapses}
    $$\CL = \BPCL = \BPCL\PT = \LCL{O(1)} = \LBPCL{O(1)} = \LBPCL{O(1)}\PT$$
    $$\CL\PT = \LCL{O(1)}\PT$$
\end{lemma}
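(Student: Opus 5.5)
The plan is to assemble the lemma from the three cited results, chaining inclusions so that each class is sandwiched between $\CL$ and itself. The trivial directions come from the definitions: $\CL \subseteq \BPCL$, $\CL \subseteq \LCL{O(1)}$, $\LCL{O(1)} \subseteq \LBPCL{O(1)}$, $\BPCL\PT \subseteq \BPCL \subseteq \LBPCL{O(1)}$, and $\BPCL\PT \subseteq \LBPCL{O(1)}\PT \subseteq \LBPCL{O(1)}$. Likewise $\CL\PT \subseteq \LCL{O(1)}\PT$. It therefore suffices to prove three nontrivial inclusions: (i) $\LBPCL{O(1)} \subseteq \CL$; (ii) $\CL \subseteq \BPCL\PT$, which makes the polynomial-time randomized class equal to $\CL$; and (iii) $\LCL{O(1)}\PT \subseteq \CL\PT$.

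For (ii), I will cite Cook et al.~\cite{CookLiMertzPyne25}. They show that every $\CL$ language has a randomized catalytic algorithm running in polynomial time, obtained by randomizing the catalytic tape via a random shift so that the expected configuration-graph size becomes polynomial. Equivalently, $\CL \subseteq \BPCL\PT$.

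For (iii) and the deterministic half of (i), I will invoke Folkertsma et al.~\cite{FolkertsmaMertzSpeelmanTupker25}: $e$ errors on a catalytic tape of length $c$ are equivalent to $\Theta(e\log c)$ extra free bits. With $e=O(1)$ and $c=\poly(n)$ this adds only $O(\log n)$ work bits, so $\LCL{O(1)} = \CL$. I need to check that their simulation preserves running time up to a polynomial factor. It should: the simulation records the $O(1)$ error positions and values in $O(\log n)$ bits and corrects them, which adds only polynomial overhead. This gives $\LCL{O(1)}\PT = \CL\PT$.

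The main obstacle is the randomized lossy case $\LBPCL{O(1)} \subseteq \CL$. Here I would first apply the Folkertsma et al.\ error-correction simulation pathwise. Because the error bound $e$ holds on every run, i.e.\ for every random string, the machine can reserve $O(e\log c)$ work bits to record the lossy positions and repair them. This yields a $\BPCSPACE{O(\log n)}{\poly(n)}$ machine, and then $\BPCL = \CL$ by Cook et al.\ and Kouck\'y et al.~\cite{CookLiMertzPyne25,KouckyMertzPyneSami25}. The delicate point is whether the Folkertsma et al.\ transformation is oblivious to randomness. It must work uniformly over all computation paths, not only for deterministic transition structure. If it instead relies on analysing a deterministic configuration graph, I would fall back to the dual direction: build the randomized-to-deterministic simulation of \cite{KouckyMertzPyneSami25} directly for lossy machines, tolerating the $O(1)$ errors by guessing their $O(\log n)$-bit description. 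Finally, $\LBPCL{O(1)}\PT$ lies between $\BPCL\PT = \CL$ and $\LBPCL{O(1)} = \CL$, which closes the chain.
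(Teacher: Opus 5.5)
Your proposal is correct and is essentially the paper's argument. The paper gives no separate proof: it obtains the lemma by combining three ingredients. These are the definitional inclusions, $\CL = \BPCL = \BPCL\PT$ (via $\CL = \ZPCL\PT$) from \cite{CookLiMertzPyne25,KouckyMertzPyneSami25}, and the Folkertsma et al.\ lossy-to-errorless simulation, which it states also holds for the randomized and polynomial-time variants (\autoref{thm:lcl_vs_cl}). The ``delicate point'' you raise is resolved by the form of that simulation given in \autoref{lem:ecc}: it does not record error positions (which the simulator cannot track in small space), but stores $O(e\log c)$ BCH redundancy bits in work space, runs the lossy machine, and decodes, so unlike the guess-the-error-vector backward search reused in \autoref{obavgclasszptau} it is oblivious to the machine's behavior and works pathwise for randomized machines with only polynomial overhead.
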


In the case of lossy catalytic computation, we will use a slightly more flexible fact due to Folkertsma et al.~\cite{FolkertsmaMertzSpeelmanTupker25}, which in fact implies the corresponding results in \autoref{thm:old_collapses}:

\begin{lemma}\label{thm:lcl_vs_cl}
    If $e \leq c^{1-\Omega(1)}$, then
    $$\LCSPACE{\Theta(s)}{\Theta(c)}{e} = \CSPACE{\Theta(s +e\log{c})}{\Theta(c)}$$
    and analogous results hold for $\BPCSPACE{s}{c}$, $\CSPACE{s}{c}\PT$, and $\BPCSPACE{s}{c}\PT$.
\end{lemma}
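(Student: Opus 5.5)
We prove the two inclusions separately; both follow the simulation of Folkertsma et al.~\cite{FolkertsmaMertzSpeelmanTupker25}, and we only need to check that the resource bounds hold as stated when $e \le c^{1-\Omega(1)}$.

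\emph{Direction $\supseteq$: simulate free space with lossy catalytic space.} Given a $\CSPACE{s+e\log c}{c}$ machine $N$, we split its work space into $s$ genuinely free bits and $e\log c$ extra bits, and we realize the extra bits using a fresh block of $O(c)$ catalytic bits. The natural encoding is: a value $w\in\{0,1\}^{e\log c}$ is written as a set of at most $e$ positions (each named by $\log c$ bits) in a block $B$ of length $c$. The machine initially records nothing, and represents $w$ implicitly by the positions of $B$ that are currently flipped relative to $\tau$. The difficulty is that we cannot remember which positions were flipped, because $\tau$ is unknown. The fix from \cite{FolkertsmaMertzSpeelmanTupker25} is to let the stored value be a function of $B$ that is robust to ignorance of $\tau$, e.g. a syndrome of a linear code. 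Concretely, fix an $[c, c-e\log c, 2e+1]$-type linear code with parity-check matrix $H$ (BCH), and define the memory contents as $H\cdot B$ with $e\log c$ bits. Writing a new value $w'$ means flipping at most $e$ positions of $B$ so that the syndrome changes by $w'\oplus w$, which is achievable with at most $e$ flips and computable in small space by BCH decoding. At the end the block differs from $\tau$ in at most $e$ places. Here $e\le c^{1-\Omega(1)}$ is needed so that the code exists with block length $\Theta(c)$ and its encoding and decoding run in space $O(s+e\log c)$ within the original bounds. Since $N$'s own catalytic tape is restored exactly, the total number of errors is at most $e$.

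\emph{Direction $\subseteq$: remove the errors by using extra free space.} Given a lossy machine $M$ with $e$ errors, we run $M$ on $\tau$ and need to repair the final tape $\tau'$ using only $O(e\log c)$ extra free bits. Standard approach: before running $M$, compute the syndrome $H\tau$ of a code of distance $2e+1$ (about $e\log c$ bits, since the tape is streamed) and store it in free space. After $M$ halts with $\tau'$, compute $H\tau'$; the difference $H(\tau\oplus\tau')$ is the syndrome of an error pattern of weight at most $e$. Decode it to find the at most $e$ error positions and flip them back. Syndrome computation is a streaming XOR over positions, decoding is done in space $O(e\log c)$, and the output is unaffected. For randomized machines the same argument applies pathwise. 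For time-bounded machines we must check that BCH syndrome decoding runs in polynomial time, which it does.

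\textbf{Main obstacle.} The step I expect to be hardest is making space-efficient decoding work in $O(s+e\log c)$ space for general $e$. For $e=O(1)$ this is trivial: we brute-force over all candidate error sets. For larger $e$ up to $c^{1-\Omega(1)}$, we would invoke logspace-computable BCH decoding, whose intermediate algebra over $\mathrm{GF}(2^{\log c})$ handles polynomials of degree $e$ and therefore fits in $O(e\log c)$ space. Alternatively, we could recover the error positions one at a time by testing each candidate position against the syndrome. This step is also where the $\Theta(\cdot)$ slack in the statement is consumed.
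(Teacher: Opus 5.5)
Your $\subseteq$ direction is correct. Storing $H\tau$, running $M$ (pathwise for randomized machines), and decoding $H\tau\oplus H\tau'$ is the BCH argument of \cite{FolkertsmaMertzSpeelmanTupker25}. The paper only cites that work for this lemma, and reuses the same idea via \autoref{lem:ecc} in \autoref{averageforallinepsdeltaclass}.

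The gap is in the $\supseteq$ direction. Your write rule flips a set $E$ with $HE=w\oplus w'$, chosen relative to the \emph{current} block. After $k$ writes, $B\oplus\tau_B=E_1\oplus\cdots\oplus E_k$, and its weight can grow linearly in $k$. Nothing pulls $B$ back toward $\tau_B$, so ``at the end the block differs from $\tau$ in at most $e$ places'' has no justification. This already fails for $e=1$ with the Hamming code: flipping the single position $w\oplus w'$ at every write drifts arbitrarily far from $\tau_B$. Separately, for $e\ge 2$ the claim that every syndrome change is achievable with at most $e$ flips found by BCH decoding is false. BCH codes are not perfect: there are roughly $c^{e}$ syndromes but only roughly $c^{e}/e!$ patterns of weight at most $e$. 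So many syndromes lie outside the decoding radius, and many $\tau$ are at distance more than $e$ from every codeword.

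The missing idea is an anchor that is determined by $B$ alone and invariant under writes, which requires a perfect code. Take the Hamming code of length $2^k-1$, where column $j$ of $H$ is the binary expansion of $j$. Write $\chi_j$ for the indicator vector of position $j$, with $\chi_0=0$. Every word $B$ then has a unique codeword $C(B)=B\oplus\chi_{HB}$ at distance at most $1$.
\begin{itemize}
\item Store $w=HB$, and implement writing $w'$ as $B\leftarrow C(B)\oplus\chi_{w'}$. That is, flip positions $w$ and $w'$, deliberately not the single position $w\oplus w'$.
\item Then $C(B)=C(\tau_B)$ at all times, and $d(B,\tau_B)\le 2$ throughout.
\item Setting $B\leftarrow C(B)$ at the end leaves at most one error.
\end{itemize}
For general $e$, split $O(c)$ catalytic bits into $e$ independent Hamming blocks of length about $c/e$. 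This simulates $e\log(c/e)$ bits of memory with at most $e$ final errors, using $O(\log c)\le O(s)$ free space to stream the syndromes.

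This is where $e\le c^{1-\Omega(1)}$ is really used: it gives $\log(c/e)=\Omega(\log c)$, so the simulated memory is $\Theta(e\log c)$ bits within $O(c)$ catalytic space, up to the constant slack the $\Theta$'s allow. It is not needed to make a single BCH code exist or be decodable. The scheme is oblivious to the simulated machine, and each access is one streaming pass over a block. Hence it also yields the randomized and polynomial-time analogues.
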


\noindent
Our paper focuses on the case of two-sided randomness, i.e., $\BPCL$, but we also define catalytic computing with \emph{zero-error randomness}, including a non-standard extension to the \emph{two-way randomness access} setting. The latter definition has implicitly existed in the literature but has seen no study thus far; however, while these classes are not on their face relevant to robust catalytic computing, they find their way into our characterizations.
\begin{definition}\label{def:zpstarcspace}
    We define $\ZPCSPACE{s}{c}$ to be the same as $\BPCSPACE{s}{c}$ but where the machine outputs $L(x)$ with probability $\ge \frac{1}{2}$ and otherwise outputs a special symbol $\perp$; in particular, it never outputs $\overline{L(x)}$.
    
    We define $\ZPstarCSPACE{s}{c}$ to be the same as $\ZPCSPACE{s}{c}$ but with \emph{two-way} access to its randomness. We restrict the amount of randomness to $2^c$ bits and require that our machine always halts.
\end{definition}





\subsection{Configuration graphs}

As in most models of space-bounded computation, we view catalytic space through the syntactic characterization of the \emph{configuration graph} of a machine $M$.
\begin{definition}
    Let $M$ be any catalytic machine with work space $s$ and catalytic space $c$.
    We denote by $\conf{\pi}{u}$ the configuration of $M$ where $\pi \in \{0,1\}^c$ is contained on the catalytic tape and $u \in \{0,1\}^s$ is on its work tape.     
\end{definition}

Here, we assumed without loss of generality that all auxiliary information
about the current configuration of $M$, i.e., the state of $M$'s internal DFA,
and the current positions of the tape heads for the input, work, and catalytic tapes
are all automatically recorded in a designated part of the work tape.\footnote{Altogether, this additional information
technically requires additional space $\log n + \log s + \log c + O(1)\le 3s$. 
 We can handle this by replacing $s$ with $4s$ throughout the proofs,
which we omit for clarity.} 

Consider the execution of $M$ on some fixed input $x$ and initial catalytic tape
contents $\tau$. Each configuration of $M$ can be uniquely represented by $\conf{\pi}{u}$
for some $\pi \in \{0,1\}^c$ and $u \in \{0,1\}^s$. Without loss of generality, we define
$\start := \conf{\tau}{0^s}$ to be the start configuration, $\acc := \conf{\tau}{1 \circ 1 \circ 1 \circ 0^{s-3}}$  
to be the unique accepting halt configuration, and $\rej := \conf{\tau}{1 \circ 1 \circ 0 \circ 0^{s-3}}$ to be the unique rejecting halt configuration. In the case that $M$ is a zero-error machine (see \autoref{def:zpstarcspace}), we also define the unique \emphdef{don't-know} halt configuration as $\dontknow \coloneqq \conf{\tau}{1 \circ 0 \circ 0^{s-2}}$. Thus, starting and halting configurations can be easily recognized. It will often be useful to discuss the configuration graph defined by such executions.

\begin{definition}[Configuration graphs]\label{definition_config_graphs}
    The configuration graph \( \gmx \) is a directed acyclic graph where each node corresponds to a configuration of the machine \( M \) on input \( x \). There is a directed edge from \( \conf{\pi}{u} \) to \( \conf{\pi'}{u'} \) if and only if \( \conf{\pi'}{u'} \) can be reached from \( \conf{\pi}{u} \) in one execution step of \( M \). If \( M \) is a probabilistic machine, the out-degree of every vertex in \( \gmx \) is at most 2; if \( M \) is deterministic, the out-degree is at most 1. A halting configuration has no outgoing edges in \( \gmx \). Furthermore, there exists a fixed constant $d_M$, which depends solely on $M$, such that each vertex in $\gmx$ has at most $d_M$ incident edges.
\end{definition}

An edge labeled \( (v, v') \) is marked with \( b \in \{0, 1\} \) if it corresponds to a randomized \( b \)-choice of the machine. For deterministic transitions, we label the edge with both 0 and 1.

For every catalytic tape \( \tau \), let \( \gmxtau \) represent the subgraph of \( \gmx \) that is induced by the configurations reachable from the starting configuration \( \start \). It is clear that \( \gmxtau \) contains one source node, which is \( \start \), and up to two sink nodes, namely \( \acc \) and \( \rej \). If \( M \) is a zero-error machine, there can be a third sink node, \( \dontknow \). Observe that \( \gmxtau \) forms a path when \( M \) is a deterministic machine.




\subsubsection{Traversal of configuration graphs}\label{sc:grtrv}
Let \( M \) be a deterministic catalytic machine, and \( v \in V(\gmxtau) \), for some $\tau$. Then, the connected (in the undirected sense) component  of \( \gmx \) that contains \( v \) is a tree
which we will denote by $\gmx(v)$. 
This tree has a unique sink: the halting configuration in which \( M \) halts when run from $v$ (or from $\start$). 
This tree can be traversed, and vertices output in a pre-order fashion, where we think of the traversal as cyclic (upon completing a traversal, we start a new one). 
Then, the following procedures give us a way to traverse $\gmx(v)$:

\begin{lemma}\label{reversiblewalk}
Let \( M \) be a deterministic catalytic machine that utilizes \( s \) bits of work space and \( c \) bits of catalytic space. We have the following catalytic subroutines, which use extra \( O(s) \) bits of work space
\begin{itemize}
\item $\textbf{\nextstep}$. Given $\conf{\pi}{u}$ on catalytic tape and work tape, this subroutine replaces $\conf{\pi}{u}$ with $\conf{\pi'}{u'}$, which is the next vertex in  (cyclic) pre-order traversal of $\gmx(\conf{\pi}{u})$.
\item $\textbf{\stepback}$. Given $\conf{\pi}{u}$ on catalytic tape and work tape, this subroutine replaces $\conf{\pi}{u}$ with $\conf{\pi'}{u'}$, which is the previous vertex in pre-order traversal of $\gmx(\conf{\pi}{u})$.
\item $\textbf{\countsize}$. Given $\conf{\pi}{u}$ on catalytic tape and work tape, and an integer $S \le 2^{O(s)}$, this subroutine checks whether $\gmx(\conf{\pi}{u})$ is of size at most $S$. It returns the content of $\conf{\pi}{u}$ unchanged.
\end{itemize}
The procedure $\textbf{\countsize}$ runs in time $2^{O(s)}$, and the procedures $\textbf{\nextstep}$ and $\textbf{\stepback}$ run in time $2^{O(s)} \cdot D$ where $D$ is the depth of $\gmx(\conf{\pi}{u})$.
\end{lemma}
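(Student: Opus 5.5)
The plan is to realise all three procedures as the classical Euler-tour traversal of a tree (Cook--McKenzie / Buhrman et al.\ style), acting directly on the configuration written on the catalytic and work tapes. Since $M$ is deterministic, every vertex of $\gmx$ has out-degree at most $1$, so the undirected component $\gmx(v)$ is an in-tree rooted at its unique sink. We treat that sink as the root and define the children of a vertex $w$ to be its in-neighbours. Under any fixed ordering (say lexicographic on the local change), these children are $w_1,\dots,w_k$ with $k\le d_M$.

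The basic operations are local. Stepping from $w$ to its parent is one forward step of $M$. Listing the predecessors of $w$ amounts to enumerating the $O(1)$ candidate transitions that could have produced $w$. A transition changes only the bits under the heads and the head positions and state, all of which are recorded in the work part. For each candidate we rewrite those $O(1)$ bits in place, simulate one forward step, and check that we return to $w$. This needs only $O(s)$ extra work space and modifies the catalytic tape in place, which is exactly what we want, since the configuration itself is stored on the tapes.

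\nextstep\ in pre-order works as follows. If $w$ has a child, move to its first child. Otherwise, repeatedly compare with the parent. Let $w' = $ parent. Compute the index of the current node among the children of $w'$, using the predecessor enumeration with $O(\log d_M)$ bits. If a next sibling exists, move to it and stop. Otherwise set the current node to $w'$ and continue. If we reach the root without finding a sibling, the traversal is complete and we restart at the root, which gives cyclicity. \stepback\ is the mirror image. If $w$ is the root, go to the last vertex of the traversal by repeatedly descending to the last child. Otherwise move to the previous sibling and then descend repeatedly to the last child; if there is no previous sibling, move to the parent.

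Each operation only ever stores $O(1)$ configuration-local bits beyond the configuration itself, so the extra space is $O(s)$. The catalytic tape holds exactly the current configuration's $\pi$ at the end of each call, which is all the lemma requires.

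For \countsize, save $\conf{\pi}{u}$'s work part and a fingerprint; rather than storing $\pi$, we rely on reversibility. Run \nextstep\ up to $S$ times with a counter of $O(s)$ bits, since $S\le 2^{O(s)}$. If we return to $\conf{\pi}{u}$ within $S$ steps, the tree has size at most $S$. Equality with the start is tested by comparing the work part, then checking that the step count matches. To recognize the start without storing $\pi$, we instead count $t$ steps and then undo them with $t$ calls to \stepback, which restores $\conf{\pi}{u}$ exactly. This is the standard trick: run forward $S+1$ steps, record whether the root was passed twice, or equivalently record the position of the root in the cyclic order relative to the start, then rewind. Concretely, we walk forward counting steps until we return to the root for the second time or exceed $S$ steps. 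If the root recurs within $S+1$ steps, the size is the gap between occurrences. Then we rewind by the same number of \stepback\ calls.

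The main obstacle is the time analysis, namely why a single \nextstep\ costs $2^{O(s)}\cdot D$ while a full count costs only $2^{O(s)}$. Each call may climb up to depth $D$ with constant work per level times the $2^{O(s)}$ cost of simulating head moves, which gives the per-step bound. For \countsize, the total work of a full Euler tour is proportional to the number of edges, since each edge is traversed twice. Amortising over the at most $S=2^{O(s)}$ steps therefore yields total time $2^{O(s)}$. I expect this amortisation, together with rewinding without storing $\pi$, to need the most care.
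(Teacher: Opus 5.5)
Your $\nextstep$ and $\stepback$ are fine. They are the same walk the paper uses: the paper iterates the local, invertible edge-labelled Euler-tour step of \cite{KouckyMertzPyneSami25} until a vertex is entered through its parent edge.

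\textbf{The gap: the time bound for $\countsize$.} This is exactly where you expected trouble, and the $2^{O(s)}$ bound does not follow from your argument. You budget the walk in calls to $\nextstep$ and then amortise against a full Euler tour. That amortisation is only valid when the tree has at most $S$ vertices. In general, $k$ consecutive pre-order steps from a vertex $v$ can cost on the order of $k+\mathrm{depth}(v)$ elementary moves. One $\nextstep$ from a leaf may climb all the way up to an ancestor with a later sibling, and $\stepback$ symmetrically descends. When $\gmx(\conf{\pi}{u})$ has more than $S$ vertices (precisely the case $\countsize$ must detect), $\mathrm{depth}(v)$ is not bounded by $2^{O(s)}$. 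For example, the distance from $\start$ to the sink is the running time of $M$ on $\tau$, which for some $\tau$ can be $2^{\Omega(c)}$. So a single call, and the corresponding rewinding calls, may take that long. The paper needs the unconditional bound, e.g.\ in \autoref{obsvzptauclsamewithptime}, where $\countsize$ is run from $\start$ for every $\tau$ to obtain a polynomial-time machine.

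\textbf{The repair.} This is the paper's argument: count single Euler-tour edge moves instead of pre-order steps.
\begin{enumerate}
\item Run the tour for a fixed budget of $O(S)$ elementary moves (the paper uses $6S$).
\item Count the vertices entered during one full round of the tour. A round is delimited by two passes through the easily recognised sink in a fixed edge-label state.
\item If no full round fits in the budget, declare the size larger than $S$.
\item Undo exactly those elementary moves in reverse order.
\end{enumerate}
Each elementary move is local and invertible with $O(s)$ bits of state. So you can stop in the middle of a climb and still rewind, and the total time is $O(S)\cdot 2^{O(s)}=2^{O(s)}$ regardless of the tree.

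\textbf{A smaller slip.} A budget of $S+1$ is too small even in pre-order steps. For a tree $T$ with $|T|\le S$, from an arbitrary start you may need up to $|T|-1$ steps to reach the sink and then $|T|$ more to return to it. So roughly $2S$ pre-order steps (about $4S$ Euler moves) are needed before you may conclude that the size exceeds $S$.
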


We will usually use  $\textbf{\nextstep}$ and $\textbf{\stepback}$  only when   $\gmx(\conf{\pi}{u})$ is of depth $2^{O(s)}$. In that case, both procedures take time $2^{O(s)}$.

Our subroutines can be easily obtained from those in \cite{KouckyMertzPyneSami25}. 
Similar procedures in \cite{KouckyMertzPyneSami25} take an additional parameter $h \in \{1,\dots,d_M\}$ which is a label of an incident edge.
For both of our procedures $\textbf{\nextstep}$ and $\textbf{\stepback}$, we can run their procedures repeatedly, starting with an edge label 1, and wait until we get into another vertex with the edge label 1.
When 1 corresponds to the incoming edge in the directed version of  $\gmx(\conf{\pi}{u})$, the process will list configurations in a pre-order traversal, 
and the number of steps we have to take to reach the next vertex in the pre-order fashion is at most the depth of the graph.

The $\textbf{\countsize}$ procedure runs the Eulerian tour traversal \cite{KouckyMertzPyneSami25} of $\gmx(\conf{\pi}{u})$ starting from $\conf{\pi}{u}$ for $6S$ steps. 
As $\gmx(\conf{\pi}{u})$ always contains an easy to recognize unique final configuration of $M$ if the tour does not visit it at least twice with an edge label 1 we know $\conf{\pi}{u}$ is larger that $S$.
Otherwise, we count the number of edges labeled 1 that we traverse between successive visits to the final configuration.
This number corresponds to the size of  $\gmx(\conf{\pi}{u})$, so we compare it with $S$ and decide on the answer. 
Using the reverse traversal of the Eulerian tour for $6S$ steps, we return to the initial $\conf{\pi}{u}$.

Since the tour given by $\nextstep$/$\stepback$ follows a depth-first search (DFS) traversal, we will commonly refer to it as such. 

We will need all of the above procedures to apply to graphs that are restrictions of configuration graphs of {\em randomized} machines.
In those cases, a sequence $y$ of bits will be specified, which determines the random choices of a machine $M$.
We will make a leveled version of the graph (level $i$ leading into level $i-1$). In level $i$, we will use the $i$-th bit of $y$ from the end to resolve the random choice of $M$ at that time step.
The leveled graph will be restricted only to have levels $|y|,\dots, 0$, and a node in this graph will be specified by a configuration of $\gmx$ and a level index.
If the bits of $y$ can be accessed via some efficiently computable procedure, then  
$\textbf{\nextstep}$, $\textbf{\stepback}$, and $\textbf{\countsize}$ on such a graph can be guaranteed to behave as described above,
where the space and time used need to account also for the space and time needed to calculate individual bits of $y$ (keeping track of the current level is easy based on the direction of the traversed edge in the directed graph $\gmx$).
Typically, the space needed to calculate individual bits of $y$ will be $O(s)$ and time $2^{O(s)}$. 
So, it will not affect the asymptotics of our space or running time.
For a binary sequence $y$, we will refer to the traversal of such a graph by the three procedures as $y$-DFS traversal.

\section{New Robust Catalytic Classes}
\label{sec:new}

Our chief object of study in this paper is a number of expansions of the basic $\LCSPACE{s}{c}{e}$ definition of lossy catalytic space. Thus we begin with all definitions of interest. Every machine in this section will use $s$ bits of work space, $c$ bits of catalytic space, and any randomized machines have access to a read-only read-once random string $r \in \{0,1\}^{2^c}$ (recall our conditions on $s$, $c$, and $e$ from \autoref{sec:prelims}).

In our first model, a randomized catalytic machine may arbitrarily destroy its catalytic tape with some probability over the randomness, but must perfectly reset otherwise:
\begin{definition}
    Let $0 \leq \delta < 1$ and $0 < \epsilon \le \frac{1}{2}$. A language $L$ is in \( \BPepsCdeltSpace{s}{c}{\delta}{\epsilon} \) if there is a randomized machine $M$ such that on any input \( x \in \{0,1\}^n \) and initial setting of the catalytic tape \( \tau \in \{0,1\}^c \), $M$ always halts, outputs $L(x)$ with probability at least $\frac{1}{2} + \epsilon$, and resets the catalytic tape to \( \tau \) with probability at least $\ge  1 - \delta $, where both probabilities are over the machine's randomness $r$.
    
    We further use the notation \( \BPepsCdeltSpace{s}{c}{\delta}{\epsilon}_{\high} \) to denote $\delta > 2 \epsilon$ and \( \BPepsCdeltSpace{s}{c}{\delta}{\epsilon}_{\low} \) to denote $\delta < 2 \epsilon$ $(< 1)$.
\end{definition}

\noindent
Our second and third models have to do with a bounded expected number of errors on the catalytic tape. The first version does so in expectation over the randomness of the machine:
\begin{definition}
    A language $L$ is in $\AVGrBPLCSPACE{s}{c}{e}$ if there is a randomized machine $M$ such that on any input \( x \in \{0,1\}^n \) and initial setting of the catalytic tape \( \tau \in \{0,1\}^c \), $M$ outputs $L(x)$ with probability at least $\frac{2}{3}$ and resets the catalytic tape to $\tau'$ such that $\mathbb{E}_{r}[d(\tau, \tau')] \leq e$, i.e., we make at most $e$ errors in expectation over the machine's randomness $r$.
\end{definition}

\noindent
The last model again makes a bounded number of errors in expectation, but now the expectation is over the initial catalytic tape, with the machine acting deterministically:
\begin{definition}
    A language $L$ is in $\AVGtauLCSPACE{s}{c}{e}$ if there is a deterministic machine $M$ such that on any input \( x \in \{0,1\}^n \) and initial setting of the catalytic tape \( \tau \in \{0,1\}^c \), $M$ outputs $L(x)$ and resets the catalytic tape to $\tau'$ such that $\mathbb{E}_{\tau}[d(\tau, \tau')] \leq e$, i.e., we make at most $e$ errors in expectation over the initial configuration $\tau$.
\end{definition}

\noindent
In this paper we focus on the logspace variant of all the above classes, both with and without the polynomial-time restriction, and we use notation in accordance with all previous classes.

A note on the remainder of the paper: while some results will follow from short proofs and/or minor tweaks to ones which appear earlier, for the sake of highlighting where collapses occur, we will use Theorem or Corollary exactly for those results which relate complexity classes together.


\section{Model 1: $\BPepsCdeltSpace{s}{c}{\delta}{\epsilon}$}
\label{sec:deltaepsilon}

We first address the case of $\BPepsCdeltSpace{s}{c}{\delta}{\epsilon}$, which we stratify into the \emph{high error case} when $\delta > 2\epsilon$ (i.e. $\BPepsCdeltL{\delta}{\epsilon}_{\high}$), the \emph{low error case} when $\delta < 2 \epsilon$ (i.e. $\BPepsCdeltL{\delta}{\epsilon}_{\low}$), and the \emph{zero error case} when $\epsilon = \frac{1}{2}$ (i.e. $\BPepsCdeltL{\delta}{\frac{1}{2}}$).

\subsection{High error case: $\BPepsCdeltL{\delta}{\epsilon}_{\high}$}

In the high error regime, our chance of destroying the catalytic tape is greater than our advantage of finding the correct answer over random chance. We can exploit this fact to randomly choose whether to erase the tape and brute force the function or simply output a random answer.

\begin{theorem}\label{trivialpspace}
Let $\delta, \epsilon$ be constants such that $\delta > 2\epsilon$. Then
$$\BPepsCdeltSpace{s}{c}{\delta}{\epsilon} = \SPACE(\poly(s,c))$$
\end{theorem}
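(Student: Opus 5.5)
We prove the two inclusions separately. The interesting one is $\SPACE(\poly(s,c)) \subseteq \BPepsCdeltSpace{s}{c}{\delta}{\epsilon}$, and it follows the hint before the statement: with some probability $p$ we give up on the catalytic tape and use it as ordinary memory to brute-force the answer; otherwise we leave the tape untouched and output a fair coin.

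\textbf{Lower bound.} Let $L \in \SPACE(\poly(s,c))$. We would like to decide $L$ deterministically using $c$ bits of free memory. This needs $\SPACE(\poly(s,c)) \subseteq \SPACE(O(c))$, which holds only when $\poly(s,c) = O(c)$. In the logspace regime $c = n^{O(1)}$ this is fine, because we may take the polynomial in the definition of $c$ as large as we like (this is the case the paper uses, $\BPepsCdeltLB = \PSPACE$). For general $s,c$ we should check the intended reading of ``$\poly$''. If needed, we can read it as an appropriately padded version, or note that the catalytic tape may be taken to be polynomially larger. I expect this parameter bookkeeping to be the only subtle point.

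Given that, the machine $M$ flips a biased coin with $\Pr[\text{heads}] = p$. Such a coin can be simulated with $O(1)$ random bits per attempt up to negligible error, or exactly for a suitable rational $p$.
\begin{itemize}
\item On heads, $M$ overwrites the catalytic tape, runs the space-bounded decider for $L$, and outputs $L(x)$. This branch is always correct.
\item On tails, $M$ never touches the tape and outputs a uniformly random bit.
\end{itemize}
The success probability is $p + (1-p)/2 = \frac{1}{2} + \frac{p}{2}$, and the tape is reset with probability $1-p$. We need $p/2 \ge \epsilon$ and $p \le \delta$, so we choose $p \in [2\epsilon, \delta]$. This interval is nonempty exactly because $\delta > 2\epsilon$; for instance $p = 2\epsilon$ works, using strictness of $\delta > 2\epsilon$ to absorb any rounding in simulating the coin. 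The machine always halts, and its work space is $O(s)$ plus $O(1)$.

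\textbf{Upper bound.} We simulate a $\BPepsCdeltSpace{s}{c}{\delta}{\epsilon}$ machine deterministically in space $\poly(s,c)$. Fix an arbitrary $\tau$ (say $0^c$) and consider the configuration graph $\gmx$, which has $2^{s+c}$ nodes. Since $M$ always halts, the graph is acyclic and every run has length at most $2^{s+c}$. The acceptance probability from $\start$ can be computed exactly with the standard Savitch/Borodin-style recursion for probabilistic space-bounded machines: powering the $2^{s+c}\times 2^{s+c}$ transition matrix takes space $O((s+c)^2)$. We then accept iff this probability exceeds $\frac{1}{2}$, which is correct since $\epsilon > 0$. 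The resetting condition plays no role here.
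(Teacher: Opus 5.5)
Your proposal is correct and follows the paper's proof. The reverse inclusion is the same coin-flip construction. The paper picks a dyadic rational $\alpha\in(2\epsilon,\delta)$ so that the biased coin is exact with $O(1)$ fair bits, which is precisely what your remark about using strictness to absorb the rounding amounts to. The paper also has the same parameter looseness you flag: for $L\in\SPACE(S)$ it builds a $\BPepsCdeltSpace{\log S}{S}{\delta}{\epsilon}$ machine, and the meaningful reading is the logspace corollary $=\PSPACE$.

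The only real difference is the forward inclusion. The paper treats the catalytic tape as ordinary memory and invokes Saks--Zhou to land in $\SPACE((s+c)^{3/2})$. You instead use the older matrix-powering simulation, which gives $O((s+c)^2)$; both are within $\poly(s,c)$.

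One small correction to that step: the acceptance probability is a dyadic rational with up to $2^{s+c}$ bits, so the recursion cannot literally compute it exactly in $O((s+c)^2)$ space. Instead, truncate to $O(s+c)$ bits of precision at each of the $O(s+c)$ squaring levels. The accumulated error is then $2^{O(s+c)}$ times the per-entry truncation error, which is below $\epsilon$ for a suitable constant in the precision, and space stays $O((s+c)^2)$. Alternatively, cite the unbounded-error result $\mathrm{PrSPACE}(S)\subseteq\mathrm{DSPACE}(S^2)$.
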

\begin{proof}
For the forward inclusion, $\BPepsCdeltSpace{s}{c}{\delta}{\epsilon} \subseteq \BP\SPACE(s+c) \subseteq \SPACE((s+c)^{3/2})$, where the first inclusion follows from using a ``normal'' read-write tape instead of the catalytic tape, and the second from the derandomization of bounded space due to Saks and Zhou~\cite{SaksZhou99}.

Now let $L \in \SPACE(S)$ and let $N$ being the space-$S$ machine deciding $L$. As dyadic rationals are dense in the reals \cite{263128}, there exists a rational number $\alpha$ such that $2\epsilon < \alpha < \delta$, and it has the form $\frac{p}{2^q}$ for $p,q \in \mathbb{N}$; since $\delta$ and $\epsilon$ are constants, $\alpha$ can be described with constantly many bits. Our $\BPepsCdeltSpace{\log S}{S}{\delta}{\epsilon}$ machine $M$ will first destroy the tape with probability $\alpha$, and if it does then we run $N$ in the catalytic memory and output the answer; if not, we flip a fair coin and output the answer.

$M$ does not reset the catalytic tape with a probability of at most $\alpha < \delta$, while the probability that it correctly decides the input is 
$$
\frac{1}{2}(1-\alpha) + \alpha = \frac{1}{2} + \frac{\alpha}{2} \ge \frac{1}{2} + \epsilon
$$
as required.
\end{proof}

\noindent
Our main consequence will be for catalytic logspace, where we exactly capture $\PSPACE$:

\begin{corollary}\label{trivialpspace2}
Let $\delta, \epsilon$ be constants such that $\delta > 2\epsilon$. Then
$$\BPepsCdeltL{\delta}{\epsilon} = \PSPACE$$
\end{corollary}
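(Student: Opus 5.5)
The plan is to instantiate \autoref{trivialpspace} with $s = k\log n$ and $c = n^k$, and take the union over $k$.

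For the inclusion $\BPepsCdeltL{\delta}{\epsilon} \subseteq \PSPACE$, apply the forward direction of \autoref{trivialpspace}. A language in $\BPepsCdeltSpace{k\log n}{n^k}{\delta}{\epsilon}$ lies in $\SPACE(\poly(k\log n, n^k)) = \SPACE(n^{O(k)}) \subseteq \PSPACE$. Concretely, the argument simulates the catalytic tape as ordinary memory, giving a $\BP\SPACE(O(n^k))$ machine, and then Saks--Zhou derandomizes it to space $O(n^{3k/2})$.

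One point needs checking here. The randomized machine has up to $2^c$ random bits with read-once access, and Saks--Zhou needs the machine to halt. Both are fine: the definition guarantees $M$ always halts, so its running time is bounded by the number of configurations, which is $2^{O(s+c)}$. That is exactly the setting Saks--Zhou handles. The constant advantage $\epsilon$ can first be amplified to $2/3$ by majority vote over $O(1/\epsilon^2)$ repetitions, since we no longer care about the tape.

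For the reverse inclusion, take $L \in \PSPACE$, decided by a machine $N$ in space $S = n^k$. Apply the construction from \autoref{trivialpspace} with work space $O(\log n)$ and catalytic space $n^k$. The machine $M$ first flips $q = O(1)$ coins to sample an event of probability $\alpha = p/2^q$ with $2\epsilon < \alpha < \delta$. If the event occurs, $M$ runs $N$ directly on the catalytic tape, using it as ordinary scratch space and ignoring its contents. Otherwise, $M$ outputs a fresh random bit and leaves the tape untouched.

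The resource checks are routine. Sampling $\alpha$ needs only constant space, and $\alpha$ is a constant hardwired into $M$. Running $N$ on the catalytic tape needs only head-position bookkeeping of $O(\log n)$ bits on the work tape. The tape is destroyed with probability $\alpha < \delta$, and $M$ is correct with probability $\tfrac12 + \tfrac{\alpha}{2} > \tfrac12 + \epsilon$. $M$ always halts provided $N$ does, which we may assume.

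The only mildly delicate step is the forward direction's reliance on an always-halting machine with exponentially many random bits. I expect this to be resolved by the halting requirement in the definition together with the configuration-count time bound.
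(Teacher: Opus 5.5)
Your proposal is correct and follows the paper's route: the corollary is just \autoref{trivialpspace} instantiated at $s = O(\log n)$, $c = n^{O(1)}$. The forward direction treats the catalytic tape as ordinary memory and applies Saks--Zhou, and the reverse direction destroys the tape with a dyadic probability $\alpha \in (2\epsilon,\delta)$ to run the $\PSPACE$ machine there, otherwise outputting a coin flip; your extra checks on halting and amplification are sound details the paper leaves implicit.
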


\noindent
The same logic applies when we have a polynomial time bound, with the only change being the power of the machine when we destroy the catalytic tape:

\begin{theorem}\label{trivialp}
Let $\delta, \epsilon$ be constants such that $\delta > 2\epsilon$. Then
$$\BPepsCdeltL{\delta}{\epsilon}\PT = \BPP$$
\end{theorem}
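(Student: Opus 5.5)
We prove the two inclusions separately, mirroring the proof of \autoref{trivialpspace}.

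\emph{Inclusion $\BPepsCdeltL{\delta}{\epsilon}\PT \subseteq \BPP$.} Take a machine $M$ witnessing membership. It uses $O(\log n)$ work space and $n^{O(1)}$ catalytic space, and always halts within some polynomial time $n^k$. We simulate $M$ by an ordinary probabilistic polynomial-time machine:
\begin{itemize}
\item initialize the catalytic tape to an arbitrary fixed string, say $0^{c}$, stored in ordinary memory;
\item run $M$ for $n^k$ steps, drawing fresh random bits whenever $M$ reads its random tape (at most $n^k$ are read);
\item output what $M$ outputs.
\end{itemize}
The definition guarantees correctness with probability $\ge \frac12+\epsilon$ for \emph{every} initial tape $\tau$, including $0^c$. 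The resetting condition can simply be ignored. Since $\epsilon$ is a positive constant, standard amplification by majority vote over repetitions puts the language in $\BPP$.

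\emph{Inclusion $\BPP \subseteq \BPepsCdeltL{\delta}{\epsilon}\PT$.} Let $N$ be a $\BPP$ machine with success probability $\ge 2/3$; amplify it to success probability $1-\gamma$, where $\gamma$ is a small constant chosen below. As in \autoref{trivialpspace}, fix a dyadic rational $\alpha = p/2^q$ with $2\epsilon < \alpha < \delta$. The machine $M$ works as follows.
\begin{itemize}
\item Flip $q$ coins to decide, with probability exactly $\alpha$, whether to ``destroy''.
\item In the destroy branch, use the catalytic tape (polynomially long) as ordinary work space. Run $N$ there, reading its random bits from $M$'s read-once random tape, and output $N$'s answer.
\item Otherwise, output a fair coin flip and leave the tape untouched.
\end{itemize}
Both branches run in polynomial time. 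Only the destroy branch fails to reset, and it occurs with probability $\alpha < \delta$.

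The success probability is $\frac12(1-\alpha) + \alpha(1-\gamma) = \frac12 + \frac{\alpha}{2} - \alpha\gamma$. Because $\alpha > 2\epsilon$ strictly, we can choose the constant $\gamma$ small enough that $\frac{\alpha}{2} - \alpha\gamma \ge \epsilon$.

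The only step needing care is this slack from $N$'s bounded error, which the $\PSPACE$ case did not have: there $N$ was deterministic. It is handled by choosing $\gamma$ against the strict gap $\alpha > 2\epsilon$ and amplifying $N$ accordingly.
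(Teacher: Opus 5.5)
Your proof is correct and follows the paper's argument: the forward inclusion simulates the machine on an ordinary tape, and the reverse inclusion destroys the tape with dyadic probability $\alpha \in (2\epsilon, \delta)$ and runs the $\BPP$ machine in the catalytic memory, otherwise outputting a fair coin. Your explicit amplification of $N$ to error $\gamma \le \frac{1}{2} - \frac{\epsilon}{\alpha}$ fills in a detail the paper leaves implicit: it simply says the construction is identical to the $\PSPACE$ case, whose calculation assumes $N$ is always correct.
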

\begin{proof}
For the forward inclusion, $\BPepsCdeltL{\delta}{\epsilon}\PT \subseteq \BP\TIME(\poly(n))$ since the latter uses a ``normal'' read-write tape instead of the catalytic tape. Now let $L \in \BP\TIME(T)$ and let $N$ being the randomized time-$T$ machine deciding $L$. Again choose $\alpha$ such that $2\epsilon < \alpha < \delta$ of the form $\frac{p}{2^q}$ for $p,q \in \mathbb{N}$. Our $\BPepsCdeltL{\delta}{\epsilon}\PT$ machine $M$ uses catalytic memory of size $T$ and is identical to \autoref{trivialpspace}, with the note that we can run $N$ in our catalytic memory in time $T = \poly(n)$ and space $T$.
\end{proof}

\subsection{Low error case: $\BPepsCdeltL{\delta}{\epsilon}_{\low}$}
\label{sc:lowerr}

For the case of $\BPepsCdeltL{\delta}{\epsilon}_{\low}$, our main result will be the following:

\begin{theorem}\label{corroepsdeltagoodpiscl}
Let $\epsilon,\delta$ be constants such that $\delta < 2\epsilon <1$. Then
$$\BPepsCdeltL{\delta}{\epsilon}_{\low}\PT = \CL$$
\end{theorem}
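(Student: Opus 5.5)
The plan is to prove the two inclusions separately. The inclusion $\CL \subseteq \BPepsCdeltL{\delta}{\epsilon}\PT$ is essentially free. By \autoref{thm:old_collapses}, $\CL = \BPCL\PT$. So a language in $\CL$ is decided by a polynomial-time randomized catalytic machine that always resets the tape ($\delta = 0$) and errs with probability at most $1/3$. We first amplify the success probability by majority over $O(1)$ independent runs, which needs only $O(\log n)$ extra counters, so that it succeeds with probability at least $\frac12+\epsilon$; this is possible since $\epsilon<\frac12$. The resulting machine lies in $\BPepsCdeltL{\delta}{\epsilon}\PT$ for every $\delta \ge 0$.

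The real content is the inclusion $\BPepsCdeltL{\delta}{\epsilon}\PT \subseteq \CL$. Let $M$ be a $\BPepsCdeltL{\delta}{\epsilon}\PT$ machine running in time $n^k$. We follow the strategy sketched in the techniques section, via the configuration-graph compression lemma for the poly-time case (\autoref{lemmvianisansprgforcl}).

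Fix $\beta$ with $0<\beta<1/2$, chosen close enough to $0$ that $\delta/(1-\beta) < 2\epsilon - \gamma$ for some constant $\gamma>0$; this uses $\delta<2\epsilon$. Call a configuration of $\gmxtau$ a $\taubeta$-node if from it $M$ resets $\tau$ with probability at least $\beta$. Then:
\begin{itemize}
\item Each configuration is a $\taubeta$-node for at most $1/\beta$ distinct $\tau$'s, so the $\taubeta$-nodes for a typical $\tau$ number at most about $2^s/\beta$.
\item $M$ leaves the set of $\taubeta$-nodes with probability at most $\delta/(1-\beta)$.
\end{itemize}
Hence, conditioned on correctly resetting, or more usefully restricted to walks that stay inside the $\taubeta$-subgraph and end at $\acc$ or $\rej$ with $\tau$ restored, the acceptance probability still separates the two cases. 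Walks that end correctly reset and output $L(x)$ have probability at least $(\frac12+\epsilon)-\delta \ge \frac12 - \epsilon + \gamma'$ more than those that end correctly reset and output $\overline{L(x)}$. More precisely, $\Pr[\text{correct}\wedge\text{reset}] - \Pr[\text{wrong}\wedge\text{reset}] \ge 2\epsilon-\delta > 0$, a constant gap. This inequality is exactly where $\delta<2\epsilon$ is used.

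The algorithm then explores $\gmxtau$ backwards from $\acc$ and $\rej$ (both known, since they contain $\tau$). It uses the Nisan-style hash-based PRG whose hash functions are drawn from the catalytic tape and selected by the compression method of \cite{Pyne25}. Via $y$-DFS traversals (\autoref{reversiblewalk}), it identifies $\taubeta$-nodes and estimates, for $\start$, the probabilities of reaching $\acc$ and $\rej$ while staying among $\taubeta$-nodes. It outputs according to which estimate is larger; the constant gap $2\epsilon-\delta$ tolerates the PRG's additive error. Whenever the explored graph, or the number of $\taubeta$-nodes seen, exceeds $\poly(2^s)/\beta$, we compress the current block of $\tau$ as in \cite{CookLiMertzPyne25} and restart on a fresh block. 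After enough compressions, enough free space exists to simulate $M$ by brute force, using Saks--Zhou on a plain tape. Because $M$ runs in fixed polynomial time, $\start$ is always within depth $n^k$ of the sinks. So the PRG horizon never runs out, and the only outcomes are a correct answer or compression. This is why the statement has the $\PT$ restriction.

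The main obstacle is proving the compression lemma itself (\autoref{lemmvianisansprgforcl}). The graphs $\gmxtau$ for different $\tau$ now overlap, so the usual disjointness averaging fails. One must argue that a large $\taubeta$-subgraph, or a large reachable region discovered by pseudorandom walks, yields a short description of the current tape block given the input and the $1/\beta$-ambiguity. My first attempt would be to encode $\tau$ by the index of a discovered $\taubeta$-node (among at most $2^s$ work contents) together with its $O(\log(1/\beta))$-bit rank among the $\tau$'s it serves. Decoding would simulate $M$ forward from that node along PRG-determined paths to recover $\tau$ from the reset endpoint.
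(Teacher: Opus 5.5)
Your proposal is correct and follows the paper's proof. For the forward inclusion, the paper runs the deterministic compress-or-decide subroutine $\mathcal{F}^L$ of \autoref{lemmvianisansprgforcl} on polynomially many catalytic chunks, where the $\PT$ bound is what rules out a $\perp$ output; if every chunk compresses it brute-forces $M$ in the freed bits, and it resets with $\mathcal{R}^L$. For the reverse inclusion, your majority-amplification of a $\BPCL\PT$ machine is an equally valid substitute for the paper's repetition of a $\ZPCL\PT$ machine.

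Since \autoref{lemmvianisansprgforcl} is already available, your final paragraph is not needed for this theorem. As sketched, it would also fail to free space: replacing $\tau$ by $\pi$ and storing $u$ plus a rank takes more bits than $\tau$. The paper's compression (Case~2.1 of the appendix) gets its savings from an auxiliary target string whose value selects which of the many discovered nodes to walk to.
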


\noindent
We will not discuss the general time-unbounded $\BPepsCdeltSpace{s}{c}{\delta}{\epsilon}_{\low}$, but will return to this question in great detail in \autoref{sec:avgr}.

To prove \autoref{corroepsdeltagoodpiscl}, we will need to analyze the configuration graph of a $\BPepsCdeltL{\delta}{\epsilon}_{\low}\PT$ graph.
Buhrman et al. \cite{BuhrmanCleveKouckyLoffSpeelman14} demonstrated that for a $\CSPACE{s}{c}$ machine, the average size of the configuration graph over the initial catalytic tape is $2^{O(s)}$, but this relies on the fact that such a machine always resets the catalytic tape.
In the case of a $\BPepsCdeltSpace{s}{c}{\delta}{\epsilon}$ machine $M$, the relevant concept to consider is that of a $\taubeta$-graph.

\begin{definition}[$\taubeta$-graph]\label{taunodedefn}
Consider a $\BPepsCdeltSpace{s}{c}{\delta}{\epsilon}$ machine $M$ and its configuration graph $\gmxtau$, for input $x$ and initial catalytic tape $\tau$.  We define a configuration $v$ in this graph to be a $\taubeta$-node if the probability of reaching $\acc$ or $\rej$ from $v$ is at least $\beta$. The $\taubeta$-graph is then defined as follows:

\begin{itemize}
    \item If $\start$ is not a $\taubeta$-node, we define the $\taubeta$-graph to be empty.  Otherwise, the vertex set consists of all the $\taubeta$-nodes in $\gmxtau$ that can be reached from $\start$ via \textit{only} $\taubeta$-nodes.
    \item The transitions/edges are the same as $\gmxtau$, except that all transitions going to a non-$\taubeta$-node are routed to a (null) $\perp$ node.
\end{itemize}
\end{definition}

Let $\delta, \epsilon$ be constants such that $\delta < 2\epsilon < 1$, and
consider a suitable constant $\beta$, $0 < \beta < 1-\delta$.
Note that as $M$ resets the catalytic tape correctly with probability $\ge 1-\delta$, by  \autoref{taunodedefn} and $\beta < 1-\delta$, we have that $\start$ is in the $\taubeta$-graph. Since $M$ always halts by definition, and given that $\beta > 0$, it follows from a simple averaging argument that at least one of $\acc$ or $\rej$ is in the vertex set of the $\taubeta$-graph. Moreover, it follows from  \autoref{taunodedefn} and $\beta > 0$ that the $\taubeta$-graph does not contain $\acc[\tau']$ or $\rej[\tau']$ for $\tau \not = \tau'$.

\begin{lemma}\label{disjointobsv}
Let $v$ be a configuration in $\gmx$, then there are at most $\frac{1}{\beta}$ different values for the initial catalytic setting $\tau$ such that $v$ is in the $\taubeta$-graph.
\end{lemma}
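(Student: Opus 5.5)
The plan is to fix the configuration $v$ and pin down which events the probabilities in the definition of a $\taubeta$-node refer to. For $v$ to be in the $\taubeta$-graph, $v$ must be a $\taubeta$-node. By \autoref{taunodedefn}, this means that the random walk of $M$ started from $v$ reaches one of the two halting configurations $\acc$ or $\rej$ with probability at least $\beta$, where the probability is over the remaining random bits. The crucial point is that the walk from $v$ is determined by $v$ and the random bits alone. The initial tape $\tau$ enters only through the targets $\acc = \conf{\tau}{1\circ1\circ1\circ0^{s-3}}$ and $\rej = \conf{\tau}{1\circ1\circ0\circ0^{s-3}}$.

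Next I would define, for each $\tau$, the event $E_\tau$ that the walk from $v$ halts in $\acc$ or $\rej$. Since $M$ always halts, each random run from $v$ ends in exactly one halting configuration $\conf{\pi}{u}$. This configuration has a single catalytic content $\pi$, so the run belongs to $E_\tau$ only when $\tau = \pi$. Hence the events $E_\tau$ are pairwise disjoint over distinct $\tau$. If $v$ lay in the $\taubeta$-graph for $k$ distinct values $\tau_1,\dots,\tau_k$, then
$$1 \ge \Pr\Big[\bigcup_i E_{\tau_i}\Big] = \sum_{i=1}^k \Pr[E_{\tau_i}] \ge k\beta,$$
and so $k \le 1/\beta$.

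The one subtle step is making precise that the probability of reaching a halting configuration from $v$ does not depend on $\tau$ beyond the identity of the target sinks. In $\gmxtau$, the node $v$ is just a configuration in $\gmx$, and $\gmxtau$ is an induced subgraph of $\gmx$ containing everything reachable from $\start$, hence everything reachable from $v$. So the walk from $v$ in $\gmxtau$ coincides with the walk in $\gmx$. I would state this explicitly, noting that randomness is consumed read-once so that future coins are fresh uniform bits independent of the history leading to $v$. After that the disjointness argument is immediate.
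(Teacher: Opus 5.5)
Your proof is correct and is essentially the paper's argument. The paper's one-line proof notes that a $\tau^{\beta}$-node halts with catalytic contents $\tau$ with probability at least $\beta$, so at most $1/\beta$ such $\tau$ can exist. You make explicit the two points it leaves implicit: these halting events are pairwise disjoint over distinct $\tau$, and the walk from $v$ depends only on $v$ and the fresh random bits, not on $\tau$.
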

\begin{proof}
If $v$ is in the $\taubeta$-graph for some $\tau$, then by \autoref{taunodedefn} it is a $\taubeta$-node and, with a probability of $\ge \beta$, reaches a halt state with catalytic contents $\tau$. Since we assume $\beta>0$, the lemma follows.
\end{proof}

\begin{lemma}\label{avgsize}
Let the size of a $\taubeta$-graph be defined as the number of vertices in it (excluding $\perp$). Then the average (over $\tau$) size of a $\taubeta$-graph is at most $2^{4s}$.
\end{lemma}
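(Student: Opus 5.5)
The plan is a double-counting argument: sum the sizes of all $\taubeta$-graphs over $\tau$ and bound this sum by counting, for each configuration $v$ of $\gmx$, how many graphs contain it. Concretely,
$$\sum_{\tau \in \{0,1\}^c} |V(\taubeta\text{-graph})| = \sum_{v \in V(\gmx)} \#\{\tau : v \in \taubeta\text{-graph}\}.$$
By \autoref{disjointobsv}, each $v$ contributes at most $1/\beta$ to the right-hand side. So it only remains to bound the number of configurations that can appear at all. A naive bound of $2^{c+s}$ would give an average of $2^s/\beta$, which is at most $2^{4s}$ since $\beta$ is a fixed constant and $s \ge \log n$ is large enough that $1/\beta \le 2^{3s}$.

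The main point to be careful about is exactly which configurations are counted. Configurations are pairs $\conf{\pi}{u}$ with $\pi \in \{0,1\}^c$ and $u \in \{0,1\}^s$. Following the footnote in the preliminaries, however, the head positions and internal state are folded into the work tape at a cost of up to $3s$ extra bits. The honest count is therefore at most $2^{c} \cdot 2^{4s}$ configurations, hence the $2^{4s}$ in the statement.

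Dividing by $2^c$, the average size is at most $2^{4s}/\beta$ when the work tape is measured as $4s$ bits. Alternatively, with the paper's convention that the bookkeeping is already absorbed into $s$, it is $2^{s}/\beta$, and this is $\le 2^{4s}$ once $2^{3s} \ge 1/\beta$, which holds for all sufficiently large $n$ since $\beta$ is constant. I would present the clean version: $|V(\gmx)| \le 2^{c+s}$, each node lies in at most $1/\beta$ of the $\taubeta$-graphs, so the average is $\le 2^s/\beta \le 2^{4s}$.

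The $\perp$ node is excluded by the statement and needs no treatment. There is no real obstacle here; the only subtlety is that \autoref{disjointobsv} must be applied to membership in the $\taubeta$-graph (not merely reachability in $\gmxtau$), which is exactly what it states.
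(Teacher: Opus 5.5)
Your double-counting argument via \autoref{disjointobsv} is essentially the paper's proof, and your clean version is valid under the preliminaries' convention that a configuration is a pair $(\pi,u)$ with $\pi\in\{0,1\}^c$ and $u\in\{0,1\}^s$. The paper instead counts the head positions and state explicitly, using $\log n+\log c+\log s+O(1)\le 2s+2\log s$ extra bits rather than the footnote's cruder $3s$; this gives at most $s^2 2^{c+3s}$ configurations and an average of at most $s^2 2^{3s}/\beta\le 2^{4s}$, so the constant $4$ holds even when $s$ is the raw work-space bound, whereas your ``honest'' count of $2^{c+4s}$ would only give $2^{4s}/\beta$.
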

\begin{proof}
The number of bits required to describe any configuration is:
$$
s+c + (\log{n} + \log{c} + \log{s} + O(1)) \le 3s +c + 2\log{s}
$$ 
which includes the bits required to store the internal state and the tape head locations. Since by \autoref{disjointobsv}, any configuration can be a part of a $\taubeta$-graph for at most $\frac{1}{\beta}$ different values of $\tau$, the average size is at most $\frac{1}{\beta}\cdot \frac{s^2\cdot 2^{3s+c}}{2^c} \le \frac{s^2\cdot 2^{3s}}{\beta} \le 2^{4s}$ (for large enough $s$); where we used the fact that $\beta$ is a constant.
\end{proof}

\begin{lemma}\label{tauepsp}
The probability $q$ of reaching $\acc$ or $\rej$ from $\start$ within the $\taubeta$-graph is at least $1-\frac{\delta}{(1-\beta)}$. In other words, the probability of reaching $\perp$ from $\start$ in the $\taubeta$-graph is at most $\frac{\delta}{(1-\beta)}$.
\end{lemma}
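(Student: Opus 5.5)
The idea is to view the random run of $M$ from $\start$ as a random path in $\gmxtau$ and to charge every run that falls into $\perp$ to a run that fails to reset $\tau$. Since $M$ always halts, a run that never enters a non-$\taubeta$-node stays inside the $\taubeta$-graph and ends at $\acc$ or $\rej$. This is because the $\taubeta$-graph consists exactly of the $\taubeta$-nodes reachable from $\start$ through $\taubeta$-nodes, and, as observed above, $\start$ belongs to it since $\beta<1-\delta$. So reaching $\perp$ in the $\taubeta$-graph is the same event as the run of $M$ visiting some configuration that is not a $\taubeta$-node. Let $F$ denote this event. The goal is to show $\Pr[F]\le \delta/(1-\beta)$.

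Define the stopping time $T$ as the first step at which the run is at a non-$\taubeta$-node $v$; the event $F$ is exactly $\{T<\infty\}$. Condition on the history up to $T$, which fixes the configuration $v$. The machine is Markovian in its configuration, because the random bits are read-once and fresh. Hence the conditional probability that the rest of the run ends with $\tau$ on the catalytic tape equals the probability of reaching $\acc$ or $\rej$ from $v$. Since $v$ is not a $\taubeta$-node, this probability is $<\beta$. So, conditioned on $F$, the run fails to reset $\tau$ with probability $>1-\beta$. Summing over the histories that make up $F$ gives
$$\Pr[\text{no reset}] \;\ge\; \Pr[\text{no reset}\wedge F] \;\ge\; (1-\beta)\Pr[F].$$

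Combining this with the guarantee $\Pr[\text{no reset}]\le\delta$ yields $\Pr[F]\le \delta/(1-\beta)$. The complementary event then gives that $q\ge 1-\delta/(1-\beta)$.

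The only delicate point is justifying the conditioning step, namely that the probability of success from $v$ does not depend on how $v$ was reached. This follows from the read-once fresh randomness: the future of the run from $v$ depends only on $v$ and on random bits not yet read. A secondary check is the identification of ``halting at $\acc$ or $\rej$'' with ``resetting to $\tau$''. This holds because these are the unique halting configurations with $\tau$ on the catalytic tape. The identification is needed both for the definition of a $\taubeta$-node and for the reset guarantee.
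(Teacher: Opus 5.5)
Your proof is correct and follows the paper's own argument: reaching $\perp$ in the $\taubeta$-graph is the same as the run of $M$ hitting a non-$\taubeta$-node, and from such a node $\tau$ is not restored with probability at least $1-\beta$, which gives $(1-\beta)\Pr[F]\le\delta$. Your stopping-time and Markov-property justification, together with the check that $\start$ is in the graph and that halting at $\acc$ or $\rej$ means resetting to $\tau$, only spells out what the paper's two-line proof leaves implicit.
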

\begin{proof}
Firstly, note that by our assumption $0 \le \delta < 1-\beta < 1$, and thus the probabilities make sense. We have that
\begin{align*}
q&=1-\Pr[\text{Going from } \start \text{ to} \perp \text{ in $\taubeta$-graph}]\\
&= 1- \Pr[\text{Reaching a non-$\taubeta$-node from } \start \text{ in } G_{M,x,\tau} ]
\end{align*}
By the definition of a non-$\taubeta$-node, we get
\begin{align*}
\big(1-\beta\big)\cdot& \Pr[\text{Reaching a non-$\taubeta$-node from } \start  \text{ in } G_{M,x,\tau}] \\
& \le \Pr[\text{Not resetting catalytic tape to $\tau$ } \text{from } \start \text{ in }G_{M,x,\tau} ] \le \delta \qedhere
\end{align*}
\end{proof}

If we take $\beta$ to be a constant value such that $0< \beta < 1-\frac{\delta}{2\epsilon} \le 1-\delta$ (such a value exists as $\delta < 2\epsilon \le 1$), then the aforementioned observations still hold. We set $\beta =\frac{1}{2} \big (1-\frac{\delta}{2\epsilon} \big)$. Then, by  \autoref{tauepsp}, the probability of reaching $\perp$ from $\start$ (in the $\taubeta$-graph) is at most $\gamma = \frac{2\delta}{1+\frac{\delta}{2\epsilon}} < 2\epsilon$. In other words, the probability of exiting the $\taubeta$-graph via $\perp$ is at most $\gamma$. When \( x \) is not in the language recognized by \( M \), by definition, \( M \) accepts it with a probability of at most \(\frac{1}{2} - \epsilon\). On the other hand, consider the case when \( x \) is in the language. Even if we restrict ourselves to the \(\taubeta\)-graph, we know that the probability of going from \(\start\) to \(\acc\) is at least \(\frac{1}{2} + \epsilon - \gamma\), which is a constant that is bounded away from \(\frac{1}{2} - \epsilon\).

In summary, since the \( \taubeta \)-graph has no halt configurations other than \( \acc \)  and \( \rej \), the machine \( M \) behaves like a purely catalytic machine under this restriction. Moreover, even with this restriction, we can still decide the input. Although we do not yet know how to handle the case where the \( \taubeta \)-graph is ``large''—which would help us de-randomize the class \( \BPepsCdeltSpace{s}{c}{\delta}{\epsilon} \)—we do know how to deal with the case when the graph is small. This is often the case, as indicated in \autoref{avgsize}.

When the $\taubeta$-graph is small, we can either decide the input or compress the catalytic contents. This serves as the basis for the primary lemma in this work, and we defer the proof to \autoref{nisansprgproofs}:

\begin{lemmaletter}{a}\label{lemmvianisansprggeneral} Let $\epsilon,\delta$ be constants such that $\delta < 2\epsilon$. Then, for every $L \in \BPepsCdeltSpace{s}{c}{\delta}{\epsilon}$ there exist \textbf{deterministic} catalytic subroutines $\mathcal{F}^L$ and $\mathcal{R}^L$, which given input $x$ behave as follows:
\begin{enumerate}
    \item $\mathcal{F}^L$ and $\mathcal{R}^L$ use $O(s)$ work space, $2^{O(s)}$ catalytic space and run in time $2^{O(s)}$.
    \item For initial catalytic contents $\tau$, $\mathcal{F}^L(x,\tau)$ either outputs (don't-know) $\perp$ or correctly decides if $x$ is in $L$, or changes the catalytic tape to $\tau'\circ 0$. In the first two cases, it does not change the contents of the catalytic tape.
    \item For $\tau'$ from the previous step, $\mathcal{R}^L(x,\tau'\circ 0)$ resets the catalytic tape back to $\tau$.
    \item The fraction of initial catalytic contents $\tau$, for which $\mathcal{F}^L$ outputs $\perp$ is  at most $\frac{1}{2^{4s}}$.
\end{enumerate}
\end{lemmaletter}

\noindent
For $\delta = 0$, $M$ always resets the catalytic tape. In this case, the proof of the lemma above provides a de-randomization of $\BPCSPACE{s}{c}$, which is already known due to Cook et al. \cite{CookLiMertzPyne25}; in this case, $\mathcal{F}^L$ never outputs $\perp$. This gives the following lemma, whose proof we again defer to \autoref{nisansprgproofs}:

\addtocounter{lemma}{-1}

\begin{lemmaletter}{b}\label{lemmvianisansprgforcl}
Let $\epsilon,\delta$ be constants such that $\delta < 2\epsilon$. Then, for every $L \in \BPepsCdeltL{\delta}{\epsilon}\PT$, there exist \textbf{deterministic} catalytic subroutines $\mathcal{F}^L$ and $\mathcal{R}^L$ that behave exactly the same as those in \autoref{lemmvianisansprggeneral}; \textit{except} that now $\mathcal{F}^L$ never outputs $\perp$.
\end{lemmaletter}

\noindent
Note that \autoref{lemmvianisansprggeneral} and \autoref{lemmvianisansprgforcl} hold trivially when $\delta=0$ and $\epsilon = \frac{1}{2}$, as by definition $\BPepsCdeltSpace{s}{c}{0}{\frac{1}{2}} = \CSPACE{s}{c}$ and $\BPepsCdeltL{0}{\frac{1}{2}}\PT = \CL\PT$.

We now have all the lemmas needed to prove our main result for $\BPepsCdeltL{\delta}{\epsilon}_{\low}$.

\begin{proof}[Proof of \autoref{corroepsdeltagoodpiscl}]
For the reverse direction, it is known that $\CL = \ZPCL\PT$ \cite{CookLiMertzPyne25}. Given a $\ZPCL\PT$ machine, we can run it $\big\lceil \log{\frac{1}{\frac{1}{2}-\epsilon}} \big\rceil$ times, and we output any non-$\perp$ answer we see; if none exists, we output 0. Clearly, we always reset the catalytic tape and answer incorrectly with a probability of at most $\frac{1}{2}-\epsilon$.

Now we prove the forward direction. Let $L \in \BPepsCdeltL{\delta}{\epsilon}\PT$, and let the corresponding machine be $N$ which runs in time $n^d$ (and hence uses catalytic space at most $n^d$) for $d \in \mathbb{N}$. Given input $x$, we run subroutine $\mathcal{F}^L$ from \autoref{lemmvianisansprgforcl} on $n^{10d}$ different chunks of polynomial-sized catalytic tapes. If $\mathcal{F}^L$ correctly decides $x$ for any chunk we output the result; otherwise, $\mathcal{F}^L$ frees up the last bit of each chunk, and hence $n^{10d}$ bits, on the catalytic tape. In this case, we do a brute force simulation of $N$ in the freed-up space, using the fact that $N \in \BP\TIME(s+c) \subseteq \SPACE(s+c)$. Finally, before outputting the result, we run $\mathcal{R}^L$ from \autoref{lemmvianisansprgforcl} to reset all the catalytic tape chunks. Our space bound follows because $\mathcal{F}^L$, $\mathcal{R}^L$, and $N$ each use $O(\log{n})$ work space.
%
%
\end{proof}


\subsection{Zero error case: $\BPepsCdeltL{\delta}{\frac{1}{2}}$}

A $\BPepsCdeltL{\frac{1}{poly(n)}}{\frac{1}{2}}$ machine always outputs the correct answer but has a non-negligible probability of destroying the catalytic tape; specifically, this probability is an inverse polynomial in the input length (for a sufficiently large polynomial). Currently, we do not know how to even show unconditionally that $\CL = \BPepsCdeltL{\frac{1}{poly(n)}}{\frac{1}{2}}$. However, if the probability of not resetting the tape is exponentially small (for a sufficiently large exponential), then the class collapses to $\CL$.

\begin{observation}\label{exponetiallysmall}
For any constant $0 < \epsilon \le \frac{1}{2}$,
$$\cup_{d \in \mathbb{N}} \BPepsCdeltSpace{d\log{n}}{n^d}{\frac{1}{2^{4n^d+3}}}{\epsilon}  = \CL$$
\end{observation}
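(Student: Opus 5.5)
The reverse inclusion is immediate: a $\CL$ machine is a $\BPepsCdeltSpace{d\log n}{n^d}{0}{\frac12}$ machine, and $\delta=0$ is below any positive threshold. For the forward inclusion I would use a union-bound argument over initial catalytic tapes. The aim is to turn the randomized machine into one that has a \emph{single} random string $r^*$ which is good for every $\tau$ at once, and then hardcode the computation of $r^*$.

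Let $M$ witness membership, with $c=n^d$, $s=d\log n$ and $\delta = 2^{-(4n^d+3)}$. First I would amplify correctness. Run $M$ independently $k=O(n^d)$ times sequentially on the same catalytic tape and take the majority vote, with $k$ chosen so that the error is at most $2^{-2c-2}$; since $\epsilon$ is constant, $k = O(c)$ suffices by Chernoff. The counter and vote tally cost only $O(\log n)$ work bits. The probability that some run fails to restore the tape is at most $k\delta \le 2^{-3c-1}$, for large $n$, using the slack in the exponent $4c+3$.

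Next comes the union bound. For a fixed input $x$, the combined procedure has a bad random string $r$ on tape $\tau$ if it either errs or fails to reset. This happens with probability at most $2^{-2c-1}$, so over all $2^c$ tapes the set of strings bad for \emph{some} $\tau$ has measure at most $2^{-c-1} < 1$. Hence a universal good string exists for each $x$. The subtle point is that the machine is randomized with read-once access, and its behavior on later runs depends on the tape after earlier runs. Conditioning on all earlier runs resetting correctly keeps the tape equal to $\tau$, so the union bound over the $k$ runs and the $2^c$ values of $\tau$ is valid.

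The main obstacle is that a deterministic catalytic logspace machine cannot store or search for $r^*$, since it has length up to $2^c\cdot k$. I would try to avoid storing it by choosing the randomness from a space-efficiently enumerable family instead. The alternative is to derandomize the error-free part by a known result. Concretely, I would observe that with $\delta$ so small, every configuration reached with nonzero probability along a good run must itself reset with probability close to 1, so the exponential bound can be pushed down to individual configurations. Consider any configuration $v$ reachable from $\start$ with probability $p>0$. The probability of a failure to reset from $v$ is at most $\delta/p$. The probability of reaching $v$ is at least $2^{-t}$ for the step $t$ at which it occurs, but runtime is unbounded, so this only works for nodes reached within $4c$ random bits. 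This is where I expect the real difficulty. The fallback is that the machine, restricted to $\tau$-nodes of high reset probability in the sense of \autoref{taunodedefn} with $\beta$ near $1$, behaves as a $\BPCL$ machine with $\delta'=0$ on the $\tau_\beta$-graph. Its $\perp$-mass is bounded by \autoref{tauepsp}. One then applies $\BPCL=\CL$ (\autoref{thm:old_collapses}), or \autoref{lemmvianisansprggeneral} with its $2^{-4s}$ failure fraction absorbed by the compress-or-random loop over $n^{O(1)}$ chunks as in the proof of \autoref{corroepsdeltagoodpiscl}.
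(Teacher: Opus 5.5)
There is a genuine gap. The universal-string route is nonuniform, and, as you note, a catalytic logspace machine cannot find $r^*$. So everything rests on your last part, and there you stop exactly where the key fact becomes sufficient.

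You correctly observe that a configuration $v$ reached from $\start$ within $t\le 4c$ random bits fails to reset with probability at most $\delta 2^{t}\le \frac18$. The missing idea is that nothing beyond that radius is ever needed. The paper reuses the $\BPCL$ derandomization of Cook et al., which only simulates walks of length at most $4n^d=4c$. It then returns to $\start$ by $y$-DFS from the reached configuration $v$, for all PRG outputs $y$ with $|y|\le 4n^d$. Its correctness uses only one fact: this backward search meets $\start$ and no other start configuration $\start[\tau']$. In the errorless case that comes from disjointness of configuration graphs. Here disjointness is recovered locally:
\begin{itemize}
  \item $v$ is a $\tau^{3/4}$-node, since otherwise $\start$ would destroy the tape with probability at least $2^{-4n^d}\cdot\frac14>\delta$.
  \item If the search reached $\start[\tau']$, then $v$ would be within $4n^d$ steps of $\start[\tau']$ as well, hence also a ${\tau'}^{3/4}$-node.
  \item That is impossible, since one configuration cannot reset to two different tapes each with probability $\frac34$.
\end{itemize}
So the unbounded runtime you flag as ``the real difficulty'' is not an obstacle: deeper configurations are simply never inspected.

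Your fallbacks do not close the gap. The $\tau^\beta$-graph is not a machine: $M$ cannot recognize $\tau^\beta$-nodes, and the graph leaks to $\perp$. So $\BPCL=\CL$ (\autoref{thm:old_collapses}) cannot be invoked as a black box. What must be re-verified is the structural disjointness inside its proof, which is precisely the argument above.

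Using \autoref{lemmvianisansprggeneral} with the chunk loop also fails. A chunk on which $\mathcal{F}^L$ returns $\perp$ is neither decided nor compressed. For an adversarial tape, such as all chunks identical with $\mathcal{F}^L$ outputting $\perp$ on that content, you get neither an answer nor any freed space. The $2^{-4s}$ bound only controls a uniformly random $\tau$. That is why the paper obtains $\AVGtauLCL{o(1)}$ from it (\autoref{epsdeltaclassinavgclass}), not $\CL$.

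As a sanity check, neither fallback uses the exponential smallness of $\delta$ in any essential way. If either worked, it would already give $\BPepsCdeltL{\delta}{\epsilon}_{\low}=\CL$ for constant $\delta<2\epsilon$, which the paper explicitly leaves open.
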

\begin{proof}
The reverse direction follows by definition, while we defer the forward direction to \autoref{appexposmall}.
\end{proof}

Although for the case of $\BPepsCdeltL{\delta}{\frac{1}{2}}$, we cannot say anything in the time-unrestricted case (it falls between $\CL$ and $\BPepsCdeltL{\delta}{\frac{1}{2}}_{\low}$ by definition), in the polynomial-time world, we can show a loss of power even beyond $\BPepsCdeltL{\delta}{\frac{1}{2}}\PT_{\low}$:

\begin{theorem}\label{epsilonhalfcorrolary}
Let $\delta < 1$ be a constant. Then
$$\BPepsCdeltL{\delta}{\frac{1}{2}}\PT=\CL\PT$$
\end{theorem}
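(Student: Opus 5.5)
The reverse inclusion is immediate: a $\CL\PT$ machine is a $\BPepsCdeltL{0}{\frac{1}{2}}\PT$ machine, and it always resets, so it lies in $\BPepsCdeltL{\delta}{\frac{1}{2}}\PT$ for every $\delta$. For the forward inclusion I would not go through \autoref{lemmvianisansprgforcl}; its exploration is driven by Nisan-type generators and seems to only yield $\CL$. Instead I would derandomize directly by choosing a single fixed random string, using the fact that in the zero-error setting $\epsilon=\frac{1}{2}$ the machine answers correctly on \emph{every} random string. The only thing we must avoid is destroying $\tau$.

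Let $N$ run in time $n^d$, use $O(\log n)$ work space and $n^d$ catalytic space, and read at most $n^d$ random bits; we may assume $N$ reads exactly $m=n^d$ bits. For fixed $x$ and $\tau$, call $r\in\{0,1\}^m$ \emph{good} if $N(x,\tau;r)$ resets the tape to $\tau$. By assumption at least a $1-\delta$ fraction of $r$ are good, and on every $r$ the output is $L(x)$.

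The deterministic simulation takes $r$ from a \emph{second} region $\rho$ of the catalytic tape, of length $m$, and runs $N$ on the first region with randomness $\rho$. The resulting map $(\tau,\rho)\mapsto$ final tape is a deterministic poly-time procedure. If $\rho$ is good for $\tau$, the run restores $\tau$ and leaves $\rho$ untouched. If $\rho$ is bad, we must detect this and undo the damage, which is the main obstacle, because a general randomized run cannot be reversed.

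To handle bad $\rho$, I would use compress-or-random. Record the final tape $\tau''$ of the first region together with the poly-time transcript bookkeeping. The configuration graph of the deterministic machine $N(\cdot;\rho)$ is a forest, and by the reversibility procedures of \autoref{reversiblewalk}, run for depth $n^d$, we can traverse the tree of the final configuration in time $2^{O(\log n)}\cdot n^d$ and test whether the root start configuration reached is $\start$ with $\tau$ on the tape. But $\tau$ is unknown, and that is exactly the issue.

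The cleaner route is to enumerate instead of committing to one $\rho$. Consider $k=O(\log n)$ candidate seeds $\rho_1,\dots,\rho_k$ taken from disjoint blocks of catalytic tape. For each $i$, we want to decide whether $\rho_i$ is good for the current $\tau$ without running $N$ forward irreversibly. The idea is a \emph{backward} exploration: from $\acc$ or $\rej$ with $\tau$ on the tape, walk back along the $\rho_i$-determined leveled graph, as in the $y$-DFS traversal described after \autoref{reversiblewalk}, to see whether $\start$ is in the tree. This tree has depth at most $n^d$. Its size is the potential problem. Since each configuration lies in the $\rho_i$-trees of a bounded number of $\tau$'s, in the same way as \autoref{disjointobsv}, an averaging argument as in \autoref{avgsize} shows that for most $(\tau,\rho_i)$ the tree has size $\poly(n)$. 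When the tree is large, we compress $\tau\circ\rho_i$ using the large-graph compression of \cite{CookLiMertzPyne25} and free a bit. When it is small, we check via \countsize\ and the traversal whether $\start$ is a leaf.

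If $\start$ is reached backward from $\acc$, output accept, and symmetrically for $\rej$; no forward run is ever made, so the tape is always restored. Since at least a $1-\delta$ fraction of seeds are good, after enough blocks we either find a good seed or free $\poly(n)$ bits, and then we brute-force in $\BPTIME\subseteq$ space, as in the proof of \autoref{corroepsdeltagoodpiscl}.

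The remaining obstacle is the time bound: $\CL\PT$ requires polynomial time, but brute force after compression is not polynomial time. I would instead argue that over random $\rho$ the small-tree case dominates, and use $\poly(n)$ independent blocks so that the probability of never finding a good small seed is negligible. The worst case over $\tau$ still requires a deterministic guarantee, and I expect this step to need the same Nisan-hash/compression machinery as \autoref{lemmvianisansprgforcl}, adapted so that the final fallback is polynomial time rather than brute force.
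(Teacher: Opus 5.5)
Your proposal has a genuine gap, and it starts with a misdiagnosis. You set aside \autoref{lemmvianisansprgforcl} because the argument of \autoref{corroepsdeltagoodpiscl} only reaches $\CL$. But the subroutines $\mathcal{F}^L,\mathcal{R}^L$ themselves run in time $2^{O(s)}=\poly(n)$, and in the polynomial-time setting $\mathcal{F}^L$ never outputs $\perp$ (the lemma applies since $\delta<1=2\epsilon$). The only super-polynomial step in \autoref{corroepsdeltagoodpiscl} is the fallback that simulates a bounded-error machine by brute force in the freed space.

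With $\epsilon=\frac{1}{2}$ that fallback disappears, by exactly the observation you make and then never use: $N$ outputs $L(x)$ on \emph{every} random string. So the paper simply reruns the chunk argument. Apply $\mathcal{F}^L$ to $n^{10d}$ polynomial-size chunks. If some chunk decides $x$, that answer is correct. Otherwise every chunk has freed its last bit. Then run $N$ once, deterministically, on the all-zero random string, using the freed bits as its catalytic tape (it may destroy them). This is correct and takes polynomial time. Finally zero those bits and restore every chunk with $\mathcal{R}^L$. Your last paragraph, which asks for a polynomial-time fallback and expects to need the Nisan-hash machinery again, is asking for something already in hand.

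Your own route also does not close. The fact that a $1-\delta$ fraction of random strings is good is a statement about uniformly random $r$. Your candidates $\rho_1,\dots,\rho_k$ are read from the catalytic tape, which is adversarial. Every block may be a bad string for the current $\tau$ with a small tree. In that case your procedure neither finds $\start$ nor triggers a compression, no matter how many blocks you use. So the step ``after enough blocks we either find a good seed or free $\poly(n)$ bits'' is unjustified.

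Compressing bad blocks does not rescue it. With $\delta$ constant, bad strings form a constant fraction of $\{0,1\}^{n^d}$, so ranking one among them saves at most $\log\frac{1}{\delta}$ bits. Computing that rank also requires enumerating and testing exponentially many strings of length $n^d$.

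This is why \autoref{lemmvianisansprgforcl} never trusts a single catalytic block as randomness. It enumerates \emph{all} seeds of a short-seed generator whose hash functions live on the catalytic tape. A bad hash function is compressible because bad ones are only a $2^{-200s}$ fraction, and the $\taubeta$-graph analysis handles the runs that fail to reset.
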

\begin{proof}
Clearly $\BPepsCdeltL{\delta}{\frac{1}{2}}\PT \supseteq \CL\PT$ by definition. For the forward direction we closely follow the proof of \autoref{corroepsdeltagoodpiscl}. Let $L \in \BPepsCdeltL{\delta}{\frac{1}{2}}\PT$, and let the corresponding machine be $N$. Given input $x$, we run subroutine $\mathcal{F}^L$ from \autoref{lemmvianisansprgforcl} on $n^{10d}$ different chunks of polynomial-sized catalytic tapes, where $d\in\mathbb{N}$ is such that $N$ runs in time $n^d$. If $\mathcal{F}^L$ correctly decides $x$ for any chunk we output the result; otherwise, $\mathcal{F}^L$ frees up the last bit of each chunk, and hence $n^{10d}$ bits, on the catalytic tape. In this case, we simulate $N$ using the freed-up space on the all-zero random string. As $N$ runs in polynomial time and always correctly decides $x$, we can also answer correctly in polynomial time. Finally, before outputting the result, we run $\mathcal{R}^L$ from \autoref{lemmvianisansprgforcl} to reset all the catalytic tape chunks. $\mathcal{F}^L$ and $\mathcal{R}^L$ use $O(\log{n})$ work space and run in polynomial time, while $N$ runs in polynomial time and uses the compressed space.
\end{proof}

\subsection{Afterword: tradeoffs between $\epsilon$ and $\delta$}

It may seem strange that our choice of $\epsilon$ and $\delta$ is largely irrelevant as long as we understand where they fit in the above trichotomy. To close, we note that for the \emph{one-sided error} regime, we can precisely scale both $\epsilon$ and $\delta$ in tandem, with a barrier to reaching $\epsilon = \frac{1}{2}$

\begin{lemma}\label{lem:tradeoff_eps-delta}
    Define $\RepsCdeltSpace{s}{c}{\delta}{\epsilon}$ to be the class of languages $L$ for which there exists a randomized catalytic machine $M$ with $s$ bits of work space, $c$ bits of catalytic space, and access a string $r$ of $2^c$ random bits in a read-only read-once fashion, such that 1) if $L(x) = 1$ then $M(x)$ outputs 1; 2) if $L(x) = 0$ then $M(x)$ outputs 0 with probability $\epsilon$ over $r$; 3) $M$ resets the catalytic tape with probability $1-\delta$ over $r$.

    Then for all $\epsilon, \delta$,
    $$\RepsCdeltSpace{s}{c}{\delta}{\epsilon} \subseteq \RepsCdeltSpace{s}{c}{\delta/2}{\epsilon/2}$$
    $$\RepsCdeltSpace{s}{c}{\delta}{\epsilon} \subseteq \RepsCdeltSpace{s}{c}{2\delta - \delta^2}{2\epsilon-\epsilon^2}$$
\end{lemma}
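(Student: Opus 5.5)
The plan is to prove both inclusions by simple black-box transformations of a machine $M$ witnessing $L \in \RepsCdeltSpace{s}{c}{\delta}{\epsilon}$. Each transformation is built so that completeness (always outputting $1$ when $L(x)=1$) is preserved exactly, and the two parameters then scale in the same way.

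\emph{First inclusion (halving).} The new machine $M'$ first reads one fresh random bit. If the bit is $1$, $M'$ outputs $1$ immediately without touching the catalytic tape. Otherwise it runs $M$ on the remaining randomness and outputs whatever $M$ outputs. The three conditions are checked directly:
\begin{itemize}
    \item If $L(x)=1$, both branches output $1$.
    \item If $L(x)=0$, $M'$ outputs $0$ only on the second branch, so with probability exactly $\epsilon/2$.
    \item The tape can fail to be reset only on the second branch, so the failure probability is at most $\delta/2$.
\end{itemize}
The overhead is one random bit and $O(1)$ work space.

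\emph{Second inclusion (amplification).} The new machine $M''$ runs $M$ twice in sequence on disjoint segments of the random string, storing the first output bit on the work tape. The second run starts from whatever tape the first run left behind. $M''$ outputs $0$ iff at least one run output $0$. The conditions are checked as follows:
\begin{itemize}
    \item If $L(x)=1$, both runs output $1$, so $M''$ outputs $1$.
    \item Reset failure: the guarantee for $M$ holds for \emph{every} initial tape, and the two runs use independent randomness. Conditioned on the first run restoring $\tau$, the second run starts on $\tau$ and restores it with probability at least $1-\delta$. Hence both restore with probability at least $(1-\delta)^2$, so the failure probability is at most $1-(1-\delta)^2 = 2\delta-\delta^2$.
    \item If $L(x)=0$: by the same conditioning, each run outputs $0$ with probability $\epsilon$ on whatever tape it starts from, since the guarantee is for every initial tape. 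So the probability that neither run outputs $0$ is $(1-\epsilon)^2$, and $M''$ outputs $0$ with probability $2\epsilon-\epsilon^2$.
\end{itemize}

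The only step needing care is bookkeeping in the model rather than probability. Read-once randomness is respected because the second run simply continues reading fresh bits. The random string is bounded by $2^c$, so two runs might need up to twice the budget. I would handle this by noting that $M$ can be assumed to consume at most half of its allotted random bits, or by absorbing a constant change in $c$ via the paper's convention of treating such constant-factor slack as immaterial. Work space increases only by $O(1)$ bits, to store the first output and the phase.
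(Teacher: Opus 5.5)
Your proposal is correct and follows the paper's proof. For the halving inclusion, both flip a fair coin and output $1$ on heads without touching the tape; for the amplification inclusion, both run $M$ twice in sequence and obtain the same $1-(1-\delta)^2 = 2\delta-\delta^2$ reset bound. Your combination rule (output $0$ iff some run outputs $0$) is the one that actually gives $2\epsilon-\epsilon^2$; the paper's text states the rule as ``output $1$ if either run outputs $1$'' and writes $(1-\epsilon)^2$ for $\Pr[M_2(x)=0]$, which is a slip. You also note that $M$'s guarantees hold for every starting tape, so the second run is unaffected by a corrupted tape, which the paper leaves implicit.
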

\begin{proof}
    Let $M$ be a $\RepsCdeltSpace{s}{c}{\delta}{\epsilon}$ machine. First, our $\RepsCdeltSpace{s}{c}{\delta/2}{\epsilon/2}$ machine $M_{\frac{1}{2}}$ flips a fair coin, and if it returns heads we output 1, otherwise we run $M$. Clearly 1) remains fulfilled since we never output 0 when $M$ outputs 1. Since with probability $\frac{1}{2}$ we output 1 and do not destroy the tape, for 2) we have $\Pr[M_{\frac{1}{2}}(x) = 0] = (\frac{1}{2}) \cdot \epsilon$, while for 3) our probability of destroying the tape is at most $(\frac{1}{2}) \cdot \delta$.

    Second, our $\RepsCdeltSpace{s}{c}{2\delta - \delta^2}{2\epsilon - \epsilon^2}$ machine $M_2$ runs $M$ twice and outputs 1 if either run outputs 1, otherwise we output 0. Again 1) remains fulfilled by definition. For 2), given $x$ such that $L(x) = 0$ we have
    $$\Pr[M_2(x) = 0] = (1-\epsilon)^2 = 1 - 2\epsilon + \epsilon^2$$
    while for 3) we have
    $$\Pr[M_2(x) \mbox{ destroys } \tau] = \delta + (1-\delta)\delta = 2\delta - \delta^2$$
    which completes the lemma.
\end{proof}

While the role of $\epsilon$ is different in $\BPepsCdeltSpace{s}{c}{\delta}{\epsilon}$ and hence we do not have the same tradeoff, intuitively \autoref{lem:tradeoff_eps-delta} shows that we have flexibility in choosing $\delta$ and $\epsilon$ but we cannot cross the boundary between $\delta < 2\epsilon$ and $\delta > 2\epsilon$, and that $\epsilon = 1$ is still not reachable in the latter case.
\section{Model 2: $\AVGrBPLCSPACE{s}{c}{e}$}
\label{sec:avgr}

In this section, we analyze the class $\AVGrBPLCSPACE{s}{c}{e}$ and ultimately show a loose relationship to the class $\BPepsCdeltSpace{s}{c}{\delta}{\epsilon}$. 

Recall that an $\AVGrBPLCSPACE{s}{c}{e}$ machine decides the input correctly with a probability of $\ge \frac{2}{3}$ and makes $e$ errors on average on the catalytic tape over the random bits of the machine, and this holds for \textit{every} input and initial catalytic setting. We can trivially boost the factor of $\frac{2}{3}$ to an arbitrary constant $< 1$, at the expense of worsening the work space and the parameter $e$ by a constant factor:

\begin{lemma}\label{boostingavgr}
Let $L \in \AVGrBPLCSPACE{s}{c}{e}$. Then, for every constant $\frac{2}{3} \le q < 1$, there exists an $\AVGrBPLCSPACE{O(s)}{c}{O(e)}$ machine $N$, for $L$, such that for \textit{every} input and initial catalytic tape $N$ correctly decides the input with probability $\ge q$.
\end{lemma}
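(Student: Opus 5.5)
The plan is standard majority-vote amplification, run sequentially. Let $M$ be the $\AVGrBPLCSPACE{s}{c}{e}$ machine for $L$. By a Chernoff (or Chebyshev) bound there is a constant $k = k(q)$, which we take odd, such that the majority of $k$ independent trials, each correct with probability at least $\frac{2}{3}$, is correct with probability at least $q$.

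The machine $N$ runs $M$ on input $x$ exactly $k$ times in sequence. Each run uses fresh random bits from the read-once random string, since reads are sequential and we never need to reread. $N$ uses the same catalytic tape throughout, keeping a counter of accepting runs on its work tape ($O(\log k)=O(1)$ bits). It outputs the majority answer. The work space is $O(s)$; even if we conservatively reserve a fresh copy of $M$'s work area, this is at most $ks = O(s)$. The random string has length $2^c$, and we assume each run of $M$ uses a disjoint segment of it. If some care is needed here, we can note that $M$ reads its randomness once and sequentially, so successive runs simply continue reading where the previous one stopped.

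The subtle point is correctness and errors when the catalytic tape is not restored. Run $i$ starts from whatever tape $\tau_{i-1}$ run $i-1$ left behind, with $\tau_0=\tau$. The guarantees of $M$ hold for \emph{every} initial tape, and the randomness of run $i$ is independent of the earlier randomness. Conditioned on $\tau_{i-1}$, run $i$ is therefore correct with probability at least $\frac{2}{3}$, independently of the earlier outcomes. The correctness indicators thus dominate independent Bernoulli$(\frac{2}{3})$ variables, so the majority is correct with probability at least $q$.

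For the errors, the triangle inequality gives $d(\tau,\tau_k)\le \sum_{i=1}^k d(\tau_{i-1},\tau_i)$. By the guarantee applied to the fixed tape $\tau_{i-1}$, each term satisfies $\mathbb{E}[d(\tau_{i-1},\tau_i)\mid \tau_{i-1}]\le e$. Taking expectations and using linearity, $\mathbb{E}[d(\tau,\tau_k)]\le ke = O(e)$.

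The only step needing care is the conditioning argument: both guarantees of $M$ must hold uniformly over initial tapes, so that an adaptively corrupted starting tape does not break them. This uniformity is exactly what the definition provides, so I expect no real obstacle.
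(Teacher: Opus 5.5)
Your proposal is correct and follows the paper's proof. Both run $M$ a constant number $k$ of times on the same work and catalytic tapes, output the majority (Chernoff bound), and bound the expected number of resetting errors by $ke$ using the triangle inequality and linearity of expectation. Your explicit conditioning on the tape $\tau_{i-1}$ left by earlier runs, using that both guarantees of $M$ hold for every initial tape, spells out the step the paper leaves implicit when it says the runs are ``independent.''
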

\begin{proof}
Let $M$ be the $\AVGrBPLCSPACE{s}{c}{e}$ machine for $L$. Our machine $N$ runs $M$ $k = \lceil 300 \log{\frac{1}{(1-q)}} \rceil$  times independently, using the same $s$ bits of free memory and $c$ bits of catalytic tape, and takes the majority of the outputs. By a Chernoff bound, $N$ incorrectly decides the input with a probability of $\le 1-q$, and by linearity of expectation and the triangle inequality, $M$ makes at most $k \cdot e$ errors when resetting the catalytic tape.
%
\end{proof}

\noindent
We will utilize the following error-correction scheme used by Folkertsma et al.~\cite{FolkertsmaMertzSpeelmanTupker25} in the context of lossy catalytic computation.

\begin{lemma}[Folkertsma et al. \cite{FolkertsmaMertzSpeelmanTupker25}]\label{lem:ecc}
    There exists a compression scheme $(\text{Enc}_{\text{BCH}}, \text{Dec}_{\text{BCH}})$ such that:
    \begin{itemize}
        \item \textbf{Encoding:} \(\text{Enc}_{\text{BCH}}\) takes as input a string \( S \) of length \( c \), plus an additional \((2e + 1) \lceil \log(c + e) \rceil\) bits initialized to zero, and outputs a codeword \( S_{\text{enc}} \):
        \[
        S + [0]^{(2e+1)\lceil \log(c+e) \rceil} 
        \xrightarrow{\text{Enc}_{\text{BCH}}} 
        S_{\text{enc}}.
        \]
        Furthermore, all outputs \( S_{\text{enc}} \) generated this way have minimum distance \(\delta := 2e + 1\) from one another.
        
        \item \textbf{Decoding:} \(\text{Dec}_{\text{BCH}}\) takes as input a string \( S'_{\text{enc}} \) of length 
        \[
        c + (2e+1) \lceil \log(c+e) \rceil,
        \]
        with the promise that there exists a string \( S \) of length \( c \) such that 
        \[
        \text{Enc}_{\text{BCH}}\bigl(S + [0]^{(2e+1)\lceil \log(c+e)\rceil}\bigr)
        \]
        differs from \( S'_{\text{enc}} \) in at most \(\delta/2 - 1 = e\) locations, and outputs this string \( S \):
        \[
        S'_{\text{enc}} \xrightarrow{\text{Dec}_{\text{BCH}}} S + [0]^{(2e+1)\lceil \log(c+e)\rceil}.
        \]
        Furthermore, both \(\text{Enc}_{\text{BCH}}\) and \(\text{Dec}_{\text{BCH}}\) can be computed in space \( O(e \log c) \).
    \end{itemize}
\end{lemma}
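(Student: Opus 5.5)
The plan is to instantiate $(\text{Enc}_{\text{BCH}},\text{Dec}_{\text{BCH}})$ with a \emph{systematic} narrow-sense binary BCH code of designed distance $2e+1$, with all computations done in place on the catalytic tape. Let $m=\lceil\log(c+e)\rceil+O(1)$, so that the block length $N=c+(2e+1)\lceil\log(c+e)\rceil$ satisfies $N\le 2^m-1$ (adjusting the constant as needed). Work over $\mathbb{F}_{2^m}$ with a primitive element $\alpha$, fixing an irreducible polynomial of degree $m$ once; it can be found by brute force in $O(m)$ space. A word $w\in\{0,1\}^N$ is a codeword iff $w(\alpha^j)=0$ for $j=1,\dots,2e$, where $w$ is read as a polynomial. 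The generator $g(x)$ is the lcm of the minimal polynomials of $\alpha,\dots,\alpha^{2e}$, and since minimal polynomials of conjugates coincide, $\deg g\le em\le(2e+1)\lceil\log(c+e)\rceil$. The BCH bound then gives minimum distance $\ge 2e+1$.

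\textbf{Encoding.} Put $S$ in the message coordinates. Compute the redundancy $S(x)\,x^{\deg g}\bmod g(x)$ by streaming over the bits of $S$ and keeping a running remainder of $\deg g=O(e\log c)$ bits. Write it into the zero-initialized extra cells, and pad the remaining extra cells, if any, with $0$. Distinct $S$ give distinct codewords, which are therefore at distance $\ge 2e+1$.

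\textbf{Decoding.} Stream over $S'_{\text{enc}}$ and accumulate the syndromes $s_j=S'_{\text{enc}}(\alpha^j)$ for $j\le 2e$, which takes $2e\cdot m=O(e\log c)$ bits. Run Berlekamp--Massey on these syndromes to obtain the error-locator polynomial $\Lambda$ of degree $\le e$, which takes $O(e)$ field elements of working storage. Then do a Chien search: for each position $i$, test $\Lambda(\alpha^{-i})=0$ and flip bit $i$ in place if it holds. Since at most $e$ errors occurred, this recovers the codeword exactly. Finally, recompute the redundancy from the (now correct) message part and XOR it into the redundancy cells, which restores them to $0^{(2e+1)\lceil\log(c+e)\rceil}$. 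This is exactly the required output $S+[0]^{\dots}$.

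The main obstacle is computing $g$ within $O(e\log c)$ space. Multiplying minimal polynomials naively keeps intermediate products over $\mathbb{F}_{2^m}$ of size $\Theta(em\cdot m)$. My first attempt would be to produce each minimal polynomial $\mu_j$ separately, by multiplying $\prod_k(x-\alpha^{j2^k})$ coefficient by coefficient and recomputing on demand rather than storing products. The resulting coefficients lie in $\mathbb{F}_2$, so each $\mu_j$ costs only $m$ bits to store. I would then take the product of the distinct $\mu_j$ over $\mathbb{F}_2$, keeping only the $\le em$-bit accumulator.

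If that still overshoots the budget, the fallback avoids $g$ altogether. Choose $em$ redundancy positions whose columns in the $\mathbb{F}_2$-expanded parity-check matrix $H$ are independent. Then solve $H w^\top=0$ for those bits, where the right-hand side is the syndrome of $S$, which is already computable in $O(e\log c)$ space. This solve must itself be carried out in $O(e\log c)$ space, and I expect that bookkeeping, not the coding theory, to be the fiddly part.
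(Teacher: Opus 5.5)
Your proposal is correct and follows essentially the same route as the source. The paper gives no proof of its own and imports the lemma from Folkertsma et al., where (as the name $\text{Enc}_{\text{BCH}}$ indicates) it is a shortened systematic binary BCH code of designed distance $2e+1$ with syndrome decoding.

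The step you flag as the main obstacle can be closed cleanly, without your linear-algebra fallback (Gaussian elimination there would cost $\Theta((em)^2)$ bits). The minimal polynomial $\mu_j$ is the lowest-degree monic binary polynomial vanishing at $\alpha^j$. You can therefore find it by enumerating the $2^{m+1}$ candidates of degree at most $m$ and evaluating each at $\alpha^j$, all in $O(m)$ space and polynomial time. Equivalently, each coefficient of $\mu_j$ is an elementary symmetric sum over the cyclotomic coset, computable by enumerating subsets. You then keep only the $O(em)$-bit accumulator for $g$ and multiply into it in place. A naive recursive "recompute on demand" of the partial products would instead use $\Theta(m^2)$ bits, which exceeds the $O(e\log c)$ budget when $e=o(\log c)$.

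There is also one small parameter slip. When $e\gg c/\log c$, taking $m=\lceil\log(c+e)\rceil+O(1)$ does not give $N\le 2^m-1$; you need $m=\lceil\log(c+e)\rceil+O(\log\log c)$. This still fits the budget, because then $em\le 2e\lceil\log(c+e)\rceil<(2e+1)\lceil\log(c+e)\rceil$.
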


\subsection{Relationship to $\BPepsCdeltSpace{s}{c}{\delta}{\epsilon}_{\low}$}

We now establish a connection between $\AVGrBPLCSPACE{s}{c}{e}$ and $\BPepsCdeltSpace{O(s+e\log{c})}{c}{\delta}{\epsilon}_{\low}$. In one direction, we can use any $\delta$ and $\epsilon$ in the low error regime.

\begin{theorem}\label{averageforallinepsdeltaclass}
Let $\delta,\epsilon$ be constants such that $\delta < 2\epsilon < 1$. Then
$$\AVGrBPLCSPACE{s}{c}{e} \subseteq  \BPepsCdeltSpace{O(s+e\log{c})}{c}{\delta}{\epsilon}$$
\end{theorem}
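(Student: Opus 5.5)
The plan is to simulate the $\AVGrBPLCSPACE{s}{c}{e}$ machine $M$ and add the error-correcting code of \autoref{lem:ecc}, so that a run with few errors is repaired exactly, and the unrepaired runs are charged to the destruction probability $\delta$. Markov's inequality on the expected error count turns ``$e$ errors in expectation'' into ``more than $Ke$ errors with probability at most $1/K$''. The encoding's redundancy of $O(Ke\log c)$ bits must sit in work memory rather than on the catalytic tape, since the simulation has to keep catalytic space exactly $c$. This is where the work space $O(s+e\log c)$ comes from.

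The steps, in order, are as follows.

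\begin{enumerate}
\item Apply \autoref{boostingavgr} to obtain $N$ with success probability $q$ and expected errors $e' = O(e)$. Choose $q = 1 - \eta$ with $\eta$ a small constant.
\item Set $K$ a large constant, so that $\Pr_r[d(\tau,\tau') > Ke'] \le 1/K$.
\item Before simulation, compute the BCH redundancy of $\tau$ with distance parameter $Ke'$. By \autoref{lem:ecc} this is $(2Ke'+1)\lceil\log(c+Ke')\rceil = O(e\log c)$ bits, which we compute in space $O(e\log c)$ and store on the work tape. Using a systematic BCH code, the codeword is $\tau$ followed by parity bits, so we can store only the parity part.
\item Run $N$ using the catalytic tape and $O(s)$ work space, and record its answer.
\item Run $\text{Dec}_{\text{BCH}}$ on (current tape, stored parity) and write back the decoded string, with space $O(e\log c)$. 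Whenever $N$ made at most $Ke'$ errors, decoding returns exactly $\tau$.
\end{enumerate}

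Steps 3 and 5 need care, because $\text{Dec}_{\text{BCH}}$ must work in place without extra catalytic room. I would appeal to the space bound in \autoref{lem:ecc}: syndromes are computed by streaming over the tape, and at most $Ke'$ error positions are located and then flipped.

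The failure to reset then has probability at most $1/K$. One subtlety is that if decoding is attempted with more than $Ke'$ errors it may miscorrect, but that run is already counted as a failure, so this is harmless. The correctness probability is at least $1-\eta$.

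The parameter conditions need $1/K \le \delta$ and $1-\eta \ge \frac12+\epsilon$. Both are satisfiable by constants because $2\epsilon<1$, i.e. $\frac12+\epsilon<1$. We take $K = \lceil 1/\delta\rceil$ when $\delta>0$. If $\delta = 0$, Markov gives nothing and we would need exactly zero errors. Since the theorem only requires $\delta<2\epsilon$, I would treat $\delta>0$ as the intended case, and otherwise argue that we may raise $\delta$ to a tiny positive constant still below $2\epsilon$. The inclusion into the $\delta=0$ class would then need a separate argument, which I expect is not intended.

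The main obstacle is the systematic, in-place use of the BCH scheme: \autoref{lem:ecc} as stated puts the extra zero bits alongside $S$ on the tape. I would first try a systematic BCH encoding whose parity bits are computed in a streaming fashion, so they live on the work tape. Failing that, I would reserve the last $O(e\log c)$ catalytic bits by copying them to the work tape, which costs the same work space, use those cells for the encoding, and restore them at the end.
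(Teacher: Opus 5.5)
Your proposal is correct and follows the paper's proof: boost via \autoref{boostingavgr}, use Markov to bound the errors by $e'/\delta$ with probability at least $1-\delta$, BCH-encode with the $O(e\log c)$ redundancy held in work memory, simulate, and decode. The paper avoids your systematic-encoding concern by writing the first $c$ bits of the codeword on the catalytic tape and keeping the remaining bits in work space; it also divides by $\delta$ exactly as you do, so it makes the same implicit assumption $\delta>0$.
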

\begin{proof}
Let \( L \in \AVGrBPLCSPACE{s}{c}{e} \). By assumption, \( \frac{1}{2} + \epsilon < 1 \), and thus using \autoref{boostingavgr}, there exists a \( \AVGrBPLCSPACE{O(s)}{c}{O(e)} \) machine \( N \) for \( L \) that, for any input \( x \) and initial catalytic setting \( \tau \), decides the input correctly with probability \( \ge \frac{1}{2} + \epsilon \). Moreover, \( N \) makes at most \( e' = O(e) \) errors on the catalytic tape on average (where the average is over its random bits), for any \( \tau \). Therefore, by Markov's inequality, the probability that \( N \) makes more than \( \frac{e'}{\delta} \) errors is \( \le \delta \).

We will simulate $N$ by reusing the idea from \cite{FolkertsmaMertzSpeelmanTupker25}. Using $c$ bits of catalytic space and $O(\frac{e'}{\delta}\log{c}) = O(e\log{c})$ work space, we run the encoding algorithm $\text{Enc}_{\text{BCH}}$ from \autoref{lem:ecc}. This gives us a codeword $\tau' \circ w$, where $\tau'$ is written on the catalytic tape, and $w$ (which takes $O(\frac{e'}{\delta}\log{c}$ bits)) is stored in the work space. We simulate $N$ for input $x$ and catalytic contents $\tau'$ using an additional $O(s)$ work space (and read-once randomness). Before outputting the result, we run the algorithm $\text{Dec}_{\text{BCH}}$ from  \autoref{lem:ecc} on our final catalytic tape $\tau''\circ w$, and then we output the result of the computation.

Since, with probability $\ge 1-\delta$, $N$ makes $\le \frac{e'}{\delta}$ errors; with probability $\ge 1-\delta$, the Hamming distance is $|\tau'' - \tau'| \le \frac{e'}{\delta}$. \autoref{lem:ecc} tells us that we can correct up to $\frac{e'}{\delta}$ errors. Thus, with a probability of $\ge 1-\delta$, running $\text{Dec}_{\text{BCH}}$ on $\tau''\circ w$ resets the tape back to $\tau$. As we use $O(s+e\log{c})$ work space in total and $c$ bits of catalytic space, the theorem follows.
\end{proof}

\begin{corollary}\label{everyavginepsdelcorr}
Let $\delta,\epsilon$ be constants such that $\delta < 2\epsilon < 1$. Then
$$\AVGrBPLCL{O(1)} \subseteq \BPepsCdeltL{\delta}{\epsilon}$$
\end{corollary}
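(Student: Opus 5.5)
This follows by instantiating \autoref{averageforallinepsdeltaclass} in the logspace regime. Take $L \in \AVGrBPLCL{O(1)}$. By definition there is a constant $k$ with $L \in \AVGrBPLCSPACE{k\log n}{n^k}{e}$ for some constant $e$. So we may set $s = k\log n$, $c = n^k$, and treat $e$ as a constant. First we check that these parameters meet the standing assumptions of \autoref{sec:prelims}: $s \ge \log n$, $s \le c \le 2^s$, and $e \le c$. They do, trivially for large $n$.

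Now fix constants $\delta,\epsilon$ with $\delta < 2\epsilon < 1$ and apply \autoref{averageforallinepsdeltaclass}. This gives
$$L \in \BPepsCdeltSpace{O(s + e\log c)}{c}{\delta}{\epsilon}.$$
Since $e = O(1)$ and $\log c = k\log n$, the work space is $O(k\log n + e k \log n) = O(\log n)$. The catalytic space is still $n^k$. Hence $L \in \BPepsCdeltSpace{k'\log n}{n^{k'}}{\delta}{\epsilon}$ for some constant $k'$, which is exactly the condition for membership in $\BPepsCdeltL{\delta}{\epsilon}$.

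The only point needing care is that the constants hidden in the theorem stay independent of $n$. These constants come from the boosting step (\autoref{boostingavgr}), which multiplies $e$ by a factor depending only on $\epsilon$. They also come from the Markov threshold $e'/\delta$ used to size the BCH redundancy in \autoref{lem:ecc}. Since $\delta$ and $\epsilon$ are fixed constants, both contributions are constants, and the work-space bound $O(e\log c)$ becomes $O(\log n)$. No genuine obstacle remains beyond this bookkeeping.
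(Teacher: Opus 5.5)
Your proposal is correct and follows the paper's route: the corollary is just \autoref{averageforallinepsdeltaclass} instantiated with $s = O(\log n)$, $c = n^{O(1)}$ and $e = O(1)$, so the work space $O(s + e\log c)$ becomes $O(\log n)$. One caveat on your bookkeeping remark: the Markov threshold $e'/\delta$ is a constant only when $\delta > 0$, an assumption the paper's proof of the theorem also makes tacitly (with $\delta = 0$ the statement would give $\AVGrBPLCL{O(1)} \subseteq \BPCL = \CL$, which the paper lists as open).
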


\noindent
For the reverse direction, we can only capture extremely small resetting errors $\delta$, thus leading to a gap in our characterization but still establishing a connection.

\begin{theorem}
For any constant $\epsilon < 1/2$,
$$\BPepsCdeltSpace{s}{c}{\frac{1}{c}}{\epsilon} \subseteq \AVGrBPLCSPACE{O(s)}{c}{O(1)}$$
\end{theorem}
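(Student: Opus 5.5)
Take a $\BPepsCdeltSpace{s}{c}{\frac{1}{c}}{\epsilon}$ machine $M$ for $L$ and use it essentially unchanged as the $\AVGrBPLCSPACE{O(s)}{c}{O(1)}$ machine. The error bound is nearly immediate. On any input $x$ and initial tape $\tau$, $M$ halts with $\tau$ restored except with probability at most $\frac{1}{c}$. On that bad event the final tape $\tau'$ has length $c$, so $d(\tau,\tau')\le c$. Hence
$$\mathbb{E}_r[d(\tau,\tau')] \le \Bigl(1-\tfrac{1}{c}\Bigr)\cdot 0 + \tfrac{1}{c}\cdot c = 1,$$
and this holds for every $x$ and every $\tau$, which is what the definition of $\AVGrBPLCSPACE{s}{c}{e}$ asks for.

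The remaining issue is correctness. $M$ is only guaranteed to be correct with probability $\frac{1}{2}+\epsilon$, while $\AVGrBPLCSPACE{s}{c}{e}$ requires success probability at least $\frac{2}{3}$. If $\epsilon\ge\frac16$ nothing more is needed. Otherwise we amplify, in the same way as \autoref{boostingavgr}. Let $N$ run $M$ independently $k=O(1)$ times, where $k$ depends only on the constant $\epsilon$. The runs are sequential, reuse the same $s$ bits of work space and $c$ bits of catalytic space, and each takes fresh read-once randomness. $N$ keeps a counter of the $k$ answers in $O(\log k)$ extra bits and outputs the majority; by a Chernoff bound this is correct with probability at least $\frac23$.

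For the catalytic tape of $N$, note that each run $i$ starts from whatever tape $\tau_{i-1}$ the previous run left behind, and the guarantee of $M$ applies to every starting tape. So run $i$ changes at most $c$ bits, and does so with probability at most $\frac{1}{c}$. By the triangle inequality, $d(\tau,\tau_k)\le\sum_i d(\tau_{i-1},\tau_i)$. By linearity of expectation,
$$\mathbb{E}[d(\tau,\tau_k)] \le \sum_{i=1}^{k}\mathbb{E}[d(\tau_{i-1},\tau_i)] \le k\cdot\tfrac{1}{c}\cdot c = k = O(1).$$
The work space is $s+O(1)=O(s)$ and the catalytic space is still $c$.

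I do not expect a real obstacle here. The one point to check is the sequential composition: once an earlier run has corrupted the tape, the later runs start from some $\tau_{i-1}\neq\tau$. The reset guarantee of $M$ is stated for all initial tapes, so each run still contributes at most $c\cdot\frac1c=1$ expected error relative to its own start, and the triangle inequality keeps the total at $O(1)$ errors relative to $\tau$.
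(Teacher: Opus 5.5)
Your proposal is correct and follows essentially the same argument as the paper: amplify to success probability $\frac{2}{3}$ by majority over a constant number $k_\epsilon$ of sequential runs reusing the same catalytic tape, then bound the expected number of errors by $c\cdot\frac{k_\epsilon}{c}=k_\epsilon$. The only cosmetic difference is in how the runs are combined. The paper union-bounds the probability that the tape is not restored over the runs and then charges $c$ errors on that event, whereas you apply the triangle inequality and linearity of expectation run by run; both give the same $O(1)$ bound.
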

\begin{proof}
Let $L \in \BPepsCdeltSpace{s}{c}{\frac{1}{c}}{\epsilon}$, and let $N$ be the machine that decides the input correctly with probability $ \frac{1}{2}+\epsilon$ using $s$ bits of work space and $c$ bits of catalytic space, such that $N$ does not reset the catalytic tape with probability $\frac{1}{c}$. As in \autoref{boostingavgr}, we run $N$ a constant  number of times to increase the probability of deciding correctly to at least $\frac{2}{3}$, re-using the same $c$ bits of catalytic space each time. Let's say this constant (which depends only on $\epsilon$) is $k_{\epsilon}$. By a union bound, the probability that we do not reset the catalytic tape is $\le \frac{k_{\epsilon}}{c}$, and so on average over the read-once randomness we make at most $c\cdot \frac{k_{\epsilon}}{c} \le k_{\epsilon}$ errors on the catalytic tape.
\end{proof}

\begin{corollary}\label{inversepolyinacgeverytimecorr}
Let $\epsilon < 1/2$ be a constant. Then,
$$\BPepsCdeltL{\frac{1}{poly(n)}}{\epsilon} \subseteq \AVGrBPLCL{O(1)}$$
\end{corollary}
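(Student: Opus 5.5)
This corollary should follow by instantiating the preceding theorem at the logspace parameters and then absorbing the catalytic size. Fix a constant $\epsilon < 1/2$ and a language $L \in \BPepsCdeltL{\frac{1}{poly(n)}}{\epsilon}$. Then there are constants $k, a$ and a machine $N$ with work space $k\log n$ and catalytic space $c = n^k$ that fails to reset the catalytic tape with probability at most $\delta = n^{-a}$. I read the corollary, as the preceding discussion of $\BPepsCdeltL{\frac{1}{poly(n)}}{\frac12}$ suggests, with the polynomial chosen sufficiently large. The idea is to reduce to the case $\delta \le \frac{1}{c}$ covered by the theorem.

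\textbf{Step 1: match $\delta$ to $1/c$.} The theorem is stated for failure probability exactly $1/c$, where $c$ is the catalytic space actually used. If $a \ge k$, then $\delta \le 1/c$ and we are already in the theorem's setting. This is because its proof only uses that the failure probability is at most $1/c$: the union bound then gives at most $k_\epsilon \cdot c \cdot \delta \le k_\epsilon$ expected errors. If instead $a < k$, I would argue directly rather than invoke the theorem. Run $N$ a constant $k_\epsilon$ times to boost correctness to $\frac23$, as in \autoref{boostingavgr}. A failed reset corrupts at most $c = n^k$ bits, so the expected error is at most $k_\epsilon n^{k-a}$. This is not $O(1)$, so the argument genuinely needs $a \ge k$.

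\textbf{Step 2: handle the case $a < k$.} Here the catalytic size cannot be reduced, since $N$ may use all of it. The honest route is to read the hypothesis as ``$\delta$ is an inverse polynomial that is sufficiently small relative to $c$,'' that is, $\delta \le 1/c$. This matches how the paper phrases the zero-error discussion (``for a sufficiently large polynomial''). I would state this convention explicitly and then apply the theorem verbatim with $s = O(\log n)$ and $c = n^{O(1)}$.

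\textbf{Step 3: conclude.} The theorem yields an $\AVGrBPLCSPACE{O(\log n)}{n^k}{O(1)}$ machine, which is exactly membership in $\AVGrBPLCL{O(1)}$. The constant number of repetitions keeps the work space $O(\log n)$. The expected error is $k_\epsilon \cdot c \cdot \delta \le k_\epsilon = O(1)$.

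The main obstacle is Step 2: pinning down that ``$1/\mathrm{poly}(n)$'' must mean at most $1/c$. I do not see a way to make a polynomial with $a < k$ work without further ideas, so I would explicitly fix the parameter convention rather than attempt one.
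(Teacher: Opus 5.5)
Your proposal is correct and follows the paper's route. The corollary is the preceding theorem instantiated at $s=O(\log n)$, $c=n^{O(1)}$: you run $N$ a constant $k_\epsilon$ times, union-bound the reset failures to probability at most $k_\epsilon/c$, and get at most $k_\epsilon$ expected errors. Your reading of $\frac{1}{poly(n)}$ as a failure probability of at most (a constant times) $1/c$, i.e.\ a sufficiently large polynomial, is the convention the paper uses implicitly, since its theorem is stated for $\delta = 1/c$ and smaller $\delta$ is covered by monotonicity.
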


\noindent
According to \autoref{inversepolyinacgeverytimecorr}, the class \(\BPepsCdeltL{\frac{1}{\text{poly}(n)}}{\frac{1}{2}}\) is contained within \(\AVGrBPLCL{O(1)}\), which means we currently do not have a method to de-randomize it. However, with the additional restriction of polynomial time, we can show that, similar to the class \(\BPepsCdeltL{\delta}{\epsilon}\) (as stated in \autoref{corroepsdeltagoodpiscl}), \(\AVGrBPLCL{O(1)}\) is indeed equivalent to \(\CL\).

\begin{theorem}\label{obsvfromcharts1}
Let $\delta,\epsilon$ be constants such that $\delta < 2\epsilon < 1$. Then,
$$\CL = \AVGrBPLCL{O(1)}\PT = \BPepsCdeltL{\delta}{\epsilon}\PT$$
\end{theorem}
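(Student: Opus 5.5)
The plan is to establish a cycle of inclusions,
$$\CL \subseteq \AVGrBPLCL{O(1)}\PT \subseteq \BPepsCdeltL{\delta}{\epsilon}\PT \subseteq \CL,$$
where the last inclusion is exactly \autoref{corroepsdeltagoodpiscl}. So only the first two inclusions need work.

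\textbf{First inclusion.} By \autoref{thm:old_collapses}, $\CL = \LBPCL{O(1)}\PT$. A machine that makes at most $O(1)$ errors on every run trivially makes at most $O(1)$ errors in expectation over its randomness. Moreover it succeeds with probability $\ge 2/3$ and runs in polynomial time. Hence it is an $\AVGrBPLCL{O(1)}\PT$ machine. (Alternatively, one can start from $\CL = \BPCL\PT$ and note that zero errors is a special case.)

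\textbf{Second inclusion.} I would repeat the proof of \autoref{averageforallinepsdeltaclass} and check that it preserves polynomial running time.

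1. Start with $L \in \AVGrBPLCL{O(1)}\PT$.
2. Apply \autoref{boostingavgr} to raise the success probability to $\frac12+\epsilon$. This runs the machine a constant number of times, so time stays polynomial and the expected number of errors $e'$ stays $O(1)$.
3. By Markov's inequality, with probability $\ge 1-\delta$ at most $e'/\delta = O(1)$ errors occur.
4. Wrap the simulation in $\text{Enc}_{\text{BCH}}$ and $\text{Dec}_{\text{BCH}}$ from \autoref{lem:ecc}, using $O(\frac{e'}{\delta}\log c) = O(\log n)$ bits of redundancy stored on the work tape.

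The resulting machine uses $O(\log n)$ work space, polynomial catalytic space, succeeds with probability $\ge \frac12+\epsilon$, and resets with probability $\ge 1-\delta$.

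\textbf{Main obstacle.} The only step needing care is the running time of the BCH encoding and decoding. \autoref{lem:ecc} only bounds their space, by $O(e\log c) = O(\log n)$. But a deterministic procedure running in space $O(\log n)$ with read access to a polynomial-length tape has at most $2^{O(\log n)}\cdot\poly(n)$ configurations, so it halts in polynomial time. The same reasoning applies here, since the codeword is held on the catalytic tape plus $O(\log n)$ work bits. Thus the wrapper adds only polynomial time, and the inclusion into $\BPepsCdeltL{\delta}{\epsilon}\PT$ holds for every constant pair with $\delta < 2\epsilon < 1$. Combining the three inclusions gives the stated equalities.
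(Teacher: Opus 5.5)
Your proposal is correct and is essentially the paper's proof. It uses the same cycle of inclusions, closed by \autoref{corroepsdeltagoodpiscl}, with $\CL = \BPCL\PT \subseteq \AVGrBPLCL{O(1)}\PT$ coming from the known collapse, and the middle inclusion obtained by rerunning the proof of \autoref{averageforallinepsdeltaclass}, which only repeats the machine constantly often.

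Your explicit check that the BCH wrapper stays polynomial-time is a point the paper leaves implicit, but configuration counting bounds the running time only while the catalytic tape is read-only, and decoding writes to it. This is easily fixed by using the code in systematic form: encoding then only writes parity bits to the work tape, and decoding computes the error positions read-only and afterwards flips at most $O(1)$ bits.
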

\begin{proof}
By \autoref{corroepsdeltagoodpiscl} we have $\CL = \BPepsCdeltL{\delta}{\epsilon}\PT$, while by~\cite{CookLiMertzPyne25} we have $\CL = \BPCL\PT \subseteq \AVGrBPLCL{O(1)}\PT$. Finally, $\AVGrBPLCL{O(1)}\PT \subseteq \BPepsCdeltL{\delta}{\epsilon}\PT$ by the same proof as \autoref{averageforallinepsdeltaclass}, as the proof only requires re-running the relevant machine a constant number of times.
\end{proof}


\section{Model 3: $\AVGtauLCSPACE{s}{c}{e}$}
\label{sec:avgtau}

In this section, we first explore the relationship between $\AVGtauLCSPACE{s}{c}{e}$ and other classes.

\subsection{Relationship to randomized error}
Our first set of results connect $\AVGtauLCSPACE{s}{c}{e}$ to $\BPepsCdeltSpace{s}{c}{\delta}{\epsilon}$, and by extension to $\AVGrBPLCSPACE{s}{c}{e}$. We specifically show that, in the low-error regime, $\BPepsCdeltL{\delta}{\epsilon}_{\low}$ is a subset of $\AVGtauLCL{o(1)}$. 

\begin{theorem}\label{epsdeltaclassinavgclass}
Let $\delta,\epsilon$ be constants such that $\delta < 2\epsilon$. Then
$$\BPepsCdeltSpace{s}{c}{\delta}{\epsilon} \subseteq \AVGtauLCSPACE{O(s)}{2^{O(s)}}{o(1)}$$
\end{theorem}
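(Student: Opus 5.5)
The plan is to build the deterministic $\AVGtauLCSPACE{O(s)}{2^{O(s)}}{o(1)}$ machine directly from \autoref{lemmvianisansprggeneral}, following the structure of the proof of \autoref{corroepsdeltagoodpiscl}. The idea is to accept being wrong about the catalytic tape only on the rare tapes where $\mathcal{F}^L$ returns $\perp$, and to make those tapes so rare that the expected error over $\tau$ is $o(1)$ even if the whole tape is destroyed on them.

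Concretely, let $L\in\BPepsCdeltSpace{s}{c}{\delta}{\epsilon}$ with $\delta<2\epsilon$, and let $\mathcal{F}^L,\mathcal{R}^L$ be the deterministic subroutines from \autoref{lemmvianisansprggeneral}. Each uses $O(s)$ work space and a chunk of $C=2^{O(s)}$ catalytic bits. Our machine splits its catalytic tape into $k$ chunks $\tau_1,\dots,\tau_k$ with $k=2^{\Theta(s)}$, chosen large enough that $k$ exceeds the space needed to simulate the original machine by brute force. By Saks--Zhou, $\SPACE((s+c)^{3/2})$ suffices, and $(s+c)^{3/2}\le 2^{O(s)}$ since $c\le 2^s$. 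The total catalytic space is $kC=2^{O(s)}$.

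The machine then runs $\mathcal{F}^L$ on the chunks in order.
\begin{itemize}
\item If some chunk returns a correct decision, it undoes all earlier chunks with $\mathcal{R}^L$ in reverse order and outputs the answer. Nothing is changed on the tape.
\item If every chunk compresses, it now has $k$ freed bits (the last bit of each chunk). It decides $x$ deterministically in that space, then resets with $\mathcal{R}^L$.
\item If some chunk $\tau_i$ returns $\perp$, the tape is still intact because $\mathcal{F}^L$ changes nothing in that case. The machine must nonetheless decide $x$ deterministically. It does so by simply overwriting a region of the catalytic tape and brute-forcing $L$ there. This destroys at most $2^{O(s)}$ bits, and no reset is attempted.
\end{itemize}
The work space throughout is $O(s)$, plus counters of $O(s)$ bits.

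The key step, and the main obstacle, is bounding the expected error over a uniformly random $\tau$. An error occurs only if some chunk yields $\perp$. The subtlety is that chunk $i$ is evaluated on the tape $\tau_i$, which is untouched, because earlier chunks being compressed does not alter chunk $i$ itself. Since $\tau_i$ is uniform, part 4 of \autoref{lemmvianisansprggeneral} gives $\Pr[\perp \text{ on chunk } i]\le 2^{-4s}$. Here we rely on the fact that the $s$ of the lemma is the space of the original machine. A union bound over the $k$ chunks gives a failure probability of at most $k\cdot 2^{-4s}$.

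The expected number of errors is therefore at most $k\cdot 2^{-4s}\cdot B$, where $B$ is the number of bits destroyed on failure. To make this $o(1)$, we reduce $B$: on failure we overwrite only about $(s+c)^{3/2}$ bits, not the whole tape. We also take $k$ just large enough, namely about $(s+c)^{3/2}\le 2^{1.5s+o(s)}$. This gives $k\cdot B\cdot 2^{-4s}\le 2^{(3+o(1))s-4s}=o(1)$.

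If the exponents do not balance, the fallback is to strengthen the $\perp$-fraction in \autoref{lemmvianisansprggeneral} to $2^{-Ks}$ for a large constant $K$. This should follow from its proof by the same argument with larger chunk parameters, using a $2^{O(s)}$-sized graph threshold.
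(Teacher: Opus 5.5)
Your construction works once you make one repair; after that it is a genuinely different way of doing the error accounting than the paper's.

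\textbf{The repair.} In the $\perp$ branch you say ``the tape is still intact''. That holds only for chunk $i$. Chunks $1,\dots,i-1$ have already been compressed, and a compressed chunk can differ from its original in up to $c+O(s)$ bits, because Cases 1 and 2.1 replace the simulated tape $\tau$ by some configuration content $\pi$. If you then attempt no reset, each failure costs up to $k(c+O(s))+B$ errors rather than $B$. Your accounting then only gives $k\cdot 2^{-4s}\cdot k c=O(1)$, not $o(1)$. The fix is to run $\mathcal{R}^L$ on chunks $1,\dots,i-1$ before overwriting the brute-force region. With that change the numbers work: $k,B=O((s+c)^{3/2})=O(2^{1.5s})$ since $c\le 2^s$, so the expected error is $O(2^{-s})=o(1)$. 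Your fallback of strengthening \autoref{lemmvianisansprggeneral} is therefore unnecessary.

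\textbf{How the paper does it.} The paper uses $s$ layers of $r=2^{2s}$ chunks. It first runs $\mathcal{F}^L$ on every chunk and resets each one. It then re-compresses a layer in which every chunk compresses. It destroys the whole tape only if every layer contains a $\perp$ chunk, which happens with probability at most $(2^{2s}\cdot 2^{-4s})^s=2^{-2s^2}$. That is small enough that wiping all $2^{O(s)}$ catalytic bits still costs $o(1)$ in expectation, so the paper never has to track how many bits are destroyed.

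\textbf{Comparison.} You use a single layer and a union bound, so your failure probability is only about $k\cdot 2^{-4s}$, and you compensate by destroying just the brute-force region. Let $S$ be the brute-force space. The paper needs the per-chunk $\perp$ rate below $1/S$, whereas you need it below $1/S^2$. You meet this only because the Saks--Zhou exponent $3/2$ together with $c\le 2^s$ gives $S^2=O(2^{3s})\ll 2^{4s}$. A quadratic-space derandomization, or a constant smaller than $4$ in the $\perp$ bound, would break your argument but not the paper's. In exchange, your machine uses less catalytic space (about $2^{1.5s}$ chunks rather than $s\cdot 2^{2s}$) and avoids the second, re-compression pass over a good layer.
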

\begin{proof}
Let $L \in \BPepsCdeltSpace{s}{c}{\delta}{\epsilon}$, and let $N$ be the corresponding machine for $L$.
Our machine $M$ will use a catalytic tape consisting of $r\cdot t$ chunks, where $r=2^{2s}$ and $t=s$. We divide the tape into $t$ layers, each consisting of $r$ chunks. 

We invoke $\mathcal{F}^L$ from \autoref{lemmvianisansprggeneral} on all the chunks one by one, resetting each chunk after its call using $\mathcal{R}^L$ if it was compressed. If $\mathcal{F}^L$ correctly decides the input for any chunk we output the result. Otherwise, assume there exists a layer where the $\mathcal{F}^L$ frees up the last bit of each chunk in that layer. Using $\mathcal{F}^L$ and $\mathcal{R}^L$, we figure out such a layer and free up the last bit of each chunk in the layer. Using the freed-up $2^{2s}$ bits, we run a space-inefficient algorithm to simulate $N$ deterministically and find the answer, resetting all the chunks in the layer afterwards.

If no such layer exists, then every layer has at least one chunk where the subroutine outputted $\perp$. For a random catalytic tape this occurs with a probability of $\le \big (\frac{r}{2^{4s}} \big )^t = \frac{1}{2^{2s^2}}$, and so we can destroy the catalytic tape and use a space-inefficient algorithm to find the correct answer. Thus, we always find the right answer, and the average number of errors we make over all initial catalytic tapes is 
$$
\le \frac{t.r.2^{ds}}{2^{2s^2}}  \le \frac{s2^{(d+2)s}}{2^{2s^2}} = o(1)$$
Our catalytic usage is at most $rt = s \cdot 2^{2s}$, while we use at most $O(s)$ work space to run $\mathcal{F}^L$ and $\mathcal{R}^L$ by \autoref{lemmvianisansprggeneral}.
\end{proof}

\begin{corollary}\label{epsdelinavgcorr}
Let $\delta,\epsilon$ be constants such that $\delta < 2\epsilon$. Then
$$\BPepsCdeltL{\delta}{\epsilon} \subseteq \AVGtauLCL{o(1)}$$
\end{corollary}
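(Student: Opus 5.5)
This corollary should follow by instantiating \autoref{epsdeltaclassinavgclass} in the logspace regime. Let $L \in \BPepsCdeltL{\delta}{\epsilon}$, so $L \in \BPepsCdeltSpace{s}{c}{\delta}{\epsilon}$ with $s = k\log n$ and $c = n^k$ for some constant $k$. Applying \autoref{epsdeltaclassinavgclass} yields a deterministic machine with work space $O(s) = O(\log n)$ and catalytic space $2^{O(s)} = n^{O(k)}$, which is polynomial. This machine always outputs $L(x)$, and its expected number of resetting errors over a uniformly random initial tape is $o(1)$. That places $L$ in $\CSPACE$-style logspace parameters with $o(1)$ expected errors, i.e.\ in $\AVGtauLCL{o(1)}$.

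The only point to check is that the error bound in the theorem is genuinely $o(1)$ as a function of $n$, not merely of $s$. In the proof, the expected error is at most $s\,2^{(d+2)s}/2^{2s^2}$. With $s = \Theta(\log n)$ this equals $n^{O(1)}/2^{\Theta(\log^2 n)}$, which tends to $0$. Here $d$ is the constant for which the destroyed tape length is at most $2^{ds}$, namely $c \le 2^{s}$ plus the $2^{O(s)}$ chunks.

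I also need to check that the space-inefficient fallback in that proof stays within the allowed bounds. It uses the freed-up $2^{2s} = n^{O(1)}$ bits, or the whole destroyed tape, to simulate $N$ deterministically. This is possible because $\BP\SPACE(s+c) \subseteq \SPACE((s+c)^{3/2})$, and $(s+c)^{3/2}$ is polynomial. The required space is therefore polynomial in $n$ and fits inside the $s\cdot 2^{2s}$ chunk budget once the constant in $2^{O(s)}$ is chosen large enough.

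The main (mild) obstacle is this constant bookkeeping, namely making the layer size $r = 2^{2s}$ large enough to host the Saks--Zhou simulation. I would handle it by taking $r = 2^{c' s}$ for a suitable constant $c'$. The failure probability $(r/2^{4s})^t$ remains superpolynomially small provided \autoref{lemmvianisansprggeneral} is applied with its $\perp$-fraction strengthened to $2^{-(c'+2)s}$. This strengthening only changes constants in the $O(s)$ work-space bound of that lemma.
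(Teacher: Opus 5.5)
Your proposal is correct and matches the paper, which gets this corollary by instantiating \autoref{epsdeltaclassinavgclass} with $s = O(\log n)$ and $c = n^{O(1)}$: then $2^{O(s)} = n^{O(1)}$, and the error bound $s2^{(d+2)s}/2^{2s^2}$ is $o(1)$. Your extra bookkeeping is unnecessary, because $c \le 2^s$ means the Saks--Zhou simulation needs only $O((s+c)^{3/2}) = O(2^{1.5s})$ bits, which already fit in the $r = 2^{2s}$ freed bits of a layer, so neither $r$ nor the $\perp$-fraction of \autoref{lemmvianisansprggeneral} needs re-tuning.
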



\subsection{Equivalence to read-multiple randomness}
We now show a connection between $\AVGtauLCSPACE{s}{c}{e}$ and read-multiple errorless randomized catalytic computing. To do this we first introduce another new catalytic class, which will be convenient for facilitating our results.

\begin{definition}
    A language $L$ is in $\ZPtauCSPACE{s}{c}$ if there is a deterministic catalytic machine $M$ with free space $s$ and catalytic space $c$ which that always resets the catalytic tape, and outputs $L(x)$ with probability $\frac{1}{2}$ over the initial catalytic tape and $\perp$ otherwise.
\end{definition}

\noindent
This is the same form of zero-error randomness as introduced in \autoref{def:zpstarcspace}, and in fact we will show that all these definitions coincide in short order. We begin with connecting $\ZPtauCSPACE{s}{c}$ to $\AVGtauLCSPACE{s}{c}{e}$.

\begin{theorem}\label{obavgclasszptau}
$$\AVGtauLCSPACE{s}{c}{e} \subseteq \ZPtauCSPACE{O(s+e\log{c})}{c}$$
\end{theorem}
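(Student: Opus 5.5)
Given a deterministic machine $M$ for $L \in \AVGtauLCSPACE{s}{c}{e}$, we build a $\ZPtauCSPACE{O(s+e\log c)}{c}$ machine $M'$ by combining Markov's inequality over $\tau$ with the BCH scheme of \autoref{lem:ecc}, along the lines of \autoref{averageforallinepsdeltaclass}. The difference is that the "bad event" is now a property of the initial tape rather than of random coins. The machine $M'$ must always reset its tape, and may output $\perp$ on at most half of the initial tapes. Since $M$ is always correct, $M'$ never outputs a wrong answer as long as it only reports answers produced by $M$.

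The construction has four steps.

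\textbf{Step 1.} By Markov's inequality, $\Pr_\tau[d(\tau,\tau') > 4e] \le 1/4$, where $\tau'$ is the final tape of $M$ on $x$ and $\tau$. Set $e' = 4e$.

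\textbf{Step 2.} Reserve $(2e'+1)\lceil\log(c+e')\rceil = O(e\log c)$ bits of work space. Following the proof of \autoref{averageforallinepsdeltaclass}, run $\text{Enc}_{\text{BCH}}$ on the catalytic tape $\tau$ with the zero-padded work bits. This gives a codeword $\tau_1\circ w$, with $\tau_1$ on the catalytic tape and $w$ in work space. The map $\tau\mapsto\tau_1$ is a bijection on $\{0,1\}^c$: the encoding is injective, and decoding a codeword with zero errors recovers $\tau$. Hence $\tau_1$ is uniform when $\tau$ is uniform.

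\textbf{Step 3.} Simulate $M$ on $x$ with catalytic contents $\tau_1$. This yields an answer $b = L(x)$ and a final tape $\tau_1'$. By Step 1 applied to $\tau_1$, we have $d(\tau_1,\tau_1') \le e'$ for at least $3/4$ of the tapes $\tau$.

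\textbf{Step 4.} Run $\text{Dec}_{\text{BCH}}$ on $\tau_1'\circ w$, then re-encode, and check whether the result equals $\tau_1'\circ w$ up to distance $e'$. If the check passes, the decoded string is the original $\tau$; we restore it and output $b$.

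The main obstacle is the case where $M$ made more than $e'$ errors. Here decoding may silently return the wrong string, or fail, and then $M'$ would not reset $\tau$. A zero-error catalytic machine must reset on every tape, so this cannot be absorbed into the $\perp$ probability.

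My first fix is to exploit determinism and make the error count detectable and reversible. Instead of running $M$ to completion and hoping, we can detect too many errors without losing $\tau_1$. As $M$ runs, record in work space the positions where the catalytic tape currently differs from $\tau_1$. Each write to the catalytic tape changes one known cell, and we can tell whether that cell's new value differs from its value in $\tau_1$ by storing its original bit the first time the cell is touched. If at any moment the set of currently differing positions would exceed $e'$, abort $M$.

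Storing at most $e'+1$ pairs (position, original bit) costs $O(e\log c)$ space. But the set of touched cells can be much larger than the set of currently differing cells, so we should record only differing positions together with their original bits. Recording a differing cell's original bit is enough: it is exactly the cell's current value flipped. Once a cell is restored to its $\tau_1$ value we remove it from the list, and if it is later changed again we re-add it, knowing its original bit is the value it had just before that write.

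With this bookkeeping the BCH layer is unnecessary. At any time the tape equals $\tau_1$ except on the recorded positions, so on abort or on completion we simply undo the recorded differences, restoring the tape exactly. If $M$ finishes without the list ever exceeding $e'$, we output $b$. Otherwise we output $\perp$, which happens only when the final distance exceeds $e'$ or the distance exceeded $e'$ at some intermediate time.

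This second event is the remaining obstacle, since the expectation bound only controls the final distance. To handle it I would try to show, via configuration-graph injectivity, that a deterministic $M$ whose intermediate distance is large but whose final distance is small can be bounded on average. Failing that, I would keep the BCH encoding and use the abort rule only on the final tape, accepting the $O(e\log c)$ space cost as in the statement.
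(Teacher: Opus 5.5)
\paragraph{Verdict.} There is a genuine gap: neither construction you describe yields a machine that both always resets its tape and outputs $\perp$ on at most half of the tapes, and neither fallback closes it.

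\paragraph{Why both routes fail.} The BCH route runs $M$ forward on the actual tape. On the tapes (up to a $1/4$ fraction) where $M$ makes more than $e'$ errors, the decoder may fail or silently return the wrong codeword. No check on the final tape can rescue this, because by then the information needed to recover $\tau$ is already gone, and a zero-error catalytic machine must reset on every tape.

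The difference-tracking route does always reset. However, its $\perp$ event (the distance from $\tau_1$ exceeding $e'$ at \emph{some} intermediate time) is not controlled by the hypothesis at all. Consider a machine that complements the whole catalytic tape, complements it back, and then decides $L$ by other means. It makes zero final errors on every $\tau$ but has intermediate distance $c$ on every $\tau$, so your simulation outputs $\perp$ on all tapes. Essentially any nontrivial use of the catalytic tape behaves this way, so the hoped-for ``configuration-graph injectivity'' bound on intermediate distance cannot exist. (Separately, Step 2 only shows that $\tau\mapsto\tau_1\circ w$ is injective, not $\tau\mapsto\tau_1$. This is moot here.)

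\paragraph{The missing idea.} Never run $M$ forward from $\start$; instead search \emph{backwards} from guessed halting configurations, as the paper does.
\begin{itemize}
\item Using $O(e\log c)$ space, enumerate every error vector $z$ of weight at most $10e$. By Markov, at most a $1/10$ fraction of tapes lead to more errors.
\item For each $z$, XOR $z$ into the tape and run the Euler-tour DFS ($\nextstep$ from \autoref{reversiblewalk}) of the tree $\gmx(\acc[\tau\oplus z])$, and likewise of $\gmx(\rej[\tau\oplus z])$. Each tree has a unique, recognizable halting sink, so the tour returns to its root. XORing $z$ back then restores $\tau$, on every tape and regardless of how many errors $M$ makes.
\item If a start configuration $\start[\tau'']$ for any $\tau''$ appears in the accepting (resp.\ rejecting) tree, then $M$ accepts (resp.\ rejects) on $\tau''$. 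Since $M$ is correct on every initial tape, this certifies $L(x)$, and both cases cannot occur.
\item Whenever $M$ on $\tau$ makes at most $10e$ errors, the true error vector exposes $\start[\tau]$ itself. Hence $\perp$ is output on at most a $1/10$ fraction of tapes.
\end{itemize}
The asymmetry your construction lacks is this: traversal from a halting root back toward start configurations is reversible in $O(s)$ space, whereas the forward run from $\start$ destroys information.
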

\begin{proof}
Let $M$ be the $\AVGtauLCSPACE{s}{c}{e}$ (deterministic) machine, $x$ the input, and $\tau$ the initial catalytic contents. Then by Markov's inequality, there are at most a $\frac{1}{10}$ fraction of initial catalytic tapes for which $M$ makes more than $10e$ errors.

Using $O(e\log{c})$ space, we enumerate all the $\sum_{i=0}^{10e} \binom{c}{i} \le c^{10e+1}$ possibilities where $\le 10e$ errors can be made on the catalytic tape. Let each possibility be represented by its characteristic error vector, denoted by $z$. For each $z$, we perform a DFS starting from the halt states $\acc[\tau \oplus z]$ and $\rej[\tau \oplus z]$, using the subroutine $\nextstep$ in \autoref{reversiblewalk}, running until we return to the unique halt state from which we started. If we encounter \textit{any} start state during the DFS from $\acc[\tau \oplus z]$ we accept, and likewise we reject if we encounter any start state during the DFS from $\rej[\tau \oplus z]$. If we never see any start state, we output $\perp$.

Each DFS can be performed using $O(s)$ work space and $c$ bits of catalytic space, and hence we use  $O(s + e\log{c})$ work space and $c$ catalytic space
If $\tau$ is such that $M$ makes at most $10e$ errors with tape contents $\tau$, then some error vector $z'$ would correspond to exactly where these errors are made. Therefore, $\start$ will be encountered in either DFS from $\acc[\tau \oplus z']$ or from $\rej[\tau \oplus z']$ (depending on whether $x$ is in the language or not); furthermore, it cannot happen that we encounter start states in both the DFS from $\acc[\tau \oplus z]$ and $\rej[\tau \oplus z]$. Thus, we output the correct answer for such an error vector $z'$. Since $M$ makes more than $10e$ errors only for $\frac{1}{10}$ initial tapes, we output $\perp$ for at most this fraction of initial tapes.
\end{proof}

\noindent
We note that the proof of \autoref{obavgclasszptau} essentially uses the same idea as that used by Folkertsma et al. \cite{FolkertsmaMertzSpeelmanTupker25} in one of their proofs for $\LCSPACE{s}{c}{e} \subseteq \CSPACE{O(s+e\log{c})}{c}$.

\begin{theorem}\label{avgzptclotherside}
$$\ZPtauCSPACE{s}{c} \subseteq \AVGtauLCSPACE{O(s)}{O(c\log{c})}{o(1)}$$
\end{theorem}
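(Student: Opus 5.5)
We amplify by repeated independent trials on disjoint chunks, with a fallback that destroys the tape only when every trial fails. Let $M$ be the $\ZPtauCSPACE{s}{c}$ machine for $L$. It is deterministic, always resets its catalytic tape, and outputs $L(x)$ on at least half of all initial tapes $\tau\in\{0,1\}^c$, outputting $\perp$ otherwise. Our new machine $M'$ uses a catalytic tape of length $k\cdot c$, split into $k$ disjoint chunks $\tau_1,\dots,\tau_k$ of $c$ bits each, with $k=\Theta(\log c)$ to be fixed below. This gives catalytic space $O(c\log c)$.

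The procedure runs $M$ on input $x$ using chunk $\tau_1$, then $\tau_2$, and so on, reusing the same $O(s)$ bits of work space each time. Since $M$ is a genuine catalytic machine, each chunk is restored exactly after its run. The only extra bookkeeping is a counter of $O(\log k)\le O(s)$ bits indicating the current chunk. As soon as some run returns a non-$\perp$ answer, $M'$ outputs it. This answer is correct because $M$ never outputs $\overline{L(x)}$. In this case the whole tape has been restored, so $M'$ makes zero errors.

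If all $k$ runs return $\perp$, the tape is still fully restored, but we have no answer. We then treat the entire catalytic tape as free memory and decide $x$ by brute force. Concretely, we overwrite the tape with all $2^c$ possible tapes $\tau$ (a $c$-bit counter fits on the tape), run $M$ on each, and output the first non-$\perp$ answer. Such a $\tau$ exists, so $M'$ is always correct. It uses $O(c)$ tape cells and $O(s)$ work space, and afterwards we do not restore the tape. In this case we incur at most $kc$ errors.

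Because the chunks of a uniformly random tape are independent and uniform, the fallback happens with probability at most $2^{-k}$. The expected error over $\tau$ is therefore at most $kc\cdot 2^{-k}$. Choosing $k=2\log c$ gives $2c^{-1}\log c=o(1)$, and the total catalytic space is $2c\log c=O(c\log c)$.

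The main thing to check is that the brute-force fallback really fits in the resources. The enumeration needs the full $c$ bits of the first chunk as a counter, plus another $c$ bits to simulate $M$ on the enumerated tape. Both fit in the $kc$ bits we have just decided to destroy. Running $M$ on these stored tapes needs only its usual $O(s)$ work space. There is no time bound in $\AVGtauLCSPACE{\cdot}{\cdot}{\cdot}$, so the exponential running time is harmless. The one subtle point is that the $o(1)$ bound counts errors in expectation over the \emph{initial} tape of $M'$. Our independence argument over chunks is exactly that average, so it gives the required bound directly.
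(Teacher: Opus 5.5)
Your proposal is correct and matches the paper's proof. Both run $M$ on $2\log c$ disjoint $c$-bit chunks and return the first non-$\perp$ answer; otherwise they overwrite the tape to brute-force a good initial setting, which happens with probability at most $c^{-2}$. The only difference is that the paper charges only the $c$ erased bits, for expected error $c\cdot c^{-2}=1/c$, while your cruder $kc$ bound gives $2\log c/c$; both are $o(1)$.
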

\begin{proof}
    Let \( M \) be the \(\ZPtauCSPACE{s}{c}\) deterministic machine that outputs the correct answer on at least $\frac{1}{2}$ of all possible initial tapes and outputs \(\perp\) on the remaining tapes. We run \( M \) for \(\log{c^2}\) iterations on the given input, using a different segment of length $c$ from the catalytic tape for each iteration. If any run returns an output besides $\perp$ we return that output; otherwise, we erase the first $c$ bits of the catalytic tape and perform a brute-force search to find an initial tape $\tau$ for which $M$ does not output $\perp$. Clearly we use $O(s)$ work space and $O(c\log c)$ bits on the catalytic tape, we always output the correct answer according to the definition of $M$, and we destroy the tape with a probability of at most $\frac{1}{2^{\log{c^2}}} = \frac{1}{c^2}$. Thus, we make at most $\frac{c}{c^2}= o(1)$ errors on average (over the tape). 
\end{proof}


\noindent
In order to connect $\ZPtauCL$ to $\ZPstartCL$, we need to move to the polynomial time variant, which, as it turns out, can be done for the former class without loss of generality:

\begin{theorem}\label{obsvzptauclsamewithptime}
   $$\ZPtauCL = \ZPtauCL\PT$$
\end{theorem}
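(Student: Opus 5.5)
The inclusion $\ZPtauCL\PT \subseteq \ZPtauCL$ holds by definition, so the work is the forward direction. Let $M$ be a $\ZPtauCL$ machine with work space $s = O(\log n)$ and catalytic space $c = n^{O(1)}$. Since $M$ is deterministic and always resets its catalytic tape, the graphs $\gmxtau$ are disjoint paths. The averaging argument of Buhrman et al.\ then shows that the average running time over $\tau$ is at most $2^{O(s)}$. By Markov's inequality, for all but a $\frac{1}{8}$ fraction of initial tapes $\tau$, $M$ halts within $T := 8 \cdot 2^{O(s)} = \poly(n)$ steps.

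My plan is to run $M$ with a clock. We simulate $M$ on $(x,\tau)$ for at most $T$ steps, counting steps in $O(s)$ work space.
\begin{itemize}
\item If $M$ halts within $T$ steps, we output its answer, which is either $L(x)$ or $\perp$. The tape has already been reset.
\item If $M$ has not halted after $T$ steps, we must restore $\tau$ in polynomial time and output $\perp$. We cannot simply continue running $M$, since its remaining running time may be exponential. Instead we run $M$ \emph{backwards}: $M$ is deterministic, but its configuration graph need not be reversible, so we cannot just invert single steps.
\end{itemize}
To handle the second case, I would use the reversible traversal of \autoref{reversiblewalk}. Before starting, we record the starting work-tape state $0^s$; this is trivial, since it is the recognisable start configuration. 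We then walk the configuration tree $\gmx(\start)$ backwards from the current configuration, which is at depth $T$ below $\start$ along the forward path.

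A cleaner implementation is to replace $M$ from the outset by a reversible (Bennett/Lange–McKenzie–Tapp style) simulation. Concretely, we perform the Eulerian tour of \cite{KouckyMertzPyneSami25} on the tree containing $\start$, starting at $\start$, and count the tree edges traversed. As long as the tour has taken at most $O(d_M T)$ steps, we can retrace it exactly using $\stepback$, returning to $\start$ and hence restoring $\tau$. The key property is that this retracing takes time proportional to the number of tour steps already taken, not to the size of the whole tree.

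The plan in order is as follows. First, we simulate the Eulerian tour from $\start$ for at most $6T'$ steps, where $T'$ is a polynomial chosen so that on a $\frac{7}{8}$ fraction of tapes the tour reaches a halting configuration. Second, if it does, we read off the answer and then retrace the tour backwards to restore the tape. Third, if it does not, we retrace the tour backwards and output $\perp$. Each stage costs $O(s)$ space and $\poly(n)$ time, and the tape is always reset.

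The success probability then becomes about $\frac{1}{2} - \frac{1}{8}$. To restore it to $\frac{1}{2}$, we run the procedure on several disjoint catalytic blocks and return the first non-$\perp$ answer. Since the runs on separate blocks are independent over a uniformly random $\tau$, three blocks push the failure probability below $\frac{1}{2}$.

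The main obstacle is the first step. The tour from $\start$ explores the whole \emph{tree} containing $\start$, including predecessors of $\start$, not just the forward path from $\start$. So we need a bound on the tour length until it hits the halting node that holds for most $\tau$. I would bound the average, over $\tau$, of the size of the entire tree $\gmx(\start)$ rather than the path. Because the trees for distinct $\tau$ have distinct sinks $\acc$, $\rej$, or $\dontknow$, they are disjoint, so the total size over all $\tau$ is at most $2^{c+O(s)}$. The average tree size is therefore $2^{O(s)}$. By Markov's inequality, most trees have polynomial size, and $\countsize$ (or a direct tour of the whole tree) then runs within the polynomial clock.
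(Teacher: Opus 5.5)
Your proposal is correct and follows essentially the same route as the paper: the trees $\gmx(\start)$ for distinct $\tau$ are disjoint because their sinks carry $\tau$, so their average size is $2^{O(s)}$, and Markov leaves a constant fraction of tapes that are both non-$\perp$ and small. You then traverse the tree reversibly within a polynomial step budget (the paper phrases this as $\countsize$ followed by $\nextstep$/$\stepback$; your bounded Eulerian tour with retracing is what $\countsize$ does internally), output $\perp$ otherwise, and amplify over disjoint blocks of the catalytic tape. Your opening clock-and-run-backwards idea is not needed; you rightly replace it with the tour-based plan and bound the whole tree rather than the forward path.
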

\begin{proof}
We need only prove the forward direction, as the reverse holds by definition. Let \( M \) be the (deterministic) \(\ZPtauCL\) machine, and let \( x \) be the given input. Also, let $M$ use $s$ bits of work space and $c$ bits of catalytic space. As discussed in \autoref{sc:grtrv}, we know that for any \(\tau\), the graph \(\gmx(\start)\) forms a tree, with its sink being one of the three halt configurations: \(\acc\), \(\rej\), or \(\dontknow\), depending on whether \( M \) outputs accept, reject, or \(\perp\) on input \( x \). 

Since \( M \) always restores the catalytic tape to its initial contents, the trees corresponding to different starting catalytic tapes are pairwise vertex-disjoint. By applying an averaging argument similar to that used by Buhrman et al. \cite{BuhrmanCleveKouckyLoffSpeelman14}, we find that the average size of a tree (over \(\tau\)) is at most \( 2^{4s} \). Using Markov's inequality, the probability (over \(\tau\)) that a tree has a size of at least \( 10.2^{4s} \) is at most \( \frac{1}{10} \). Additionally, since \( M \) outputs a non-\(\perp\) answer with a probability (over \(\tau\)) of at least \( \frac{1}{2} \), we conclude that for at least \( \frac{1}{2} - \frac{1}{10} = 0.4 \) fraction of the \(\tau\), \( M \) outputs a non-\(\perp\) answer while the tree \(\gmx(\start)\) has a size smaller than \( 10.2^{4s} \).

Our $\ZPtauCL\PT$ machine operates as follows: Given a catalytic tape $\tau$, it first utilizes the subroutine $\countsize$ from \autoref{reversiblewalk} to determine if the size of the tree $\gmx(\start)$ is less than $10.2^{4s}$. If it is, the machine employs the subroutine $\nextstep$ (also from \autoref{reversiblewalk}) to perform a walk starting from $\start$ over $\gmx(\start)$ for $10.2^{4s}$ steps. The output is determined based on the halting configuration observed during this walk: we accept if we see $\acc$, reject if we see $\rej$, and output $\perp$ if we see $\dontknow$. Before producing the final answer, the machine uses the subroutine $\stepback$ to return to $\start$, thereby resetting the catalytic tape. If the size of the tree $\gmx(\start)$ is $\ge 10.2^{4s}$, the machine outputs $\perp$.

Using \autoref{reversiblewalk}, we know our machine can use the mentioned subroutines using \(O(s)\) workspace, and \(c\) bits of catalytic space (which is used to simulate that of \(M\)). Moreover, our machine runs in time $2^{O(s)} = \text{poly}(n)$. Furthermore, our machine never outputs an incorrect answer (which follows from the definition of \(M\)). We know that for at least \(0.4\) fraction of the initial catalytic tapes \(\tau\), the tree \(\gmx(\start)\) has size smaller than \(10\cdot 2^{4s}\), such that the unique sink/halting configuration in the tree is either \(\acc\) or \(\rej\). Thus, our machine outputs a non-\(\perp\) answer for at least \(0.4\) fraction of \(\tau\).

Finally, we note that we can boost this probability to any constant by simply running the machine a constant number of times, each time using a different section of the catalytic tape.
\end{proof}

\noindent
Finally, we can connect the poly-time variants of $\ZPtauCL$ and $\ZPstartCL$, thus successfully characterizing $\AVGtauLCL{O(1)}$:

\begin{theorem} \label{lem:avgtau_zpstarclp}
    $$\AVGtauLCL{O(1)} = \ZPtauCL\PT = \ZPstartCL\PT$$
\end{theorem}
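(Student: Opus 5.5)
The plan is to close a cycle of inclusions. Two are already available. \autoref{obavgclasszptau} with $s=O(\log n)$, $c=n^{O(1)}$ and $e=O(1)$ gives $\AVGtauLCL{O(1)} \subseteq \ZPtauCL$, since $O(s+e\log c)=O(\log n)$. \autoref{obsvzptauclsamewithptime} then gives $\ZPtauCL = \ZPtauCL\PT$. In the other direction, \autoref{avgzptclotherside} gives $\ZPtauCL \subseteq \AVGtauLCL{o(1)} \subseteq \AVGtauLCL{O(1)}$, because $O(c\log c)$ is still polynomial. So $\AVGtauLCL{O(1)} = \ZPtauCL\PT$, and what remains is $\ZPtauCL\PT = \ZPstartCL\PT$.

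\textbf{$\ZPtauCL\PT \subseteq \ZPstartCL\PT$.} The idea is that the initial tape plays the role of randomness. Given a $\ZPtauCL\PT$ machine $M$ with catalytic space $c$, the $\ZPstartCL\PT$ machine $M'$ reads the first $c$ bits $\rho$ of its two-way random string and XORs them into its own catalytic tape $\tau$. It then runs $M$ on the tape $\tau\oplus\rho$, XORs $\rho$ back in, and outputs $M$'s answer. Since $M$ resets its tape, $\tau$ is restored. Because $\tau\oplus\rho$ is uniform for every fixed $\tau$, $M'$ answers correctly with probability at least $\frac12$ and otherwise outputs $\perp$. It runs in polynomial time, always halts, and two-way access lets us reread $\rho$ freely.

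\textbf{$\ZPstartCL\PT \subseteq \ZPtauCL\PT$.} Here the tape plays the role of randomness. Given $N$ running in time $n^d$ with two-way randomness, $N$ can only reach the first $n^d$ random bits. The $\ZPtauCL\PT$ machine therefore splits its catalytic tape as $\tau = \rho \circ \sigma$, with $|\rho| = n^d$, and runs $N$ using $\sigma$ as $N$'s catalytic tape and $\rho$ as its read-only two-way random string. We never write to $\rho$, and $N$ resets $\sigma$ for every randomness string, so the whole tape is restored. Over uniform $\tau$, the block $\rho$ is uniform and independent of $\sigma$. Since $N$ succeeds with probability at least $\frac12$ for every fixed $\sigma$, the machine outputs $L(x)$ on at least half of all $\tau$ and $\perp$ otherwise, in polynomial time.

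\textbf{Main obstacle.} The step needing care is the first inclusion, $\AVGtauLCL{O(1)} \subseteq \ZPtauCL$. The machine built in \autoref{obavgclasszptau} enumerates error vectors and runs DFS traversals, and these are not obviously polynomial time. That is exactly why the route goes through $\ZPtauCL$ and then applies \autoref{obsvzptauclsamewithptime}, rather than aiming for $\PT$ directly. I also need to check that the success probability in \autoref{obavgclasszptau}, which is $\frac{9}{10}$ over $\tau$, meets the $\frac12$ threshold. The remaining checks are parameter bookkeeping: $c\log c$ and $n^d + c$ stay polynomial, and the polynomial-time restriction is what makes the random-string length polynomial in the second inclusion.
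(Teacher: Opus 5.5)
Your proposal is correct and follows the paper's proof essentially step for step. The first equality comes from the same three results (the error-vector enumeration inclusion, the brute-force-on-failure inclusion, and the polynomial-time normalization of the zero-error-over-$\tau$ class). The second equality uses the same two simulations: in one direction you XOR the catalytic tape with the first $c$ two-way random bits, and in the other you set aside a polynomial-length block of the catalytic tape to serve as the random string.
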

\begin{proof}
The first equality is a corollary of \autoref{obavgclasszptau}, \autoref{avgzptclotherside}, and \autoref{obsvzptauclsamewithptime}. Thus we show both directions of the second equality.

($\Rightarrow$) Let $M$ be a $\ZPtauCL\PT$ machine that uses $c$ bits of catalytic tape. For a given catalytic tape $\tau$ (of length $c$), we read $c$ two-way random bits, which we denote by $r$. We then simulate $M$ on the given input using catalytic tape $\tau \oplus r$ and output the result of this simulation. Since $M$ always restores the catalytic tape to its initial contents, we can likewise restore our tape by re-reading the bits of $r$. Furthermore, we know $M$ outputs a non-$\perp$ (and correct) answer with probability $\frac{1}{2}$ over the initial tape. Since $\tau\oplus r$ gives us a uniformly random tape, we output a non-$\perp$ answer with the same probability over the randomness $r$.

($\Leftarrow$) Let the $\ZPstartCL\PT$ machine $N$ use $m = poly(n)$ random bits on inputs of length $n$, and $c$ bits of catalytic tape. We construct a machine that uses a catalytic tape of size $c + m$ bits: the first $c$ bits are used to emulate the catalytic tape of $N$, and the remaining $m$ bits are for the randomness used by $N$. We output the answer we obtain by simulating $N$. Since, for \textit{every} initial tape configuration, $N$ produces a non-$\perp$ output with probability $\ge \frac{1}{2}$ over its randomness, our machine will likewise produce a non-$\perp$ output with the same probability over our catalytic tape.
\end{proof}

\noindent
We close by noting that this connection to $\ZPstartCL\PT$ gives us a number of results which do not obviously follow for $\AVGtauLCSPACE{s}{c}{O(1)}$. First is a containment in randomized polynomial time for the non-polynomial time variant of $\AVGtauLCL{O(1)}$:

\begin{theorem} \label{lamezpprmk}
    $$\AVGtauLCL{O(1)} \subseteq \BPP$$
\end{theorem}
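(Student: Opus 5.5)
By \autoref{lem:avgtau_zpstarclp}, $\AVGtauLCL{O(1)} = \ZPstartCL\PT$, so it suffices to show $\ZPstartCL\PT \subseteq \BPP$. In fact it is more convenient to work with the intermediate characterization $\ZPtauCL\PT$, whose machines are deterministic and whose "randomness" is simply the initial catalytic tape. I will therefore show $\ZPtauCL\PT \subseteq \ZPP \subseteq \BPP$.

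Let $M$ be a $\ZPtauCL\PT$ machine for $L$, with work space $O(\log n)$, catalytic space $c = n^{O(1)}$ and running time $n^{O(1)}$. The $\BPP$ (indeed $\ZPP$) algorithm works as follows:
\begin{enumerate}
    \item On input $x$, sample a uniformly random $\tau \in \{0,1\}^c$ using $c$ random bits.
    \item Simulate $M$ on $x$ with initial catalytic tape $\tau$, storing the catalytic tape explicitly in ordinary read-write memory. This takes polynomial time because $M$ always halts within $n^{O(1)}$ steps and $c$ is polynomial.
    \item Output whatever $M$ outputs.
\end{enumerate}
Since $M$ is deterministic and correct on every $\tau$ whenever it does not output $\perp$, the algorithm never errs. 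By the definition of $\ZPtauCSPACE{s}{c}$, it outputs a non-$\perp$ answer with probability at least $\frac{1}{2}$ over $\tau$.

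To turn this into a two-sided $\BPP$ algorithm, repeat the procedure twice with independent $\tau$'s and output the first non-$\perp$ answer, or $0$ if both runs return $\perp$. The error probability is then at most $\frac{1}{4} < \frac{1}{3}$.

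The only point requiring care is the time bound. The unrestricted class $\AVGtauLCL{O(1)}$ places no time limit on the machine itself, so a naive simulation of the original $\AVGtauLCL{O(1)}$ machine could take exponential time. This is exactly why the route goes through \autoref{obsvzptauclsamewithptime}, which is already folded into \autoref{lem:avgtau_zpstarclp}: that result supplies an equivalent machine that always halts in polynomial time. With this in hand the simulation above is straightforward, and I do not expect any further obstacle.
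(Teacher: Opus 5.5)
Your proof is correct and is essentially the paper's argument. The paper applies \autoref{lem:avgtau_zpstarclp} to land in $\ZPstartCL\PT$ and observes that such a machine is trivially a $\ZPP$ machine. You instead stop at the equivalent class $\ZPtauCL\PT$ and sample the catalytic tape yourself, which is exactly the XOR step from that theorem's proof and avoids any question of how many two-way random bits are accessed.
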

\begin{proof}
    By \autoref{lem:avgtau_zpstarclp}, $\AVGtauLCL{O(1)} \subseteq \ZPstartCL\PT$. Note that for \textit{every} initial catalytic tape, a $\ZPstartCL\PT$ machine runs in polynomial time, uses polynomial space between its work and catalytic tapes, and outputs a non-$\perp$ answer with a probability of at least $\frac{1}{2}$. Thus $\ZPstartCL\PT$ is trivially contained in $\ZPP \subseteq \BPP$.
\end{proof}

\noindent
Second is a characterization of $\AVGtauLCL{O(1)}\PT$ as exactly being the intersection of $\AVGtauLCL{O(1)}$ with no time restriction and $\PT$ with no catalytic restriction; this was previously proven for $\CL\PT$ and $\CL \cap \PT$ by Cook et al.~\cite{CookLiMertzPyne25}:
\begin{theorem}
$$\AVGtauLCL{O(1)}\PT = \AVGtauLCL{O(1)} \cap \PT$$
\end{theorem}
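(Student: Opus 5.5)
The forward inclusion should be immediate, and for the reverse inclusion I would repeat the proof of \autoref{avgzptclotherside}, replacing its brute-force search by the polynomial-time algorithm for $L$.

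For $\subseteq$: every $\AVGtauLCL{O(1)}\PT$ machine is in particular an $\AVGtauLCL{O(1)}$ machine. Such a machine $M$ is deterministic and outputs $L(x)$ for \emph{every} initial catalytic tape $\tau$; only the resetting error is averaged over $\tau$. So a $\PT$ algorithm can fix $\tau = 0^c$ and simulate $M$ with an ordinary read-write tape of length $c = \poly(n)$ in place of the catalytic tape. This takes polynomial time, and its output is always $L(x)$.

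For $\supseteq$: let $L \in \AVGtauLCL{O(1)} \cap \PT$. I would proceed in three steps.
\begin{enumerate}
\item Apply \autoref{obavgclasszptau} with $e = O(1)$, then \autoref{obsvzptauclsamewithptime}. This gives a deterministic polynomial-time catalytic machine $M$ that uses $O(\log n)$ work space and $c = \poly(n)$ catalytic space, always resets its catalytic tape, never errs, and outputs a non-$\perp$ answer on at least half of all initial tapes.
\item Let $T$ be a deterministic algorithm for $L$ running in time $n^{a}$, and hence using at most $n^{a}$ space. Let $C = k \cdot c + n^{a}$ be the total catalytic length, where $k = \lceil 2\log C \rceil$; since $k = O(\log n)$, $C$ is still polynomial.
\item The new machine runs $M$ successively on $k$ disjoint length-$c$ segments of its catalytic tape. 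Each run resets its segment, so if any run returns a non-$\perp$ answer we output it and the whole tape is intact. If every run returns $\perp$, we overwrite the last $n^{a}$ bits of the tape, use them as work space for $T$, output $T(x)$, and leave the tape destroyed.
\end{enumerate}

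This machine is deterministic, runs in time $k \cdot \poly(n) + n^{a} = \poly(n)$, uses $O(\log n)$ work space, and is always correct, since $M$ never errs and $T$ decides $L$. For the error bound: when $\tau$ is uniform, the $k$ segments are independent uniform strings, so all $k$ runs return $\perp$ with probability at most $2^{-k} \le C^{-2}$. Only in that event are errors made, and then at most $C$ of them. Hence $\mathbb{E}_\tau[d(\tau,\tau')] \le C \cdot C^{-2} = o(1)$, which places $L$ in $\AVGtauLCL{o(1)}\PT \subseteq \AVGtauLCL{O(1)}\PT$.

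I do not expect a real obstacle. The one point that needs care is that the derived machine $M$ is genuinely polynomial time, which is exactly what \autoref{obsvzptauclsamewithptime} guarantees. Without it, the $\PT$ fallback would not rescue the running time of the successful branch. The remaining checks are routine: $e = O(1)$ keeps the work space of \autoref{obavgclasszptau} logarithmic, and the independence of the disjoint segments under a uniform $\tau$ justifies the $2^{-k}$ bound.
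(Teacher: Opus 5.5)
Your proposal is correct and matches the paper's proof. The forward inclusion is definitional. For the reverse, the paper likewise obtains a $\ZPtauCL\PT$ machine via \autoref{lem:avgtau_zpstarclp}, whose first equality is exactly your combination of \autoref{obavgclasszptau} and \autoref{obsvzptauclsamewithptime}; it then runs that machine on disjoint chunks of the catalytic tape and, if every chunk returns $\perp$, destroys the tape and runs the polynomial-time algorithm. The only difference is cosmetic: the paper uses $n^d$ chunks (failure probability $2^{-n^d}$) where you use $O(\log n)$ chunks, as in \autoref{avgzptclotherside}, and both give $o(1)$ expected errors.
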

\begin{proof}
The forward direction follows from the definitions of the classes. For the reverse direction, consider an arbitrary language $ L \in \AVGtauLCL{(O(1))} \cap \PT $; it follows from \autoref{lem:avgtau_zpstarclp} that \( L \in \ZPtauCL\PT \cap \PT\). Let \( M \) be the \(\ZPtauCL\PT\) (deterministic) machine for \( L \), such that \( M \) uses a catalytic tape of size \( n^d \) on input length \( n \) (for some constant \( d \)). Without loss of generality, we assume that \( L \) has a polynomial-time algorithm with a runtime of \( n^d \).

Our \(\AVGtauLCL{(O(1))}\PT\) machine \(N\) uses a catalytic tape of size \(n^{2d}\), which is broken into \(n^d\) chunks, each of size \(n^d\). \(N\) works as follows: given input \(x\), it simulates \(M\) for input \(x\) with each chunk serving as the starting catalytic tape for \(M\). Note that, by the definition of \(M\), after each simulation of \(M\), the contents of the respective chunk are reset to their original state. If \(N\) receives a non-\(\perp\) answer from even one chunk during the simulation of \(M\), it outputs that answer. Otherwise, it destroys its catalytic tape and uses the space to run the polynomial-time algorithm for \(L\).

Thus, \( N \) always decides the input correctly. Since \( M \) runs in polynomial time, \( N \) does as well. Lastly, since \( M \) outputs a non-\(\perp\) answer for at least \(\frac{1}{2}\) of its initial tapes, the probability (over the uniform choice of its starting catalytic tape) that \( N \) never sees \( M \) output a non-\(\perp\) answer for any of the chunks is \( \le \frac{1}{2^{n^d}}\). Thus, the average number of errors made by \( N \) (averaged over its starting tape) is \(\frac{n^{2d}}{2^{n^d}} = o(1)\).
\end{proof}

\noindent
Lastly, the equivalence to $\ZPstartCL\PT$ gives evidence that proving $\AVGtauLCL{(O(1))} = \CL$ is \emph{much harder} than $\AVGrBPLCL{O(1)}$, as it would show novel derandomizations, such as reducing $\ZPTCo$ to the \emph{lossy coding problem} (see \cite{CookLiMertzPyne25} for further discussion).
On the flip side, an appropriate derandomization assumption would remove this barrier and give reason to believe that the collapse of $\AVGtauLCL{O(1)}$, and hence all previously discussed classes, could indeed occur; we validate this idea in the upcoming section.

\section{Further Results Assuming Derandomization}
\label{sec:assume}

We now move to conditional results, in particular assuming a fairly standard hardness assumption:
\begin{conjecture}\label{conj:derandom_assumption}
    There exists a constant $\gamma > 0$ such that $\Deterministic\SPACE(n) \not \subseteq \SIZE(2^{\gamma n})$.
\end{conjecture}

\noindent
\autoref{conj:derandom_assumption} is sufficient to construct \emph{pseudorandom generators} (PRGs); we will take the following construction due to Impagliazzo and Wigderson~\cite{ImpagliazzoWigderson97}: 
\begin{lemma}[Derandomization assumption]\label{ass:derandom}
    If \autoref{conj:derandom_assumption} holds, then for all constants $d$, there exists a constant $d'$ and a function $G : \{0,1\}^{d' \log{n}} \rightarrow \{0,1\}^n$ such that for
     any circuit $C$ of size $n^d$, $G$ $\frac{1}{n}$-fools the circuit $C$, i.e.,
    $$
    \big|\Pr_{r\in \{0,1\}^n}[C(r)=1]-\Pr_{s \in \{0,1\}^{d'\log{n}}}[C(G(s)=1]\big| < \frac{1}{n}
    $$
    and $G$ is computable in space logarithmic in $n$. 
\end{lemma}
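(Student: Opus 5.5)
The statement is the Impagliazzo--Wigderson theorem, instantiated at the parameters we need. So the plan is to invoke their hardness-versus-randomness construction and then check two things: the stretch and error parameters, and the space complexity of computing $G$.

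First, I would take the hard function $f \in \Deterministic\SPACE(n)$ with circuit complexity $2^{\gamma n}$, as given by \autoref{conj:derandom_assumption}. I would apply it on inputs of length $\ell = \Theta(\log n)$, with the constant chosen according to $d$ and $\gamma$, so that $f$ restricted to length $\ell$ has no circuits of size $n^{O(d)}$. Following \cite{ImpagliazzoWigderson97}, I would then amplify hardness: worst-case hardness becomes mild average-case hardness via low-degree extension, then strong average-case hardness via a derandomized XOR lemma. This yields a function $f'$ on $O(\ell)$ bits that no circuit of size $n^{d+1}$ computes with advantage better than $1/n^{2}$.

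Next, I would plug $f'$ into the Nisan--Wigderson generator. The combinatorial design would have $n$ sets of size $O(\ell)$ inside a universe of size $d'\log n$, with pairwise intersections $O(\log n)$ for a small enough constant. The standard hybrid and next-bit argument shows that any circuit of size $n^d$ distinguishing $G$'s output from uniform with advantage $1/n$ yields a predictor for $f'$ of size $n^d + n\cdot 2^{O(\text{intersection})} \le n^{d+1}$ with advantage $1/n^{2}$. That contradicts the hardness of $f'$, so $G$ $\frac1n$-fools all circuits of size $n^d$.

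Finally, I would verify that $G$ is log-space computable, which I expect to be the main obstacle: Impagliazzo--Wigderson state their result for polynomial-time computability. Computing an output bit requires three ingredients. First, the design sets; these can be computed in logspace using polynomials over a field of size $O(\log n)$. Second, the restriction of the seed to a set. Third, evaluating $f'$ on $O(\log n)$ bits. Because $f \in \Deterministic\SPACE(n)$, evaluating $f$ on inputs of length $O(\log n)$ takes space $O(\log n)$. The amplification steps consist of low-degree extensions over small fields, expander walks, and XORs of $O(\log n)$-bit objects, and each step preserves $O(\log n)$ space. Thus the whole computation runs in space $O(\log n)$. Carefully tracking space through the hardness-amplification chain is the only delicate point; everything else is a routine parameter instantiation.
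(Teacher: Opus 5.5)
\paragraph{Comparison with the paper.} Your route is the paper's route. The paper gives no argument beyond attributing the lemma to \cite{ImpagliazzoWigderson97}, and you instantiate the same generator. You are also right that log-space computability is not literally covered by that citation; this is the observation usually credited to Klivans and van Melkebeek, which the paper silently folds into the reference.

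\paragraph{The design step fails as written.} You propose to compute the design using polynomials over a field of size $O(\log n)$. In the polynomial design the sets are graphs $\{(a,p(a)) : a\in A\}$ of low-degree polynomials inside $A\times\mathbb{F}_q$ with $|A|\le q$. Your sets must have size $\ell'=\Theta(\log n)$, since they index the input of the hard function $f'$. So the universe has size at least $(\ell')^2=\Theta(\log^2 n)$, and your seed would have length $\Theta(\log^2 n)$ rather than $d'\log n$. In the exponential-hardness regime the pairwise intersections may be a constant fraction $a$ of the set size, and you need a design whose universe is \emph{linear} in the set size. Nisan and Wigderson get this greedily, but the greedy construction stores all previously chosen sets and is not log-space as it stands.

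A log-space fix is to take $S_i=\{(j,C(i)_j) : j\in[\ell']\}\subseteq[\ell']\times[b]$. Here $C$ is an explicit code over a constant alphabet $[b]$ with $n$ codewords of length $\ell'=O(\log n)$ and pairwise agreement at most $a\ell'$. One choice is Reed--Solomon over a field of size $\Theta(\log n/\log\log n)$, concatenated with a greedily found inner code of length $O(\log\log n)$; the whole inner code fits in $O(\log n)$ bits. Then the universe has size $b\ell'=O(\log n)$, and each $S_i$ is computable in space $O(\log n)$. The same care is needed for the nearly disjoint subsets inside the derandomized XOR lemma, which your claim that ``each step preserves $O(\log n)$ space'' passes over.

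\paragraph{A smaller point on the hardness assumption.} When you say that $f$ restricted to length $\ell$ has no circuits of size $n^{O(d)}$, you are reading \autoref{conj:derandom_assumption} as \emph{almost-everywhere} hardness. Read literally, $\mathrm{DSPACE}(n)\not\subseteq\mathrm{SIZE}(2^{\gamma n})$ only gives hardness at infinitely many input lengths. That yields a generator fooling size-$n^d$ circuits only for infinitely many $n$. This is not enough for the later uses, e.g.\ \autoref{lossyderandlemma}, which needs a generator at the particular length at hand. The paper's formulation has the same ambiguity, so you should state the almost-everywhere reading explicitly.
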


\noindent
Note that \autoref{ass:derandom} is sufficient to show $\BPP = \PT$, which extends to $\BPepsCdeltL{\delta}{\epsilon}_{\high}\PT$ by \autoref{trivialp}:
\begin{theorem}
    Let $\delta, \epsilon$ be constants such that $\delta > 2\epsilon$.
    If \autoref{conj:derandom_assumption} holds, then
    $$\BPP = \BPepsCdeltL{\delta}{\epsilon}\PT = \PT$$
\end{theorem}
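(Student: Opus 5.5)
The statement chains together \autoref{trivialp} and the standard consequence of \autoref{ass:derandom} that $\BPP = \PT$. I would prove the two equalities separately and then combine them.

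\textbf{First equality.} $\BPP = \BPepsCdeltL{\delta}{\epsilon}\PT$ holds unconditionally for constants $\delta > 2\epsilon$ by \autoref{trivialp}, so nothing further is needed there.

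\textbf{Second equality.} I would show $\BPP = \PT$ under \autoref{conj:derandom_assumption}. The inclusion $\PT \subseteq \BPP$ is trivial, so only the forward direction needs work.

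\begin{itemize}
    \item Take $L \in \BPP$, decided by a randomized machine $N$ running in time $n^k$ with error at most $1/3$.
    \item For a fixed input $x$, the predicate $r \mapsto N(x,r)$ is computed by a circuit $C_x$ of size $n^{O(k)}$ (standard tableau construction), with $m = n^k$ input bits.
    \item Apply \autoref{ass:derandom}, with output length $m$ and a suitable constant $d$ so that $C_x$ has size at most $m^d$. This gives a generator $G:\{0,1\}^{d'\log m}\to\{0,1\}^m$ that $\frac{1}{m}$-fools $C_x$.
    \item The deterministic algorithm enumerates all $m^{d'} = \poly(n)$ seeds $s$, computes $G(s)$, runs $N(x,G(s))$ on each, and outputs the majority answer.
\end{itemize}

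Computing $G$ takes space $O(\log m)$, hence time $\poly(n)$, so the whole algorithm runs in polynomial time. Since $G$ fools $C_x$ within $\frac{1}{m} < \frac{1}{6}$, the acceptance probability over seeds stays on the correct side of $\frac{1}{2}$. The majority vote is therefore correct.

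\textbf{Expected difficulty.} The only subtle point is matching parameters: the generator is indexed by its output length $n$, while the circuit size is measured relative to that same length. Padding the output length to $m = n^k$ and choosing the constant $d$ accordingly resolves this, and no step is a real obstacle.

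Combining the two equalities gives $\BPP = \BPepsCdeltL{\delta}{\epsilon}\PT = \PT$.
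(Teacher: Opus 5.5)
Your proposal is correct and follows the paper's route: the paper gets the first equality from \autoref{trivialp} and the second from the fact that \autoref{ass:derandom} yields $\BPP = \PT$, which it asserts without proof. Your seed-enumeration argument, which fools $C_x$ to within $1/m < 1/6$ and takes a majority over the $\poly(n)$ seeds, is the standard proof of that fact and is sound.
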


\noindent
Since $\CL \subseteq \ZPP$, this also immediately implies that $\CL \subseteq \PT$. Combining this with a result of Cook et al.~\cite{CookLiMertzPyne25} which shows $\CL\PT = \CL \cap \PT$, we can extend this to the following:

\begin{theorem}\label{clptrivial}
    If \autoref{conj:derandom_assumption} holds, then
    $$\CL = \CL\PT$$
\end{theorem}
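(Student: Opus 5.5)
The plan is to chain three inclusions that are either stated earlier or follow directly from \autoref{ass:derandom}. The inclusion $\CL\PT \subseteq \CL$ holds by definition, so only $\CL \subseteq \CL\PT$ needs proof. I would invoke the result of Cook et al.~\cite{CookLiMertzPyne25} that $\CL\PT = \CL \cap \PT$. It then suffices to show $\CL \subseteq \PT$ under \autoref{conj:derandom_assumption}.

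For that step I would go through randomized polynomial time. First, $\CL \subseteq \ZPP$ is known. This can also be seen within the paper: by \autoref{thm:old_collapses} and \cite{CookLiMertzPyne25}, $\CL = \ZPCL\PT$, and a $\ZPCL\PT$ machine given a uniformly random string in place of its catalytic tape is an ordinary zero-error randomized polynomial-time algorithm. The averaging argument of Buhrman et al.~\cite{BuhrmanCleveKouckyLoffSpeelman14} is what supplies the expected polynomial running time over a random $\tau$. Alternatively, one can argue as in the proof of \autoref{lamezpprmk}, noting that $\CL \subseteq \ZPstartCL\PT$ since a random catalytic tape can be XORed in. In either case we get $\CL \subseteq \ZPP \subseteq \BPP$.

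Next, \autoref{ass:derandom} gives a logspace-computable generator with seed length $O(\log n)$ that $\frac1n$-fools circuits of any fixed polynomial size. The standard argument then yields $\BPP = \PT$. Given a $\BPP$ machine with error $\le 1/3$ running in time $n^k$, its behaviour on a fixed input is a circuit of size $\poly(n)$ in the random bits. We enumerate all $n^{O(1)}$ seeds, run the machine on $G(\text{seed})$, and take the majority vote. The acceptance probability is preserved up to $\frac1n$, so the majority is correct. This is exactly the previous theorem in the paper, which I would simply cite.

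Combining the steps gives $\CL \subseteq \PT$, hence $\CL = \CL \cap \PT = \CL\PT$. The only potentially delicate point is justifying $\CL \subseteq \ZPP$ rigorously. Since it is a known fact (and follows from $\CL = \ZPCL\PT$ as sketched), I expect no real obstacle; the proof is essentially a two-line composition of cited results.
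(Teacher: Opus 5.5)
Your proof is correct and is the same as the paper's: under \autoref{conj:derandom_assumption} one has $\BPP=\PT$, so $\CL\subseteq\ZPP\subseteq\BPP=\PT$, and then the result of Cook et al.\ that $\CL\PT=\CL\cap\PT$ gives $\CL=\CL\PT$. One small clarification: a $\ZPCL\PT$ machine needs no averaging over $\tau$, since its zero-error guarantee and polynomial running time hold for every initial tape and any fixed tape works; the Buhrman et al.\ averaging argument is what proves $\CL\subseteq\ZPP$ directly, by running a plain $\CL$ machine on a random tape.
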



It is not surprising that one can strengthen \autoref{thm:lcl_vs_cl}, particularly the fact that $\CL = \LCL{O(1)}$, from allowing at most $e$ errors for every initial catalytic setting to allowing at most $e$ errors on average (averaged over the initial catalytic tape), under the de-randomization assumption.

\begin{lemma}\label{lossyderandlemma}
If $e \leq c^{1-\Omega(1)}$ and \autoref{conj:derandom_assumption} holds, then
$$\AVGtauLCSPACE{\Theta(s)}{\Theta(c)}{e} = \CSPACE{\Theta(s +e\log{c})}{\Theta(c)}$$
\end{lemma}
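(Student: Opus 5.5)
The reverse inclusion is immediate. By \autoref{thm:lcl_vs_cl}, $\CSPACE{\Theta(s+e\log c)}{\Theta(c)} = \LCSPACE{\Theta(s)}{\Theta(c)}{e}$. A machine making at most $e$ errors on every $\tau$ makes at most $e$ errors on average over $\tau$, so this class sits inside $\AVGtauLCSPACE{\Theta(s)}{\Theta(c)}{e}$. The real content is the forward inclusion. There I would show that, under \autoref{conj:derandom_assumption}, any $\AVGtauLCSPACE{s}{c}{e}$ machine $M$ can be turned into a lossy catalytic machine making $O(e)$ errors on \emph{every} tape. \autoref{thm:lcl_vs_cl} then converts that machine into an errorless one with $O(s+e\log c)$ work space.

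The plan follows the technique of \cite{BuhrmanKouckyLoffSpeelman18} mentioned in the techniques section: XOR the given tape $\tau$ with PRG outputs so that it looks ``typical''. Let $G$ be the generator from \autoref{ass:derandom}, stretching $O(\log c)$ seed bits (that is, $O(s)$ bits, since $c\le 2^s$) to $c$ bits and computable in space $O(s)$. The new machine $N$ enumerates seeds $\sigma$ one at a time, keeping $\sigma$ and a counter in work space. For each seed it does the following.
\begin{enumerate}
\item Replace the tape by $\tau\oplus G(\sigma)$.
\item Run the error-enumeration procedure from the proof of \autoref{obavgclasszptau} on $\tau\oplus G(\sigma)$. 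That is, for each error vector $z$ of weight at most $10e$, run a DFS backwards from $\acc[\tau\oplus G(\sigma)\oplus z]$ and from $\rej[\tau\oplus G(\sigma)\oplus z]$, looking for a start configuration. This uses $O(s+e\log c)$ work space and never modifies the tape net.
\item Undo the XOR with $G(\sigma)$.
\end{enumerate}
If some seed yields an answer, output it. This is already errorless, so $N$ actually lands directly in $\CSPACE{O(s+e\log c)}{c}$.

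Call a tape $\rho$ \emph{good} if $M$ on $\rho$ makes at most $10e$ errors. On good $\rho$ the procedure answers correctly, and on any tape it never answers incorrectly, exactly as argued in \autoref{obavgclasszptau}. What remains is to show that for every $\tau$ some seed $\sigma$ makes $\tau\oplus G(\sigma)$ good; in fact a large fraction of seeds should work. By Markov's inequality, at least $9/10$ of uniformly random tapes are good. Since $\tau\oplus r$ is uniform when $r$ is, the task reduces to fooling the test $T_\tau(r)=[\tau\oplus r \text{ is good}]$.

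The main obstacle is that $T_\tau$ must be decidable by a circuit of size $\poly(2^s)$. Otherwise \autoref{ass:derandom}, instantiated with output length $c$ and circuits of size $2^{O(s)}$, does not apply. Deciding goodness by simply running $M$ is too slow, because $M$'s running time on atypical tapes can be huge. I would instead weaken the test to one that is both easy and sufficient: ``the run of $M$ from $\tau\oplus r$ halts within $K\cdot 2^{4s}$ steps, and its final tape differs from the initial one in at most $10e$ positions''. This test can be evaluated by a circuit of size $\poly(2^s)$ (hardwiring $x$ and $\tau$), since it simulates a space-$(s+c)$ machine for $2^{O(s)}$ steps.

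To get a lower bound on how often this easy test passes, I would use an averaging argument on configuration graphs. The step $\rho\mapsto\rho'$ sends good tapes to tapes within distance $10e$, so each final configuration is hit by at most $c^{O(e)}$ good starting tapes. This means the run-trees of good tapes overlap at most $c^{O(e)}$-fold, so their total size is at most $c^{O(e)}2^{O(s)}$ times the number of tapes. By Markov, a constant fraction of tapes are good \emph{and} short when $K=c^{O(e)}$. Here $e\le c^{1-\Omega(1)}$ may make $c^{O(e)}$ too big. If so, I would fall back on \autoref{avgsize}-style reasoning restricted to runs with exactly recorded error sets, or simply require short runs and absorb long ones into the constant-fraction slack. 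With a good-and-short fraction of at least $1/2$ and fooling error $1/c$, a positive fraction of seeds succeed for every $\tau$, which completes the argument.
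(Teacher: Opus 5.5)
Your argument is correct and has the same skeleton as the paper's: run the zero-error error-enumeration procedure of \autoref{obavgclasszptau} on $\tau\oplus G(\sigma)$ for every seed, and use the generator to fool a ``good and short'' test. The difference is how you bound run lengths.

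\begin{itemize}
\item The paper first passes to the errorless zero-error machine of \autoref{obavgclasszptau}, which has work space $s'=O(s+e\log c)$. Because that machine always resets, its run-trees for distinct tapes are vertex-disjoint. The Buhrman et al.\ averaging then gives typical run length $2^{O(s')}$, and the test fooled is ``this errorless machine halts within $2^{O(s')}$ steps with a non-$\perp$ answer.''
\item You keep the original lossy machine $M$ in the test. By determinism, any configuration on the run of a good tape $\rho$ leads to that run's final configuration. So it lies on the runs of at most $c^{O(e)}$ good tapes, and summing over configurations gives average good-run length $c^{O(e)}2^{O(s)}$.
\end{itemize}

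Both give the same bound, $2^{O(s+e\log c)}$. Yours works directly with $M$ and makes explicit where the $e\log c$ in the seed length comes from. The paper's gets the bound for free from disjointness and keeps the fooled test and the simulated procedure on one machine. Having the test and the enumeration procedure refer to different machines is harmless, since passing your test implies goodness.

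The one thing to fix is your final paragraph: $K=c^{O(e)}$ is not too big. Your test is a circuit of size $2^{O(s+e\log c)}$. Pad its input with ignored bits up to that length, as the paper does, and apply \autoref{ass:derandom}. This gives a generator whose seed length and evaluation space are $O(s+e\log c)$. That is exactly the work-space budget of the target class, which you already spend on enumerating the error vectors $z$.

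Your self-imposed restriction to circuits of size $2^{O(s)}$ and $O(s)$-bit seeds is unnecessary. You would need padding anyway, since $c$ may be as small as $s$.

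Drop both fallbacks. In particular, ``require short runs and absorb long ones into the slack'' has no support at scale $2^{O(s)}$. For a machine that does not reset its tape, the only averaging bound you have is the $c^{O(e)}2^{O(s)}$ one you derived.

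Finally, the hypothesis $e\le c^{1-\Omega(1)}$ plays no role in the forward inclusion. It is only needed for the reverse inclusion via \autoref{thm:lcl_vs_cl}.
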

\begin{proof}
The reverse direction holds by \autoref{thm:lcl_vs_cl}, and so we focus on the forward direction. Let \( L \in \AVGtauLCSPACE{s}{c}{e} \), and hence \( L \in \ZPtauCSPACE{O(s+e\log{c})}{c} \) by \autoref{obavgclasszptau}. Let \( M \) be the deterministic machine for \( L \) as per the definition of this class, and let \( x \) be the input.

We define an initial catalytic setting $\tau$ and the configuration graph $\gmxtau$ as \textit{good} if and only if $M$ produces a non-$\perp$ answer (i.e., $M$ halts in either $\acc$ or $\rej$) when executed with input $x$ and catalytic tape $\tau$. By the definition of $M$, we know that at least $\frac{1}{2}$ of the initial catalytic tapes are good. Since $M$ always resets the catalytic tape, we also know that the configuration graphs for different initial catalytic tapes are vertex-disjoint.

Therefore, based on the argument by Buhrman et al. \cite{BuhrmanCleveKouckyLoffSpeelman14}, we can conclude that the average size (number of vertices) of the configuration graph $\gmxtau$ is at most $2^{4s}$. By Markov's inequality, the probability (over $\tau$) that the configuration graph has size $\ge 10.2^{4s}$ is at most $\frac{1}{10}$. Therefore, the probability (over $\tau$) that the configuration graph is both good and has a size smaller than $10.2^{4s}$ is at least $\frac{1}{2}-\frac{1}{10} =0.4$.

Consider the circuit $C_{x,\tau}$, which takes another catalytic setting $w$ as input and outputs $1$ if and only if the configuration graph $\gmxtau[\tau \oplus w]$ is both good and smaller than $T=10\cdot 2^{4s}$ in size. As $M$ is a deterministic machine, $\gmxtau[\tau \oplus w]$ is simply a path. Thus, the circuit can check the configuration after $T$ simulation steps of $M$ and output $1$ if it is either $\acc[\tau \oplus w]$ or $\rej[\tau \oplus w]$. Therefore, using the fact that $e \le c\le 2^s$, the size of the circuit can be bounded by a sufficiently large polynomial in $T$.  Thus, we get that $C_{x,\tau}$ outputs $1$ for at least $0.4$ fraction of inputs $w$. Note that we can give the circuit a longer input, say $w \circ r$, where $r$ is a redundant part that the circuit ignores; such that the length of input $w \circ r$ is, say, $O(T)$. The circuit size is now polynomial in its input size. 

Under the derandomization assumption in \autoref{ass:derandom}, there exists a PRG $P$ that has a seed length of $O(\log(T)) = O(s)$ and is computable in space $O(\log(T)) = O(s)$, such that $P$ $\frac{1}{10}$-fools the circuit $C_{x,\tau}$. 
Thus, for any $\tau$, fraction of seeds of $P$ making the circuit $C_{x,\tau}$ output $1$ is at least  $0.3$.

Finally, we describe the $\CSPACE{O(s+e \log{c})}{c}$ machine $N$ for $L$. The input is $x$, and the initial catalytic setting is $\tau$. $N$ simply does the following: For every seed $q$ of the PRG $P$, it computes $P(q)$ and simulates $M$ with the catalytic tape $\tau \oplus P(q)$. If, for any seed, we see $M$ give a non-$\perp$ answer, $N$ outputs that; otherwise, it outputs ERROR.

The machine \( N \) can simulate the machine \( M \) using a work space of \( O(s + e \log c) \) and a catalytic space \( c \). Additionally, since the PRG \( P \) has a seed length of \( O(s) \) and is computable in space \( O(s) \), \( N \) can exhaustively try all possible seeds. Furthermore, because \( M \) always restores the catalytic tape to its initial state, \( N \) can also restore the catalytic tape by XORing it back with \( P(q) \) after running the simulation for seed \( q \). Observe that, there exists a seed \( q \) such that when \( M \) is run with input \( x \) and catalytic tape \( \tau \oplus P(q) \), it outputs a non-\(\perp\) answer. By the definition of \( M \), this will be the correct output. Therefore, \( N \) always provides the correct answer and never outputs ERROR. 
\end{proof}

\noindent
\autoref{lossyderandlemma} is enough to show that all the catalytic classes discussed in this paper, with the sole exceptions of $\BPepsCdeltL{\delta}{\epsilon}_{\high}(\PT)$, are equivalent, since $\AVGtauLCL{e}$ is the highest class in the chain of containments.

\begin{corollary}\label{chcollpse}
If \autoref{conj:derandom_assumption} holds, then
$$\AVGtauLCL{O(1)} = \CL\PT$$
\end{corollary}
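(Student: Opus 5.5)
The statement to prove is that, under \autoref{conj:derandom_assumption}, $\AVGtauLCL{O(1)} = \CL\PT$. I would show this as a chain of inclusions that starts and ends at $\CL\PT$.

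The containment $\CL\PT \subseteq \AVGtauLCL{O(1)}$ holds unconditionally. A $\CL\PT$ machine always resets its catalytic tape exactly, so it makes $0$ errors for every $\tau$, and in particular $0 \le O(1)$ errors in expectation over $\tau$. No time bound is needed on the right-hand side.

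For the forward direction, fix $L \in \AVGtauLCL{O(1)}$. Then $L \in \AVGtauLCSPACE{k\log n}{n^k}{e}$ for some constant $k$ and constant $e$. I would apply \autoref{lossyderandlemma} with $s = k\log n$, $c = n^k$ and $e = O(1)$.
\begin{itemize}
\item The hypothesis $e \le c^{1-\Omega(1)}$ holds trivially, since $e$ is a constant and $c = n^k$.
\item The lemma places $L$ in $\CSPACE{O(s + e\log c)}{O(c)}$.
\item Here $e \log c = O(\log n)$, so the work space is $O(\log n)$ and the catalytic space is $\poly(n)$.
\end{itemize}
Hence $L \in \CL$. Finally, \autoref{clptrivial}, which uses the same conjecture, gives $\CL = \CL\PT$, and so $L \in \CL\PT$.

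The only step that needs care is parameter matching when invoking \autoref{lossyderandlemma}. That lemma is stated with $\Theta(\cdot)$ on both sides, so I have to confirm that its forward direction really lets us start from an arbitrary $\AVGtauLCSPACE{s}{c}{e}$ machine. Its proof does exactly this: it passes through \autoref{obavgclasszptau} and then the PRG-seed enumeration. Both of these only blow up the work space to $O(s + e\log c)$ and keep catalytic space $c$, which stays within logspace for constant $e$. I would also check that the PRG from \autoref{ass:derandom}, instantiated for circuit size $\poly(T)$ with $T = 2^{O(s)} = \poly(n)$, has seed length $O(\log n)$. This keeps the seed enumeration inside logspace.

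Everything else is bookkeeping. Because $\AVGtauLCL{O(1)}$ sits at the top of the containments in \autoref{fig:definitions}, the corollary also collapses all the intermediate classes (other than the high-error ones) to $\CL\PT$.
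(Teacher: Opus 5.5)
Your proof is correct and is the same argument the paper relies on. The reverse inclusion $\CL\PT \subseteq \AVGtauLCL{O(1)}$ holds by definition. Applying \autoref{lossyderandlemma} with constant $e$ (so that $e\log c = O(\log n)$) places $\AVGtauLCL{O(1)}$ inside $\CL$. Finally, \autoref{clptrivial} gives $\CL = \CL\PT$ under the same conjecture.
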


\noindent
Besides giving a full characterization of all classes discussed in this paper, another interesting way of interpreting \autoref{chcollpse} is that it ``scales down'' a result showing e.g. $\PT = \ZPP$ to show $\CL\PT = \ZPstartCL\PT$.



We also make a note about more general parameters. When derandomizing $\BPepsCdeltSpace{s}{c}{\delta}{\epsilon}$ into $\AVGtauLCSPACE{s}{c}{o(1)}$ using \autoref{epsdeltaclassinavgclass}, we end up using catalytic space $2^{O(s)}$ regardless of the value of $c$. However, in \autoref{proofepsdelderan} we show that \autoref{conj:derandom_assumption} allows us to avoid $\AVGtauLCSPACE{s}{c}{o(1)}$ and derandomize directly to $\CSPACE{s}{c}$ without incurring this overhead:

\begin{theorem}\label{epdeltaderantgh}
Let $\delta, \epsilon$ be constants such that $\delta < 2\epsilon$. Then assuming \autoref{conj:derandom_assumption} holds we have
$$\BPepsCdeltSpace{s}{c}{\delta}{\epsilon} \subseteq \CSPACE{O(s)}{O(c)}$$
\end{theorem}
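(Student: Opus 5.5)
We follow the template of the proof of \autoref{lossyderandlemma}: use the derandomization assumption to replace the "bad" initial tapes with good ones by XORing $\tau$ with PRG outputs. The difference is that the machine is now randomized and may destroy its tape, so a plain PRG-driven simulation is not safe. Instead we apply \autoref{lemmvianisansprggeneral} to the current tape.

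Let $L \in \BPepsCdeltSpace{s}{c}{\delta}{\epsilon}$ with machine $N$. The plan has four steps.
\begin{enumerate}
    \item Invoke \autoref{lemmvianisansprggeneral} to obtain the deterministic subroutines $\mathcal{F}^L$ and $\mathcal{R}^L$, acting on a chunk of catalytic tape $\tau$. The fraction of chunks $\tau$ on which $\mathcal{F}^L$ outputs $\perp$ is at most $2^{-4s}$. On every other chunk, $\mathcal{F}^L$ either decides $x$ correctly or compresses the chunk to $\tau'\circ 0$, and $\mathcal{R}^L$ can undo the compression.
    \item Define a circuit $C_{x,\tau}(w)$. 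It runs $\mathcal{F}^L$ on the chunk $\tau\oplus w$ and outputs $1$ if and only if the result is not $\perp$. Since $\mathcal{F}^L$ is deterministic with $O(s)$ work space and runs in time $2^{O(s)}$, this circuit has size $2^{O(s)}$. As in \autoref{lossyderandlemma}, pad its input so that the size is polynomial in the input length. Over uniform $w$, the circuit accepts with probability at least $1-2^{-4s}$.
    \item By \autoref{ass:derandom}, there is a PRG $P$ with seed length $O(s)$, computable in space $O(s)$, that fools $C_{x,\tau}$ to within $1/10$. Hence a $0.8$ fraction of seeds $q$ give a non-$\perp$ run on $\tau\oplus P(q)$. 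This holds for \emph{every} $\tau$, not just on average over $\tau$.
    \item The $\CSPACE{O(s)}{O(c)}$ machine cycles through the seeds $q$. For each seed it XORs $P(q)$ into the tape and runs $\mathcal{F}^L$. If the answer is $\perp$, the tape is untouched, so it XORs $P(q)$ back and tries the next seed. If $\mathcal{F}^L$ decides $x$, it un-XORs and outputs the answer. If $\mathcal{F}^L$ compresses, it gains one free bit and resets using $\mathcal{R}^L$ followed by the XOR.
\end{enumerate}

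To exploit the compression case we repeat this procedure over many chunks, as in the proof of \autoref{corroepsdeltagoodpiscl}. Each chunk either decides $x$ or frees one bit, so eventually enough space is freed to simulate $N$ by brute force. This is where the main obstacle lies. \autoref{lemmvianisansprggeneral} uses chunks of size $2^{O(s)}$, and the brute-force simulation of $N \in \BP\SPACE(s+c)$ needs $\poly(s+c)$ free bits. Naively that means $\poly(s+c)$ chunks and catalytic space $2^{O(s)}$, which is exactly the overhead the theorem claims to avoid.

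To address this, I would first re-derive \autoref{lemmvianisansprggeneral} with chunk size $O(c)$ plus $O(s)$ additional bits. The hash functions for the Nisan-style generator need only $O(s^2)$ bits, and the configurations themselves live on a length-$c$ tape. I would then argue that the compression branch can be removed altogether. Under the PRG, the tapes on which $\mathcal{F}^L$ would need to compress (large $\taubeta$-graph, or a start node too far from the final nodes) form a small fraction, by \autoref{avgsize} and Markov's inequality. So we can fold "compresses" into the $\perp$ outcome inside $C_{x,\tau}$. A good seed then always exists among the $2^{O(s)}$ seeds, and on it $\mathcal{F}^L$ directly decides $x$ while using only the original $O(c)$ tape. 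If this folding fails because the fraction of compressing tapes is not small, the fallback is to accept a multiplicative $\poly(s)$ blowup and check whether $O(c)$ still absorbs it given $c\ge s$.
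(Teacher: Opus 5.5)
There is a genuine gap: your argument never reaches catalytic space $O(c)$, which is the whole content of this theorem.

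Everything you build rests on \autoref{lemmvianisansprggeneral}. Its chunk consists of the $c$ simulated bits plus $l = 2^{20s}$ hash functions of $2m = O(s)$ bits each, i.e.\ $2^{O(s)}$ bits. Your folding step is salvageable: the compressions in that proof are injective and free $\Omega(s)$ bits, so compressing chunks are a $2^{-\Omega(s)}$ fraction. But with folding you only get a single-chunk algorithm in $\CSPACE{O(s)}{2^{O(s)}}$. The paper already gets that by combining \autoref{epsdeltaclassinavgclass} with \autoref{lossyderandlemma}, and this theorem exists precisely to remove that overhead.

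The step meant to remove it is re-deriving \autoref{lemmvianisansprggeneral} with $O(s^2)$ bits of hash functions, and it is unsupported. That construction uses one hash function per output bit because it grows the generator one layer at a time, $\mathrm{PRG}_{i+1}(q) = h_{i+1}(q)\circ\mathrm{PRG}_i(q)$. It explores backwards from the halting configurations and tests $h_{i+1}$ only against the at most $T$ vertices of $S_i$. You give no argument that this can be replaced by a recursive Nisan generator with $O(s)$ hash functions over a configuration graph of width $2^{c+s}$. Even granting it, $c + O(s^2)$ is not $O(c)$ when $c = o(s^2)$, since the standing assumption is only $s \le c$. A multiplicative $\poly(s)$ blowup is not absorbed by $O(c)$ either.

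The missing idea is that under \autoref{conj:derandom_assumption} no hash functions need to be stored at all.
\begin{enumerate}
\item The paper first shows (\autoref{epsdelclassinbpp}) that the following decides $L$ with a constant gap, using \autoref{avgsize}: choose a uniformly random tape, take a random walk of length $T\cdot 2^{4s}$ from its start configuration, and accept iff the walk reaches the accepting configuration with that same tape.
\item Shifting the current tape $\tau$ by $w$ turns this, for every $\tau$, into a circuit $C_{x,\tau}(w\circ y)$ of size $2^{O(s)}$. This circuit is fooled by the hardness-based PRG $P$. Since $P$ has an $O(s)$-bit seed and is computable in space $O(s)$, both $w$ and $y$ are recomputed on the fly rather than stored.
\item For each seed and each $i$, the catalytic algorithm checks by reversible DFS whether the layer-$i$ copy of $\start[\pi\oplus w]$ lies in the $y_{\le i}$-tree of $\acczp{\pi\oplus w}$. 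These trees have unique sinks and are disjoint across tapes, so the lossiness of the machine causes no trouble.
\item Whenever some tree exceeds $2^{10ds}$, the algorithm applies timestamp compression to one of $\lceil c/s\rceil$ targets of $10ds$ bits each. So the whole tape has length $O(c)$.
\item If every target gets compressed, more than $c$ bits are freed. That suffices to brute-force a tape $\pi$ on which all these trees are small. Such a $\pi$ exists by disjointness, Markov's inequality, and a union bound over seeds and $i$.
\end{enumerate}
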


\noindent
This is a more general ``scaling down'' of a result due to \cite{KouckyMertzPyneSami25}, which shows that \autoref{conj:derandom_assumption}, which implies $\BPL = \Logspace$, thus implies $\BPCSPACE{s}{c} \subseteq \CSPACE{O(s)}{O(c)}$ (the former class being equivalent to $\BPepsCdeltSpace{s}{c}{0}{\epsilon}$).

\addcontentsline{toc}{section}{References}
\bibliographystyle{alpha}
\bibliography{refs.bib}

\appendix

\section{Proof of  \autoref{epdeltaderantgh}}\label{proofepsdelderan}


For constants $0 \le \delta < 2\epsilon$, we consider an arbitrary language $L \in \BPepsCdeltSpace{s}{c}{\delta}{\epsilon}$, with $M$ being the corresponding machine according to the definition of the class. Going forward, we will work with a fixed input $x$. Without loss of generality\footnote{Let $\delta',\epsilon'$ be rationals such that $\delta < \delta' < 2\epsilon' < 2\epsilon$. Then, by definition $ \BPepsCdeltSpace{s}{c}{\delta}{\epsilon} \subseteq \BPepsCdeltSpace{s}{c}{\delta'}{{\frac{\epsilon'}{2}}}$.}, we can assume $\delta$ and $\epsilon$ are rational.

The proof of the following lemma demonstrates that the class $\BPepsCdeltSpace{s}{c}{\delta}{\epsilon}$ (low-error regime) is contained within the class of languages that can be solved by a bounded-probabilistic machine running in time $2^{O(s)}$. This will be useful for proving \autoref{epdeltaderantgh}. Recall our discussion from \autoref{sc:lowerr} about the $\taubeta$-graph, in which we had set $\beta$ to be the constant $\frac{1}{2} \left(1 - \frac{\delta}{2\epsilon}\right)$. It followed from \autoref{tauepsp} that, for this value of $\beta$, the probability of exiting the $\taubeta$-graph via $\perp$ is at most the constant $\gamma = \frac{2\delta}{1 + \frac{\delta}{2\epsilon}}$, which is less than $2\epsilon$.

\begin{lemma}\label{epsdelclassinbpp}
Let $\delta,\epsilon$ be constants such that $\delta < 2\epsilon$. Then, we have that $\BPepsCdeltSpace{s}{c}{\delta}{\epsilon} \subseteq \mathrm{BPTIME[\frac{1}{2}-\epsilon,\frac{1}{2}-\epsilon + \eta](2^{10s})}$, for some constant $\eta > 0$.
\end{lemma}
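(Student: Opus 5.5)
The plan is to simulate $M$ on a random initial catalytic tape but only inside the $\taubeta$-graph, and to truncate the simulation after $2^{O(s)}$ steps. The notation $\mathrm{BPTIME}[a,b](T)$ is read as: a time-$T$ randomized machine accepts $x\in L$ with probability above $b$ and accepts $x\notin L$ with probability at most $a$ (or the symmetric condition with the roles swapped), so the two acceptance probabilities are separated by a constant gap $\eta$. Concretely, the simulating machine $N$ first picks $\tau\in\{0,1\}^c$ uniformly at random; since $c\le 2^s$, writing $\tau$ down fits within the time budget. It then runs $M$ on $x$ with catalytic tape $\tau$ and fresh random bits. It halts and rejects (i.e.\ outputs $0$) once more than $T=2^{9s}$ steps have elapsed. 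Otherwise it outputs whatever $M$ outputs, in every case including the case where the tape was not reset. Finally, $N$ runs in time $2^{O(s)}\le 2^{10s}$ with the right constants; it is not catalytic, so it may destroy $\tau$ freely.

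The first step bounds the truncation loss on average over $\tau$. By \autoref{avgsize}, the $\taubeta$-graph has average size at most $2^{4s}$. By Markov's inequality, all but a $2^{-s}$ fraction of the $\tau$ have a $\taubeta$-graph of size at most $2^{5s}$. For such a $\tau$, the walk stays inside the $\taubeta$-graph until it hits $\acc$, $\rej$, or $\perp$. Each $\taubeta$-node reaches a halt within the graph with probability at least $\beta$, and since the graph is acyclic with at most $2^{5s}$ nodes, every walk inside it has length at most $2^{5s}$. So the truncation at $T$ steps never cuts a walk that is still in the $\taubeta$-graph. The only way to exceed $T$ steps is to first leave via $\perp$, and by \autoref{tauepsp} that happens with probability at most $\gamma$.

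The second step checks the acceptance probabilities. If $x\notin L$, then $M$ accepts with probability at most $\frac12-\epsilon$ for every $\tau$, and truncation only converts runs into rejections. Hence $N$ accepts with probability at most $\frac12-\epsilon$. If $x\in L$, then for a good $\tau$ the walk reaches $\acc$ inside the $\taubeta$-graph, hence within $T$ steps, with probability at least $\frac12+\epsilon-\gamma$. Averaging over $\tau$ and losing at most $2^{-s}$ for bad tapes, $N$ accepts with probability at least $\frac12+\epsilon-\gamma-2^{-s}$. Because $\gamma<2\epsilon$, the gap between the two cases is $2\epsilon-\gamma-2^{-s}$. Set $\eta=(2\epsilon-\gamma)/2>0$; for large $s$ the gap is at least $\eta$. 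Then $N$ accepts $x\in L$ with probability at least $\frac12-\epsilon+\eta$ and accepts $x\notin L$ with probability at most $\frac12-\epsilon$, which gives the claimed thresholds.

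The step needing most care is the claim that truncation only discards runs that exit via $\perp$. This relies on the $\taubeta$-graph being acyclic, since $\gmx$ is a DAG, together with the size bound: a DAG on at most $2^{5s}$ vertices has no path longer than that. If acyclicity were unavailable, I would instead use the fact that each node reaches a halt with probability at least $\beta$ within $2^{5s}$ steps and amplify across blocks, so that truncation at $2^{9s}$ costs only a $(1-\beta)^{2^{4s}}$ fraction of runs.
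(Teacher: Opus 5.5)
Your proposal is correct and follows essentially the same route as the paper: you sample $\tau$ uniformly, run a length-truncated random walk from $\start$, use \autoref{avgsize} with Markov so that the $\taubeta$-graph is small for most $\tau$, and combine acyclicity with \autoref{tauepsp} to bound the acceptance probability. The only difference is parametric: you truncate at $2^{9s}$ with a $2^{5s}$ Markov threshold, losing $2^{-s}$ and getting $\eta=(2\epsilon-\gamma)/2$ for large $s$, whereas the paper walks $T\cdot 2^{4s}$ steps for a constant $T$ and absorbs a $(1-1/T)$ factor through explicit arithmetic.

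One side remark is not quite right. It is not true that each $\taubeta$-node halts \emph{within the graph} with probability at least $\beta$, because that mass may pass through non-$\taubeta$-nodes. So your non-acyclic fallback would not work as stated, but your main argument never uses it.
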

\begin{proof}
Consider the algorithm $\mathcal{A}$ which works as follows (given input $x$): it first selects $\tau$, an initial setting for the catalytic contents, uniformly at random. Then, it takes a random walk of length $T \cdot2^{4s}$ from $\start$ in $G_{M,x,\tau}$. If it reaches $\acc$, it accepts; otherwise, it rejects. Here, $T$ is a constant that we set to $
\lceil 10(1+\frac{\frac{1}{2}-\epsilon}{(2\epsilon-\gamma)}) \rceil$.
As the size of $\tau$ is $c \le 2^s$, being generous, it follows that $\mathcal{A}$ is a probabilistic algorithm that runs in time $2^{10s}$.

If the input $x \not \in L$, then regardless of $\tau$, the probability of reaching $\acc$ from $\start$ is $\le \frac{1}{2}-\epsilon$; thus $\mathcal{A}$ accepts with a probability of $\le \frac{1}{2}-\epsilon$. Next, consider the case where $x \in L$. Using \autoref{avgsize} and the Markov inequality, we know that with a probability (over $\tau$) of at least \(1-\frac{1}{T}\), the size of the \(\tau^{\beta}\)-graph is less than \(T \cdot 2^{4s}\). In such a  scenario, since the length of the walk is greater than the size of the \(\tau^{\beta}\)-graph, the random walk will either exit the \(\tau^{\beta}\)-graph to a non-\(\tau^{\beta}\)-node or halt at either \(\acc\) or \(\rej\) within the \(\tau^{\beta}\)-graph. Using \autoref{tauepsp} we know that the probability the walk leaves the \(\tau^{\beta}\)-graph (to a non-\(\tau^{\beta}\)-node) is $\le \gamma< 2\epsilon$, for our choice of $\beta$. By the definition of $M$, the probability that the walk goes to $\rej$ (in the $\tau^{\beta}$-graph)  is $\le \frac{1}{2}-\epsilon$. Thus, $\mathcal{A}$ accepts with a probability of at least

\begin{align*}
\bigg(1-\frac{1}{T}\bigg)\bigg(1-\bigg(\frac{1}{2}-\epsilon\bigg) -\gamma\bigg) &\ge \bigg(1-\frac{2\epsilon-\gamma}{10(\frac{1}{2}+\epsilon-\gamma)}\bigg)\bigg(\frac{1}{2}+\epsilon -\gamma\bigg) \\
&\ge \frac{5+8\epsilon-9\gamma}{10} \\
&\ge \frac{1}{2} + \frac{8\epsilon-9\gamma}{10} \\
&= \frac{1}{2} -\epsilon + \frac{9(2\epsilon-\gamma)}{10}
\end{align*}
Since $(2\epsilon -\gamma) > 0$ is a constant, the probability that $\mathcal{A}$ accepts when $x \in L$ is bounded away from $\frac{1}{2}-\epsilon$ by a positive constant; hence, the lemma follows by setting $\eta = \frac{9(2\epsilon-\gamma)}{10}$.
\end{proof}

\begin{corollary}
Let $\delta,\epsilon$ be constants such that $\delta < 2\epsilon$. Then, we have that $\BPepsCdeltL{\epsilon}{\delta} \subseteq \mathrm{BPP}$.
\end{corollary}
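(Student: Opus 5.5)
This corollary follows directly from \autoref{epsdelclassinbpp} in the logspace setting $s = O(\log n)$, $c = n^{O(1)}$. (The arguments of $\BPepsCdeltL{\epsilon}{\delta}$ in the statement are in swapped order; we read the class as $\BPepsCdeltL{\delta}{\epsilon}$ in the low-error regime $\delta < 2\epsilon$.)

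First, take $L \in \BPepsCdeltL{\delta}{\epsilon}$. Then $L \in \BPepsCdeltSpace{k\log n}{n^k}{\delta}{\epsilon}$ for some constant $k$. \autoref{epsdelclassinbpp} requires $c \le 2^s$, which we can assume: pad the work space to $s = k\log n$, so $c = n^k = 2^s$. The lemma then gives a probabilistic algorithm $\mathcal{A}$ with the following properties:
\begin{itemize}
\item it runs in time $2^{10s} = n^{10k}$, which is polynomial;
\item it accepts with probability at most $\frac{1}{2}-\epsilon$ when $x \notin L$;
\item it accepts with probability at least $\frac{1}{2}-\epsilon+\eta$ when $x \in L$, for a constant $\eta > 0$.
\end{itemize}

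Second, convert this constant gap into standard $\BPP$ error. The gap is centered at $\frac{1}{2}-\epsilon$ rather than at $\frac{1}{2}$, so we use threshold amplification rather than a plain majority vote. Run $\mathcal{A}$ independently $m = O(1/\eta^2)$ times, which is a constant number of runs. Accept exactly when the fraction of accepting runs exceeds $\frac{1}{2}-\epsilon+\frac{\eta}{2}$. By a Chernoff (Hoeffding) bound, the answer is correct with probability at least $\frac{2}{3}$ on every input. The running time is still polynomial, so $L \in \BPP$.

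The only points needing care are that the threshold $\frac{1}{2}-\epsilon+\frac{\eta}{2}$ can be hardwired into the algorithm, since $\epsilon$ and $\eta$ are constants, and that the uniform sampling of $\tau \in \{0,1\}^c$ inside $\mathcal{A}$ uses only polynomially many random bits. Neither is a real obstacle, because all the substantive work is done in \autoref{epsdelclassinbpp}.
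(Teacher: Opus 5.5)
Your proposal is correct and follows the paper's intended argument: the paper states this corollary without separate proof, as the instantiation of \autoref{epsdelclassinbpp} at $s = O(\log n)$, so that $2^{10s} = n^{O(1)}$, followed by the standard amplification of a constant gap around $\frac{1}{2}-\epsilon$ via constantly many repetitions and a hardwired threshold, exactly as you do. You are also right that the order of the arguments in $\BPepsCdeltL{\epsilon}{\delta}$ is a typo for $\BPepsCdeltL{\delta}{\epsilon}$.
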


\begin{remark}
Using \autoref{epsdelinavgcorr} and \autoref{lamezpprmk} we already know that for constants $\delta < 2\epsilon$, $\BPepsCdeltL{\delta}{\epsilon} \subseteq \mathrm{ZPP}$.
\end{remark}

Consider the algorithm $\mathcal{A}'$ that takes as input $x$ and an initial catalytic setting $\tau$. It behaves exactly like the algorithm $\mathcal{A}$ mentioned in the proof above, but it performs the walk over $\gmxtau[\tau\oplus w]$, where $w$ is chosen uniformly at random. As we were generous with time estimates, $\mathcal{A}'$ also runs in probabilistic time $2^{10s}$. As for any $\tau$, $\tau\oplus w$ also gives a uniform distribution over all catalytic settings, $\mathcal{A}'$ decides $x$ with the same probability separation as $\mathcal{A}$. Let $C_{x,\tau}$ denote the circuit that performs the computation of $\mathcal{A}'$ for given random bits as input; i.e., the circuit takes an input of length $N= c+T
\cdot 2^{4s}$, where the first $c$ bits are for the random catalytic tape setting; and the latter bits are for the random walk. Since $c \le 2^s$, the size of the circuit can be bounded by a large enough polynomial in $2^{10s}$. Thus, the size of the circuit is polynomial in the size of the input; therefore, by \autoref{ass:derandom}, there exists a PRG $P: \{0,1\}^{d'\log{N}} \rightarrow \{0,1\}^N$ that $\frac{1}{N}$-fools the circuit $C_{x,\tau}$ and can be computed in the space $O(\log{N}) = O(s)$ (where $d'$ is some constant).

\begin{observation}\label{epsdeltcircuitargument}
The circuit $C_{x,\tau}$ outputs $1$ (accepts) on at most a $\frac{1}{2}-\epsilon$ fraction of inputs when $x \notin L$, and on at least a $\frac{1}{2}-\epsilon + \frac{9(2\epsilon - \gamma)}{10}$ fraction of inputs when $x \in L$.
\end{observation}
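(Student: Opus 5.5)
The statement follows from the analysis of $\mathcal{A}$ in the proof of \autoref{epsdelclassinbpp}, once we check that $C_{x,\tau}$ computes exactly the same acceptance event as $\mathcal{A}$ under the change of variables $w \mapsto \tau \oplus w$. Fix $x$ and $\tau$. An input to $C_{x,\tau}$ is a pair $(w, y)$ with $w \in \{0,1\}^c$ and $y \in \{0,1\}^{T\cdot 2^{4s}}$. The circuit starts at $\start[\tau\oplus w]$ and follows the walk in $\gmxtau[\tau\oplus w]$ whose coin flips are given by $y$. It outputs $1$ exactly when the walk reaches $\acc[\tau\oplus w]$ within the step budget. The map $w \mapsto \tau\oplus w$ is a bijection on $\{0,1\}^c$. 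Hence for uniform $(w,y)$ the pair $(\tau\oplus w, y)$ is uniform, and the fraction of inputs accepted by $C_{x,\tau}$ equals the acceptance probability of $\mathcal{A}$ on $x$. It therefore suffices to restate the two bounds from that proof as fractions of inputs.

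\textbf{Case $x\notin L$.} Fix any $w$. By the definition of $\BPepsCdeltSpace{s}{c}{\delta}{\epsilon}$, for every initial tape (in particular $\tau\oplus w$) the machine accepts with probability at most $\frac12-\epsilon$ over its coins. Truncating the walk at $T\cdot 2^{4s}$ steps can only lower the probability of reaching $\acc[\tau\oplus w]$, since the circuit rejects on timeout. Averaging over $w$ shows that $C_{x,\tau}$ accepts at most a $\frac12-\epsilon$ fraction of its inputs.

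\textbf{Case $x\in L$.} Apply \autoref{avgsize} with Markov's inequality over the uniform tape $\tau\oplus w$. Then for at least a $1-\frac1T$ fraction of $w$, the $\taubeta$-graph for $\tau\oplus w$ has fewer than $T\cdot 2^{4s}$ vertices. For such $w$, the configuration graph $\gmx$ is a DAG, so a walk cannot revisit a node. It therefore leaves the $\taubeta$-graph or halts at $\acc$ or $\rej$ before the budget runs out. By \autoref{tauepsp}, leaving happens with probability at most $\gamma$, and reaching $\rej$ happens with probability at most $\frac12-\epsilon$. So acceptance has probability at least $\frac12+\epsilon-\gamma$.

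Multiplying the two bounds and using the chosen $T = \lceil 10(1+\frac{1/2-\epsilon}{2\epsilon-\gamma})\rceil$ reproduces the calculation in \autoref{epsdelclassinbpp}. This gives an accepted fraction of at least $\frac12-\epsilon+\frac{9(2\epsilon-\gamma)}{10}$.

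The only delicate point is the acyclicity and truncation step. A walk of length equal to the graph size must exit or halt only because $\gmx$ is a DAG, and the timeout must be counted as a rejection, so that it works against us only in the $x\in L$ case, where the Markov bound already covers it. Everything else is the bijection $w\mapsto\tau\oplus w$ transferring the earlier bounds verbatim.
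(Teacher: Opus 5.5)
Your proof is correct and takes the same approach as the paper. The paper also notes that $\tau\oplus w$ is uniform for uniform $w$, so $C_{x,\tau}$ accepts exactly the fraction of inputs that $\mathcal{A}$ accepts, and then cites the bounds from the proof of \autoref{epsdelclassinbpp}; your re-derivation spells out steps the paper leaves implicit, namely acyclicity and counting timeouts as rejection.
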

\begin{proof}
    Follows from the definition of the circuit $C_{x,\tau}$ and \autoref{epsdelclassinbpp}.
\end{proof}

Before giving the proof of \autoref{epdeltaderantgh}, we will need a few definitions.

\begin{definition}[Layered Configuration Graph]\label{layeredconfgrph}
Let \( k \in \mathbb{N} \). We define the layered configuration graph with \( k \) layers associated with \( \gmx \) as follows:
\begin{itemize}
\item Its vertex set consists of pairs $\conf{v}{i}$ for every \( v \in V(\gmx) \) and \( i \in [k] \cup \{0\}\).
\item For every directed edge \((v,v') \) in \(\gmx\) with label \( b \in \{0,1\} \) and \( i \in [k] \), the layered graph contains a directed edge from $\conf{v}{i}$ to $\conf{v'}{i-1}$, carrying the same label \( b \).
\item Additionally, for every halt configuration \( v \in V(\gmx) \) and \( i \in [k] \), the layered graph includes a directed edge from $\conf{v}{i}$ to $\conf{v}{i-1}$ that is labeled with both \(0\) and \(1\).
\end{itemize}
One can imagine the layers arranged from left to right, numbered from $k$ to $0$. Note that the vertices in the last layer (layer $0$) have no outgoing edges.
\end{definition}

Thus, every vertex in the layered configuration has at most two outgoing edges, with each edge label assumed to occur with a probability $\frac{1}{2}$. Notice that the probability of transitioning from $\conf{v}{i}$ to $\conf{v'}{0}$ in the layered graph, where \(v'\) is a halting configuration, is the same as the probability (over the read-once randomness of \(M\)) of reaching the configuration \(v'\) within $i$ steps when the machine \(M\) is executed starting from \(v\).

\begin{remark}[Notation]
The vertices of the layered configuration graph, which we will simply refer to as configurations, are denoted by \( \conf{v}{i} \). In this context, \( v \in V(\gmx) \) is represented as \( \conf{\pi}{u} \), where \( \pi \in \{0,1\}^c \) and \( u \in \{0,1\}^s \). To avoid nested brackets, we will also denote \( \conf{v}{i} \) as \( \confl{\pi}{u}{i} \). 
\end{remark}

\begin{definition}[$y$-tree]\label{ytree}
Let \( y \in \{0,1\}^m \). Consider the subgraph of the layered configuration graph of \(\gmx\), that uses the same vertex set but includes only specific edges. For each \( i \in [m] \) and for every vertex \( \conf{v}{i} \) in layer \( i \), we include only the outgoing edge labeled with \( y_{m-i+1} \).
For $i>m$ we include no outgoing edges.

For any vertex \( \conf{v'}{j} \), the connected component of this subgraph that contains \( \conf{v'}{j} \) is referred to as the \emph{\( y \)-tree} of \( \conf{v'}{j} \), which is denoted as \( y \)-tree\((\conf{v'}{j}) \).
\end{definition}

Note that by definition $y$-tree($\conf{v}{j}$) is the isolated vertex $\conf{v}{j}$ for $j>m$. Whereas for $j\le m$, $y$-tree($\conf{v}{j}$) is a directed tree whose unique sink is a vertex in layer $0$ that can be reached from $\conf{v}{j}$ by following the edge labels obtained by reading the bits $y_{m-j+1}$ to $y_m$.  

\begin{remark}\label{extendingthelemmatolayered}
    Assuming we have oracle access to \( y \), the subroutines outlined in \autoref{reversiblewalk} can be easily extended to traverse \( y \)-tree($\conf{v}{j}$) while maintaining the time and space guarantees provided in \autoref{reversiblewalk}. This is possible as long as \( |y| \le 2^{O(s)} \), which will be the case in our usage; since we only need an additional \( O(s) \) work space space to keep track of the layer number of the current vertex/configuration we are at while performing a walk.  Furthermore, since a \( y \)-tree(\(\conf{v}{0}\)) has \(\conf{v}{0}\) as its sink, the guarantees from \autoref{reversiblewalk} for a halting configuration also hold for the sink \(\conf{v}{0}\). 
\end{remark}

 \begin{definition}[$y$-DFS]\label{ydfs}
 When we refer to performing a $y$-DFS starting from $\conf{v}{j}$, we mean that we execute the walk defined by $\nextstep$ (in \autoref{reversiblewalk}) over $y$-tree($\conf{v}{j}$), starting from $\conf{v}{j}$. Conversely, when we say we conduct a $y$-DFS from $\conf{v}{j}$ in a reversible manner, we mean that we perform the walk in reverse order using $\stepback$ (also in \autoref{reversiblewalk}).
 \end{definition}

\begin{figure}[h!]
\begin{center}
\begin{tikzpicture}[every node/.style={font=\small}]
  \def\boxwidth{2.5}
  \def\boxheight{1}
  \def\braceYOffset{0.2}  
  \def\arrowYOffset{0.4}  
  \foreach \i/\name in {0/$\tau$,1/\textbf{tar}$_1$,2/\textbf{tar}$_2$,3/\dots,4/\textbf{tar}$_r$} {
    \draw (\i*\boxwidth,0) rectangle ++(\boxwidth,\boxheight);
    \node at (\i*\boxwidth+0.5*\boxwidth,0.5*\boxheight) {\name};
  }

  \foreach \i in {1,2,3,4} {
    \draw (\i*\boxwidth,0) -- ++(0,\boxheight);
  }

\end{tikzpicture}
\caption{Catalytic tape}
\label{tape}
\end{center}
\end{figure}

To prove \autoref{epdeltaderantgh}, we describe a deterministic catalytic algorithm $\mathcal{B}$, which always resets the tape and decides the language $L$. $\mathcal{B}$ has the following catalytic tape structure: it consists of $c$ bits denoted by $\tau$ in \autoref{tape}, followed by $r$ strings $\textbf{tar}_1$ to $\textbf{tar}_r$, called targets. We set $r=\lceil \frac{c}{s} \rceil$, and the length of each string $\textbf{tar}_j$ is $10d\cdot s$, where $d$ is a constant such that $d \ge 5$ and $d\cdot s$ is an upper bound to the space required to compute the PRG $P$ (described above) for a given seed. This includes the space to hold a seed. At a high level, $\mathcal{B}$ employs a \emph{compress-or-compute} approach introduced by Cook et al. \cite{CookLiMertzPyne25}. Here, $\mathcal{B}$ either compresses the catalytic tape to create enough space for an inefficient brute-force algorithm or is able to decide the input quickly.

Assume that we are processing $\textbf{tar}_j$ and the first $c$ bits of the catalytic tape at this point are $\pi$. $\mathcal{B}$ works as follows: 

\begin{enumerate}
\item We iterate through all the seeds $q$. For each seed $q$, $P(q)$ gives us a string of length $N=c+T\cdot 2^{4s}$; where we treat the first $c$ bits as a catalytic setting $w$ and the latter bits as choices for a random walk, denoted by $y$.
\item For each $w,y$ and $i\in [T\cdot 2^{4s}]$, we call $\countsize$ from \autoref{reversiblewalk} on $y_{\le i}$-tree$(\acczp{\pi \oplus w})$ to see if any of the trees is larger than $2^{10d\cdot s}$.
\item \label{caseanyonebig} If for some $w,y,i$ the size of $y_{\le i}$-tree$(\acczp{\pi \oplus w})$ is larger than $2^{10d\cdot s}$, 
we compress the contents of the catalytic tape using time-step compression \cite{CookLiMertzPyne25} as follows. 
Let $val(\textbf{tar}_j)$ represent the integer value of the binary string $\textbf{tar}_j$ incremented by one. 
Thus, $val(\textbf{tar}_j)$ is an integer in $[2^{10d\cdot s}]$. 
We call the walk procedure $\nextstep$ for $val(\textbf{tar}_j)$ times to get from $\acczp{\pi \oplus w}$ to some configuration $\confl{\pi'}{u}{k}$ where $\pi'$ is the catalytic part, $u$ is work space part, and $k \le T\cdot 2^{4s} < 2^{5s}$ is the layer index. 
In place of $\textbf{tar}_j$ on the catalytic tape, we write $q \circ u \circ k \circ i \circ 0^{10d\cdot s - (ds+5s+5s + 5s)}$ where $q$ is the seed corresponding to $w,y$. 
Here, we used the fact that $q$ can be specified using $ds$ bits and $k,i$ using $5s$ bits each.
Also, the work space part of the configuration $u$ (which includes head locations, etc.) can be specified using $5s$ bits. 
Since $d \ge 5$, we managed to free up at least $6d\cdot s$ bits. 
The walk also replaced $\pi \oplus w$ with $\pi'$ on the catalytic tape. 
After compression, we move on to the next target $\textbf{tar}_{j+1}$.

This step is reversible, as we can always decompress as follows: we walk back using the subroutine \(\stepback\) in \autoref{reversiblewalk}, from \(\confl{\pi'}{u}{k}\) over the \(y_{\le i}\)-tree of \(\acczp{\pi \oplus w}\), where \(y_{\le i}\) can be computed using \(i\) and \(q\). We count the number of steps it takes for us to reach the unique halting/sink state $\acczp{\pi \oplus w}$. This count is precisely the value of $\textbf{tar}_j$. Thus, we recover $\textbf{tar}_j$. At this point, the first $c$ bits of the catalytic tape are $\pi \oplus w$. Finally, we can compute $w$ from $q$ and XOR it with the first $c$ bits of the catalytic tape, resetting it to $\pi$. 
\item \label{caseallsmall} If the sizes of all the $y_{\le i}$-tree$(\acczp{\pi \oplus w})$  are smaller than $2^{10d\cdot s}$, we count the number of pairs $(w,y)$ such that for some $i$, the $y_{\le i}$-DFS from $\acczp{\pi\oplus w}$ visits $\conf{\start[\pi\oplus w]}{i}$. We call such a pair $(w,y)$ \textit{accepting}. If the fraction of seeds $q$ for which the corresponding pair $w,y$ is accepting is at least $\frac{1}{2}-\epsilon +\frac{27(2\epsilon-\gamma)}{40}$, we accept; otherwise, we reject. In this step, we essentially do not change the catalytic contents $\pi$ (each time we XOR it with some $w$, we XOR it back before moving to the next seed). Thus, before outputting the final answer, we just decompress all the previous targets as described in step \ref{caseanyonebig}.
\item \label{alltargetscompressed} In the case where we compress all the targets without hitting step \ref{caseallsmall}, we free up the last \(6d \cdot s\) bits of each target, thus freeing up \(6d \cdot s \cdot r \ge 6d \cdot s \cdot \frac{c}{s} > c\) bits in total on the catalytic tape. We claim that this is sufficient to decide the input. Again, before we provide the final output, we decompress all targets as described in step \ref{caseanyonebig}.
\end{enumerate}

The seed length \( q = O(s) \) and \( P(q) \) can be computed using space \( O(s) \). Additionally, any DFS we conduct is for at most \( 2^{O(s)} \) steps. Therefore, using \autoref{reversiblewalk}, we can see that all steps utilize \( O(s) \) space, except for step \ref{alltargetscompressed}. The space considerations for this step are established in \autoref{correctnesstwo} below. The size of the catalytic tape is given by \( c + 10d \cdot s \cdot \lceil \frac{c}{s} \rceil = O(c) \). Furthermore, we always reset the catalytic tape since the compression described in step \ref{caseanyonebig} is reversible. The correctness of the algorithm \( \mathcal{B} \) can be established through the following two observations:

\begin{observation}\label{correctnessone}
In case $\mathcal{B}$ hits step \ref{caseallsmall}, it correctly decides the input $x$.    
\end{observation}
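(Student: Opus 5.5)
The plan is to show that in step~\ref{caseallsmall}, for every seed $q$ with $P(q)=(w,y)$, the pair $(w,y)$ is accepting exactly when the circuit $C_{x,\pi}$ accepts on input $w\circ y$. Once this holds, the fraction of accepting seeds equals the fraction of seeds on which $C_{x,\pi}$ outputs $1$. Because $P$ $\frac{1}{N}$-fools $C_{x,\pi}$, \autoref{epsdeltcircuitargument} then gives the following. When $x\notin L$, this fraction is at most $\frac12-\epsilon+\frac1N$. When $x\in L$, it is at least $\frac12-\epsilon+\frac{9(2\epsilon-\gamma)}{10}-\frac1N$. For large enough $n$ (so $N\ge 2^s$ is large), the threshold $\frac12-\epsilon+\frac{27(2\epsilon-\gamma)}{40}$ lies strictly between these two values, since $2\epsilon-\gamma>0$ is a constant. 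So the accept/reject decision is correct.

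The key step is the equivalence itself. Recall that $C_{x,\pi}$ on $(w,y)$ walks from $\start[\pi\oplus w]$ in $\gmxtau[\pi\oplus w]$, following the random choices given by $y$ for $T\cdot 2^{4s}$ steps, and accepts iff it reaches $\acc[\pi\oplus w]$. In the layered graph of \autoref{layeredconfgrph} with $y$-edges, the walk of length $i$ starting from $\conf{\start[\pi\oplus w]}{i}$ uses the bits $y_{m-i+1},\dots,y_m$. I need to align indexing: the prefix $y_{\le i}$ defines the $y_{\le i}$-tree whose layer-$i$ edges read $y_1$ first. So the walk from $\conf{\start[\pi\oplus w]}{i}$ in the $y_{\le i}$-tree follows exactly the first $i$ random bits $y_1,\dots,y_i$. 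Halting configurations self-loop down the layers, so this walk ends at $\acczp{\pi\oplus w}$ iff $M$ run for $i$ steps from $\start[\pi\oplus w]$ with randomness $y$ has halted in $\acc[\pi\oplus w]$. Since the acceptance halt is absorbing, this happens for some $i\le T\cdot2^{4s}$ iff it happens for $i=T\cdot 2^{4s}$, which is the circuit's condition. Hence $C_{x,\pi}(w\circ y)=1$ iff some $i$ has $\conf{\start[\pi\oplus w]}{i}$ in the $y_{\le i}$-tree of $\acczp{\pi\oplus w}$.

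Next I would argue that the $y_{\le i}$-DFS from $\acczp{\pi\oplus w}$ via \autoref{extendingthelemmatolayered} visits every node of that tree. This holds because in step~\ref{caseallsmall} every such tree has size below $2^{10ds}$, so running $\nextstep$ for a full cycle of $O(2^{10ds})$ steps covers it. Membership can therefore be detected, and both tests agree. I expect the main obstacle to be this bookkeeping: matching the bit order of $y_{\le i}$ with the order the circuit consumes randomness, and checking that absorbing halts make ``some $i$'' equivalent to ``the full length''. A secondary point is that $\pi$ here is the current (possibly not original) content, but $C_{x,\pi}$'s guarantee from \autoref{epsdeltcircuitargument} holds for every $\pi$, so this is harmless. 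Finally, the counting and thresholding use $O(s)$ space, and the catalytic tape is restored as already described.
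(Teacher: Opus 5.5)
Your proposal is correct and follows essentially the paper's argument. You identify accepting seeds with seeds on which $C_{x,\pi}$ accepts, using that the small $y_{\le i}$-trees are fully traversed by the DFS, and then combine $\frac{1}{N}$-fooling with \autoref{epsdeltcircuitargument} to place the threshold $\frac{1}{2}-\epsilon+\frac{27(2\epsilon-\gamma)}{40}$ strictly between the $x\in L$ and $x\notin L$ cases; your explicit checks of the bit order in the $y_{\le i}$-tree and of the absorbing halt self-loops are bookkeeping the paper leaves implicit.
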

\begin{proof}
Recall that the PRG $P$ $\frac{1}{N}$-fools the circuit $C_{x,\pi}$ (for every $\pi$). Since $\epsilon$ and $\gamma$ are constants satisfying $(2\epsilon - \gamma) > 0$, for large enough $N$, we have that  $\frac{1}{N} < \frac{9(2\epsilon - \gamma)}{40}$. Consequently, by applying \autoref{epsdeltcircuitargument}, we conclude that if $x \in L$, then at least a $\frac{1}{2} - \epsilon + \frac{27(2\epsilon - \gamma)}{40}$ fraction of the seeds cause the circuit to output $1$; whereas if $x \notin L$, then at most a $\frac{1}{2} - \epsilon + \frac{9(2\epsilon - \gamma)}{40}$ fraction of the seeds make the circuit output $1$. 
Recall that the circuit accepts the input $P(q) \coloneqq w \circ y$ iff the walk as per $y$ from $\start[\pi\oplus w]$ reaches $\acc[\pi\oplus w]$ (in $\gmxtau[\pi \oplus w]$). Because all the $y_{\le i}$-DFS are small—specifically, they return to $\acczp{\pi\oplus w}$ within $2^{10d\cdot s}$ steps—\autoref{reversiblewalk} guarantees that for each $i\in [T\cdot 2^{4s}]$, every vertex in the $y_{\le i}$-tree($\acczp{\pi\oplus w}$) is visited at least once. Consequently, we can equivalently state that the circuit accepts the input $w \circ y$ if and only if the $y_{\le i}$-DFS starting from $\acczp{\pi\oplus w}$ reaches $\conf{\start[\pi\oplus w]}{i}$ for some $i\in [T\cdot 2^{4s}]$. Hence, it suffices to check the proportion of accepting pairs $(w,y)$ and accept if and only if this fraction is at least $\frac{1}{2}-\epsilon + \frac{27(2\epsilon - \gamma)}{40}$.



\end{proof}

\begin{observation}\label{correctnesstwo}
In case $\mathcal{B}$ reaches step \ref{alltargetscompressed} and frees up $c$ bits, this is enough to decide the input using $O(s)$ work space.
\end{observation}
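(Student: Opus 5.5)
We must show that once step~\ref{alltargetscompressed} is reached and more than $c$ bits of the catalytic tape have been freed, $O(s)$ work space suffices to decide $x$. The plan is to run the algorithm $\mathcal{A}'$ from \autoref{epsdelclassinbpp}, derandomized by the PRG $P$, on the freed bits. Concretely, we fix the current first $c$ bits $\pi$ of the catalytic tape; it does not matter that $\pi$ is now scrambled by the compressions, since $\mathcal{A}'$ has the same acceptance separation for every $\pi$. Then, for every seed $q$ of $P$, we evaluate $C_{x,\pi}(P(q))$, count the accepting seeds in $O(s)$ bits, and accept if and only if their fraction is at least $\frac{1}{2}-\epsilon+\frac{27(2\epsilon-\gamma)}{40}$. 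Correctness follows verbatim from \autoref{epsdeltcircuitargument} and the fooling property, as in the proof of \autoref{correctnessone}.

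The remaining task is to evaluate $C_{x,\pi}(w\circ y)$ within the space budget. We write $\pi\oplus w$ into the freed $c$ bits, which are scattered across targets, so we address the $i$-th bit through an $O(s)$-bit index computation. We then simulate $M$ directly, forward from $\start[\pi\oplus w]$, for $T\cdot 2^{4s}$ steps, using the freed region as $M$'s catalytic tape. $M$'s random choices are read as bits of $y$, each recomputed from $q$ in space $O(s)$. $M$'s work tape and the step counter take $O(s)$ bits.

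This simulation is allowed to destroy the copy of $\pi\oplus w$ in the freed area. The free area is then rewritten with $\pi\oplus w'$ for the next seed, and is zeroed out at the end before the targets are decompressed. We accept the pair exactly when the walk ends at $\acc[\pi\oplus w]$. To check this, we compare the final catalytic content with $\pi\oplus w$ bit by bit, recomputing $w$ from $q$ and reading $\pi$ from the untouched first $c$ bits.

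The main obstacle is the bookkeeping: the freed region must hold at least $c$ bits, which holds since $6ds\cdot r>c$, and the bits of $P(q)$ must be computable on demand in $O(s)$ space. The latter follows because $\log N=O(s)$ (as $c\le 2^s$) and \autoref{ass:derandom} gives logspace computability in $N$. Neither step requires reversibility, because the freed bits carry no information.
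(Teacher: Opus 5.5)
Your argument is correct, but it takes a different route from the paper.

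\textbf{The paper's route.} The paper uses the freed $c$ bits to brute-force search for a setting $\pi$ for which every $y_{\le i}$-tree rooted at the layer-$0$ accepting configuration of $\pi\oplus w$ has size below $2^{10d\cdot s}$, over all seeds and all $i$. It then runs step~\ref{caseallsmall} on that $\pi$, with correctness inherited from Observation~\ref{correctnessone}. Most of its proof goes into showing such a $\pi$ exists. That argument uses three ingredients: the $y_{\le i}$-trees for distinct initial tapes are disjoint, Markov's inequality bounds each tree's size, and a union bound covers all seeds and all $i$.

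\textbf{Your route.} You observe that once $c$ bits are genuinely free, the reversible, size-bounded DFS machinery is unnecessary. Evaluating $C_{x,\pi}(P(q))$ by a plain forward simulation of $M$ in the freed region is already an $O(s)$-space computation. The freed bits are known zeros, so they can be overwritten and re-zeroed before decompression. Moreover, $P$ fools $C_{x,\pi}$ for every $\pi$, and the acceptance probability of $C_{x,\pi}$ on uniform input does not depend on $\pi$. Hence your threshold test is correct for whatever $\pi$ currently sits on the tape; even $\pi=0^c$ would do.

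\textbf{Comparison.} Your route is shorter and self-contained, and it avoids the existence argument entirely. The paper's route has the mild advantage of reusing the exact decision procedure of step~\ref{caseallsmall}, so $\mathcal{B}$ decides the same way in both terminal cases. The cost is the extra counting argument.
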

\begin{proof}
Assume that there exists a catalytic setting $\pi$ such that the size of $y_{\le i}$-DFS from $\acczp{\pi \oplus w}$ is small for all $w,y,i$ (where $w \circ y$ denotes the output of $P$ for a seed and $i\in [T\cdot 2^{4s}]$). We can use the freed-up $c$ bits on the catalytic tape to find such a $\pi$ by brute-force, and then use step \ref{caseallsmall} to correctly decide the input; where the correctness follows from observation \ref{correctnessone}. This clearly takes $O(s)$ work space. 
Now we argue that such a $\pi$ exists.

From the proof of Observation \ref{avgsize}, we know that $\gmx$ has at most $2^{c+4s}$ configurations. Hence, its layered configuration graph (with $T\cdot 2^{4s}$ layers) contains at most $T\cdot 2^{c+8s}$ configurations in total. Fix a seed $q$ of $P$ such that $P(q) = w \circ y$ and fix some $i \in [T\cdot 2^{4s}]$. For any two distinct $\tau',\tau''$, the vertex sets of $y_{\le i}$-tree($\acczp{\tau' \oplus w}$) and $y_{\le i}$-tree($\acczp{\tau'' \oplus w}$) are disjoint. The reason is that every configuration in the first tree reaches the unique sink vertex $\acczp{\tau' \oplus w}$ under a $y_{\le i}$-DFS, while every configuration in the second tree reaches the unique sink $\acczp{\tau'' \oplus w}$. Consequently, the expected size (over a uniformly random choice of $\pi'$) of the vertex set of $y_{\le i}$-tree($\acczp{\pi' \oplus w}$) is at most 
$T\cdot \frac{2^{c+8s}}{2^c} = T\cdot 2^{8s}$.  Let us choose $\pi'$ uniformly at random. Then we have


\begin{align*}
&\Pr[\exists i \in [T\cdot 2^{4s}], q \text{ such that } G(q)\coloneqq w \circ y \text{ and } y_{\le i}\text{-DFS} \text{ from $\acczp{\pi' \oplus w}$ has size } \ge 2^{10d\cdot s}] \\
& \le  \sum_{q,i}\Pr[G(q)\coloneqq w \circ y \text{ and } y_{\le i}\text{-DFS} \text{ from $\acczp{\pi' \oplus w}$ has size } \ge 2^{10d\cdot s}] \\
&\le \sum_{q,i} \frac{T\cdot 2^{8s}}{2^{10d\cdot s}} \le \sum_{q,i} \frac{1}{2^{8d\cdot s}} \le \frac{2^{ds}\cdot T \cdot 2^{4s}}{2^{8d\cdot s}} \le \frac{1}{2^{6d\cdot s}} \le \frac{1}{2^{30s}} < 1
\end{align*}

where the first inequality follows from the union bound; the second from the Markov inequality; and the subsequent ones from the fact that $T$ is a constant, $d\ge 5$, and the size of the seed $q$ is at most $d\cdot s$ bits. Thus, there is a non-zero probability of ending up with a $\pi'$ such that all the corresponding $y_{\le i}$-DFS are small.
\end{proof}


\section{Proof of  
\autoref{lemmvianisansprggeneral} and \autoref{lemmvianisansprgforcl}}\label{nisansprgproofs}


We will begin by proving \autoref{lemmvianisansprggeneral}. The proof of \autoref{lemmvianisansprgforcl} will follow from this and will be discussed in \autoref{lemmapart2}. Let \(\delta\) and \(\epsilon\) be constants such that \(\delta < 2\epsilon\). Without loss of generality, we can assume that both \(\delta\) and \(\epsilon\) are rational numbers. Initially, we will assume that \(\delta > 0\) and will address the case where \(\delta = 0\) separately in \autoref{deltazero}. 

Let \( L \in \BPepsCdeltSpace{s}{c}{\delta}{\epsilon} \), and let \( M \) be the machine that decides it. Let \( x \) be an arbitrary fixed input. We will describe how the subroutine \( \mathcal{F}^L \) operates. There are various scenarios in which \( \mathcal{F}^L \) compresses the catalytic tape, freeing at least the last bit in the process (as we will demonstrate, we will free more than this, up to \( O(s) \) bits). For each of these scenarios, we will outline the method for decompressing the tape. Additionally, we will be able to determine which decompression method to apply based on a given compressed tape. This describes $\mathcal{R}^L$.

The structure of the catalytic tape used by $\mathcal{F}^L$ is depicted in \autoref{tapeappmab}. The tape starts with $c$ cells denoted by $\tau$ (which will be used to simulate the catalytic tape of $M$), followed by $\pmb{l}$ hash functions, each specified using $\pmb{2m}$ bits;  followed by a \textit{target} string $\pmb{\mathrm{tar}}$ of length $\pmb{3m}$.

\begin{figure}[h!]
\begin{center}
    \begin{tikzpicture}[every node/.style={font=\small}]
  \def\boxwidth{2.5}
  \def\boxheight{1}
  \def\braceYOffset{0.2}  
  \def\arrowYOffset{0.4}  
 \foreach \i/\name in {0/$\tau$} {
    \draw (\i*\boxwidth,0) rectangle ++(\boxwidth,\boxheight);
   \node at (\i*\boxwidth+0.5*\boxwidth,0.5*\boxheight) {\name};
  }




  \def\bottomy{0}
  \def\bottomwidth{1.5}
  \foreach \i/\name in {0/$h_l$,1/$h_{l-1}$,2/\dots,3/$h_1$,4/$\pmb{\mathrm{tar}}$} {
    \draw (\i*\bottomwidth+\boxwidth, \bottomy) rectangle ++(\bottomwidth,\boxheight);
    \node at (\i*\bottomwidth+\boxwidth+0.5*\bottomwidth, \bottomy+0.5*\boxheight) {\name};
  }

  \foreach \i in {1,2,3,4} {
    \draw (\i*\bottomwidth+\boxwidth, \bottomy) -- ++(0, \boxheight);
  }

\end{tikzpicture}
\caption{Catalytic tape}
\label{tapeappmab}
\end{center}
\end{figure}

The hash functions on the catalytic tape are derived from a 2-independent hash family \(\mathcal{H} = \{h: \{0,1\}^m \rightarrow \{0,1\}^m\}\) with a size of \(2^{2m}\), as in Nisan's PRG \cite{Nisan92}; where each hash function can be computed in \(O(m)\) space.

\subsection{The parameters}\label{paramters}

The parameters of interest are:

\begin{itemize}
\item $\pmb{m}$ the seed length of the PRGs that will be built, which we will set later. We can think of $m=d\cdot s$, for a large constant $d \ge 500$.
\item  $\pmb{l}$ the number of hash functions; with each specified by $2m$ bits. We set $l= 2^{20s}$. 
\item length of the string $\pmb{\mathrm{tar}}$, which we set to $3m$.
\item $\pmb{H}$ (called upper-threshold), which we set to $2^{3m}$.
\item $\pmb{T}$ (called lower-threshold), which we set to $2^{100s}$. 
\end{itemize}

Thus, the size of the tape used by $\mathcal{F}^L$ is \[c+l\cdot2m + 3m \le 2^s + O(s)\cdot 2^{20s} + O(s) = 2^{O(s)}\] as desired.

\subsection{Building PRGs}

The broad idea is as follows. $\mathcal{F}^L$ will build a sequence of pseudo-random generators: $\text{PRG}_0,\text{PRG}_1,\cdots,\text{PRG}_l$, where PRG$_0$ is defined to output the empty string for every seed. For $1 \le i \le l$, PRG$_i$ outputs a string of length $i$ and is defined as

\begin{equation}\label{nsprgdefn}
\text{PRG}_i(seed) \coloneqq h_i(seed) \circ h_{i-1}(seed) \circ \cdots \circ h_1(seed) 
\end{equation}

where the hash functions are read from the catalytic tape. Here, we abuse the notation and use $h_i(seed) \circ h_{i-1}(seed) \circ \cdots \circ h_1(seed)$ to denote the $i$-bit length string obtained by concatenating the first bits outputted by each of the $i$ hash functions. Note that since the hash functions can be computed in space $O(m)=O(s)$, we can compute these PRGs using $O(s)$ work space, since $l$, the number of hash functions, is $2^{O(s)}$. We would like PRG$_i$ to approximate the probability of going from \textit{any} configuration in layer $i$ in the layered configuration graph of $\gmx$ (recall \autoref{layeredconfgrph}) to $\accz$ and $\rejz$. To be precise, we want $\gamma_i^{acc}$ and $\gamma_i^{rej}$ to be ``small'', where

\begin{align}\label{errorpromise}
\begin{split}
&\gamma_i^{acc} \coloneqq \max_{\confl{\pi}{u}{i}}\bigg\{\big | \Pr[\text{Going from } \confl{\pi}{u}{i}  \text{ to } \accz] \;-  \\ 
& \Pr_{seed}[y\text{-DFS} \text{ from } \accz \text{ visits } \confl{\pi}{u}{i} \text{ for } y\leftarrow\text{PRG}_i(\text{seed})] \big|\bigg\} 
\end{split}
\end{align}
and $\gamma_i^{rej}$ is defined analogously.
Here, $\confl{\pi}{u}{i}$ denotes an \text{arbitrary} configuration in layer $i$. In the equation above, the first probability refers to the likelihood of transitioning from \(\confl{\pi}{u}{i}\) to \(\accz\) in the layered configuration graph of $\gmx$. The second probability is taken over a uniformly random seed for PRG$_i$, such that the $y$-DFS (see \autoref{ydfs}) starting from $\accz$ visits $\confl{\pi}{u}{i}$, where $y$ denotes the output of the PRG$_i$ on a seed. Note that, by definition, the $y$-trees (see \autoref{ytree}) $y$-tree($\accz$) and $y$-tree($\rejz$) for an empty string $y$ are simply the isolated vertices $\accz$ and $\rejz$. Additionally, using \autoref{layeredconfgrph}, we know that vertices within the same layer do not have edges between them. Therefore, $\gamma_0^{acc}=\gamma_0^{rej}=0$.

We define PRG$_0$ to be \emph{good}, and we inductively define PRG$_i$ to be good for $1 \le i \le l$ if PRG$_{i-1}$ is good and $h_i$ is \emph{good}. We will define what it means for a hash function to be good later. We will describe $\mathcal{F}^L$ in an iterative fashion by assuming that for every $j \le i$ (for some $0 \le i < l$), PRG$_j$ is good and that $\mathcal{F}^L$ has not yet compressed the catalytic tape. We will break down this description into various cases. As we will see, $\mathcal{F}^L$ may compress the catalytic tape if $h_{i+1}$ is considered bad, among other possible scenarios, and then stop. Otherwise, it will iterate to the next value of $i$ and repeat the same process. However, if $i = l$, $\mathcal{F}^L$ will either be able to determine the input $x$ or will output $\perp$. We assume at this point the first $c$ bits of the catalytic tape read $\tau$ (as in \autoref{tapeappmab}).

\phantomsection
\subsubsection*{Case $1$: Size $> H$.}\label{firstcase}

\begin{definition}[Size of $y$-DFS]\label{sizeofydfs}
For a $y$-DFS from $\accz$/$\rejz$, we define the size of the $y$-DFS to be the number of steps required to return to the halting configuration from which we initiated the $y$-DFS. In other words, it is the size of the $y$-tree\textup{(}$\accz$\textup{)}/$y$-tree\textup{(}$\rejz$\textup{)}.
\end{definition} 

The first scenario occurs when the $y$-DFS from either $\accz$ or $\rejz$ has a size that is greater than  $H=2^{3m}$ for some value of $y$ (the output of PRG$_i$ on a seed). In this situation, we will use timestamp compression \cite{CookLiMertzPyne25} to compress the target string. We will check the size of a \(y\)-DFS (starting from \(\accz\) or \(\rejz\)) using the subroutine 
$\countsize$  from \autoref{reversiblewalk} with the size parameter $H$.


    

\textbf{Compression}. By trying all the \(y\)-DFS (from \(\accz\) and \(\rejz\)), we can check if any of them has a size \(> H\), while reusing the workspace required for \(\countsize\). Without loss of generality, let the seed for which the \(y\)-DFS is large be \(q\) (i.e., \(y \leftarrow \text{PRG}_i(q)\)), and assume it was performed from \(\accz\). Let \( val(\textbf{tar}) \) represent the integer value of the binary string \( \textbf{tar} \) incremented by one. Thus, \( val(\pmb{\mathrm{tar}}) \) is an integer in \([2^{3m}]\). 
Let \( \confl{\pi}{u}{k} \) be the configuration reached by performing a \( y \)-DFS from \( \accz \) for \( val(\pmb{\mathrm{tar}}) \) steps using $\nextstep$   from \autoref{reversiblewalk}. 
We compress by replacing the contents of the string \( \pmb{\mathrm{tar}} \) with \( 00 \circ q \circ k \circ i \circ u \circ 0^{2m-41s-2} \). Note that \( q \) takes \( m \) bits, \( 0 \le k \le i \le l \le 2^{20s} \), and each can be described using \( 20s \) bits, while \( u \) (workspace description) takes \( s \) bits. We also replace \( \tau \) with \( \pi \) in the first \( c \) cells of the catalytic tape. The prefix \( 00 \) indicates that we applied compression under the scenario ``size \( > H \)'' (since there will be other compression scenarios). Thus, we compress, freeing at least the last $s$ bits of the catalytic tape, given that  \( m \ge \frac{42s+3}{2} \).

\textbf{Decompression}. To decompress, we perform a reverse $y$-DFS from $\confl{\pi}{u}{k}$, using $\stepback$ from \autoref{reversiblewalk}, counting the number of steps required to reach the unique halting configuration ($\accz$ or \(\rejz\)). This count is equal to $val(\pmb{\mathrm{tar}})$. Therefore, we recover $\pmb{\mathrm{tar}}$ and reset the first $c$ bits of our catalytic tape back to $\tau$ upon reaching $\accz$. 



\subsubsection*{Case $2$: Size $\le H$}\label{secondcasesubsection}
We now address the scenario in which every $y$-DFS according to PRG$_i$ (from $\accz$ or $\rejz$) has a size of at most $H$. 

\begin{definition}\label{videfn}
We define \( V_i \) as the set of vertices, i.e., configurations \( \conf{v}{i} \) in layer \( i \) such that the following condition holds: 

\[
\Pr_{seed}[y\text{-DFS from } \accz \text{ or } \rejz \text{ visits } \conf{v}{i} \text{ for } y \leftarrow \text{PRG}_i(seed)] > 0
\]

In other words, \( V_i \) is the collection of vertices in layer \( i \) that have a non-zero probability (over seed) of being reached via a \( y \)-DFS, as determined by \( \text{PRG}_i \), starting from \( \accz \) or \( \rejz \).
\end{definition}

\begin{definition}\label{sidefini}
We define \( S_i \) as the set of vertices, i.e., configurations \( \conf{v}{i} \) in layer \( i \) such that the following holds: 

\[
\Pr_{\text{seed}}[y\text{-DFS from } \accz \text{ or } \rejz \text{ visits } \conf{v}{i} \text{ for } y \leftarrow \text{PRG}_i(\text{seed})] \ge \frac{1}{2^{30s}}.
\]

In other words, \( S_i \) is the set of vertices in layer \( i \) that can be reached from \( \accz \) or \( \rejz \) for at least \( \frac{1}{2^{30s}} \) fraction of the \( y \)-DFS, as determined by PRG\(_i\).
\end{definition}

Note that by definition, $S_i\subseteq  V_i$. 

If \( |S_i| \ge T \), we will proceed to subcase $2.1$; otherwise, we will move to subcase $2.2$. Before diving into these subcases, we will describe some subroutines that will help us check this condition and will also be useful in subsequent steps. These subroutines use the first \( c \) bits of the catalytic tape to simulate that of \( M \).

\begin{itemize}
    \item \textbf{$\countvi$}. Computes $|V_i|$.
    \item \textbf{$\indexvi$}. Given \( j \in [\countvi] \), the procedure outputs a seed \( q \), an integer \( t \geq 0 \), and a binary value \( b \in \{0, 1\} \). This indicates that the \( j \)-th vertex in \( V_i \) (according to a specific numbering) can be reached by performing a \( y \)-DFS from \( \accz \) for \( t \) steps if \( b = 1 \). Conversely, if \( b = 0 \), the \( j \)-th vertex can be reached by performing the \( y \)-DFS from \( \rejz \) for \( t \) steps, where \( y \) is given by \( \text{PRG}_i(q) \). Moreover, $t$ is upper-bounded by the size of the corresponding \( y \)-DFS.
    \item\label{countsidescrptionmention} \textbf{$\countsi$}. Computes $|S_i|$. 
    \item\label{indexsidescriptionmention} \textbf{$\indexsi$}. Given \( j \in [\countsi] \), it works analogously to $\indexvi$. 
\end{itemize}

\begin{claim}\label{subroutineclaim}
Given that all the $y$-DFS from $\accz$ and $\rejz$, as per PRG$_i$, have sizes of at most $B=2^{O(s)}$\footnote{where $B$ is logspace-constructible.}, the subroutines $\countvi$, $\countsi$, $\indexvi$, and $\indexsi$ can be executed using $O(s)$ work space. Furthermore, these subroutines run in time $2^{O(s)}$. Additionally, they always restore the first $c$ bits of the catalytic tape they utilize, to their original contents.
\end{claim}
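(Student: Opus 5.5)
The plan is to run, for every seed $q$ of PRG$_i$ and each of the two halting roots $\accz$, $\rejz$, a $y$-DFS with $y\leftarrow\text{PRG}_i(q)$, and to count or index vertices of layer $i$ through canonical first occurrences. Since every $y$-DFS has size at most $B=2^{O(s)}$, a single $y$-DFS is a cyclic walk of at most $B$ steps via $\nextstep$ and returns to its root. The bits of $y$ are computed on demand from the hash functions on the tape in $O(m)=O(s)$ space, as allowed by \autoref{extendingthelemmatolayered}. The work tape will hold only a constant number of seeds, counters bounded by $B$ or by $2^m$, layer indices, and the bits $b$. All of these fit in $O(s)$ space.

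For $\countvi$ we fix a canonical order on the triples $(q,b,t)$: lexicographic in $q$, then $b$, then $t\le B$. We count those triples whose visited vertex $v=\conf{\pi}{u}$ lies in layer $i$ and is the first occurrence of $v$ in this order. To test first occurrence we cannot write $v$ down, since $\pi$ has $c$ bits. Instead we compare it in place. We keep the outer walk paused at $v$, and for every earlier triple $(q',b',t')$ we run the corresponding walk. This inner walk must also use the first $c$ bits of the tape, so we first undo the outer walk with $\stepback$ for $t$ steps, which restores the tape to the root configuration.

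More carefully, we compare two configurations reached by two walks from the same root tape contents $\tau$. The work parts $u$ and layers are cheap to store, so we compare them directly. For the catalytic part we compare bit by bit: for each position $p\in[c]$, we walk $(q,b,t)$ forward, read bit $p$ and store it, walk back, then walk $(q',b',t')$ forward, read and compare bit $p$, and walk back. This costs time $c\cdot 2^{O(s)}=2^{O(s)}$ per comparison and uses only $O(s)$ space. Every walk is undone by $\stepback$, so the first $c$ bits are restored after each primitive. The catalytic tape is therefore never changed permanently.

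$\indexvi$ uses the same enumeration and stops at the $j$-th first occurrence, outputting its $(q,t,b)$. For $S_i$ we also need the visiting probability of each first-occurrence vertex $v$. We compute it by counting, over all seeds $q'$ and both roots, whether the $q'$-DFS visits $v$, again through the bit-by-bit comparison over all $t'\le B$. We then compare the resulting count with $2^{m}/2^{30s}$, which gives $\countsi$ and $\indexsi$.

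Time is a product of polynomially many $2^{O(s)}$ factors: seeds $2^m$, steps $B$, positions $c\le 2^s$, and each walk costing $2^{O(s)}$ by \autoref{reversiblewalk}. The main obstacle is the equality test between configurations without storing $\pi$; the bit-by-bit replay with full rewinds resolves it. A secondary check is that a walk of $t$ steps started from $\accz$ may pass through layers other than $i$. This is handled by checking the tracked layer index at the endpoint and only counting endpoints in layer $i$.
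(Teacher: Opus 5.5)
Your proposal is correct and follows essentially the same route as the paper. You enumerate triples $(b,q,t)$ with $t\le B$, test vertex equality by replaying both walks bit by bit and rewinding with $\stepback$, and count first occurrences in layer $i$ for $\countvi$/$\indexvi$; for $S_i$ you count the seeds $q'$ admitting some $(b',t')$ that reaches the same vertex. One wording fix: the running time is a product of a constant number of $2^{O(s)}$ factors (the loops are nested to constant depth), not ``polynomially many'' such factors.
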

\begin{proof}
First, given $i$, a bit $b$, seed $g$, $t\le B$, and $k\in [c+O(s)]$, we can determine the $k$-bit of the vertex reached by procedure $\nextstep$ (from \autoref{reversiblewalk}) repeated $t$-times starting from $\accz$ if $b=1$ or $\rejz$ if $b=0$ in the $y$-DFS, where  \( y \leftarrow \text{PRG}_i(g) \).
Then, by running $\stepback$ $t$-times, we can also restore the catalytic tape.
This uses $O(s)$ work space and time $2^{O(s)}$. Hence, for any two given triples $(b,g,t)$ and $(b',g',t')$, we can determine whether they refer to the same vertex by comparing the two vertices bit by bit.

To count the size of $V_i$, we just need to determine how many triples $(b,g,t)$ give distinct vertices. 
To do so, we initialize a counter $\pmb{count}$ to $0$ and then
enumerate over triples $(b,g,t)$ in lexicographical order and check for each of them whether some lexicographically smaller triple gives the same vertex.
If not, and the vertex given by $(b,g,t)$ is in layer $i$ (which can be determined from its bit description), we increment $\pmb{count}$.
In either case, we proceed to the next triple $(b,g,t)$. Clearly, this counts the size of $V_i$. 
This can be implemented using a workspace of size $O(s)$ via two nested for-loops over the triples.
The catalytic space is restored to its initial content, and there are no other modifications to the space other than the internal workspace. To identify the triple $(b,g,t)$ corresponding to a given index $j \in [\countvi]$, 
we can run the same procedure as above but terminate it once  $\pmb{count}$ reaches $j$.
At that point, we have the triple $(b,g,t)$ on our tape. This implements the two procedures $\countvi$ and $\indexvi$.

To determine whether a given triple $(b,g,t)$ refers to a vertex from $S_i$ 
we need to check for how many seeds $g'$ there is some $t'\le B$ and $b' \in \{0,1\}$,
so that $(b,g,t)$ and $(b',g',t')$ refer to the same vertex.
Again, this condition can be checked using internal work space $O(s)$. By going over $j \in [\countvi]$, finding its corresponding triple $(b,g,t)$ using $\indexvi$
and testing whether it belongs to $S_i$ we can count the size of $S_i$ using internal work space $O(s)$. This implements $\countsi$. To implement $\indexsi$, we can output $(b,g,t)$ that is reached when the counter to compute $\countsi$ reaches the desired value.
\end{proof}

Since all the $y$-DFS from $\accz$ and $\rejz$ have sizes of at most $H = 2^{3m} = 2^{O(s)}$, using the previous claim, we can utilize $\countsi$ to check if \( |S_i| \ge T = 2^{100s} \) using \( O(s) \) workspace and in \( 2^{O(s)} \) time without altering the contents of the catalytic tape.

\phantomsection\label{case2pointone} \paragraph{\textbf{Case $\pmb{2.1}$: $\pmb{|S_i| \ge T}$}.}

In this scenario, we again compress the target string. We use \(val(\pmb{\mathrm{tar}}_{\le \log{T}})\) to denote the integer value of the first \(\log{T}\) bits of \(\pmb{\mathrm{tar}}\), incremented by \(1\), which is an integer in $[T]$. Here, we assume that \(3m \ge \log T = 100s\). Let $\confl{\pi}{u}{i}$ be the vertex/configuration in $S_i$ with index $val(\pmb{\mathrm{tar}}_{\le \log{T}})$, as per the subroutine $\indexsi$.

Let us consider performing a $y$-DFS starting from $\confl{\pi}{u}{i}$ for each $y$ generated by PRG$_i(seed)$, using all possible seeds, with each run limited to $H$ steps. For every $y$-DFS, we will visit at most one (distinct) configuration in \textbf{layer $0$}, which will be the unique sink of the corresponding $y$-tree. Let TAU represent the set of all $c$-bit catalytic settings such that $\pi' \in \text{TAU}$ iff, for at least $\frac{1}{2^{30s}} 2^m$ seeds, the $y$-DFS from $\confl{\pi}{u}{i}$ visits either $\conf{\acc[\pi']}{0}$ or $\conf{\rej[\pi']}{0}$, within $H$ steps. Then, it follows that $|\text{TAU}| \leq 2^{30s}$. We know that for at least $\frac{1}{2^{30s}}$ fraction of the seeds, the $y$-DFS from $\accz$ or $\rejz$ visits $\confl{\pi}{u}{i}$, since this configuration is in $S_i$. Let us denote this set of seeds as $I$—the set of seeds for which the $y$-DFS from $\accz$ or $\rejz$ reaches $\confl{\pi}{u}{i}$. By definition, we have $|I| \geq \frac{1}{20^{30s}} 2^m$. Since we are in a scenario where all $y$-DFS from $\accz$ and $\rejz$ have a size no greater than $H$, by applying \autoref{reversiblewalk}, we can conclude that we will visit $\accz$ or $\rejz$ from $\confl{\pi}{u}{i}$ while performing a $y$-DFS within $H$ steps, with $y$ being the output of PRG$_i$ for seeds in $I$. This implies that $\tau \in \text{TAU}$. We will index the catalytic settings $\pi'$ in TAU according to the lexicographic order of the first seed for which $\conf{\acc[\pi']}{0}$ or $\conf{\rej[\pi']}{0}$ is visited while performing a $y$-DFS from $\confl{\pi}{u}{i}$ (for $H$ steps).

\textbf{Compression}. First, we run the subroutine $\indexsi$ with input $val(\pmb{\mathrm{tar}}_{\le \log{T}})$. This gives a seed $q$, an integer $t \ge 0$, and a binary value $b$ (which we, without loss of generality, assume to be $1$). Since all the $y$-DFS are of size at most $H$, it follows from \autoref{subroutineclaim} that we can save $q,t$ using $O(s)$ work space. This allows us to visit the configuration $\confl{\pi}{u}{i}$, which changes the contents $\tau$ on the catalytic tape to $\pi$. With $q$ and $t$ stored in our work space, we can always return to $\accz$ by performing a $y$-DFS from this point, corresponding to the seed $q$. Effectively, we are now “standing” at the configuration $\confl{\pi}{u}{i}$. Let the index of $\tau$ in TAU be denoted as $idx$; therefore, $idx$ can be specified using $30s$ bits. We will outline how to compute $idx$.

\begin{enumerate}[(a)]
    \item For any seed, we can check if the $y$-DFS from $\confl{\pi}{u}{i}$ visits a halting configuration in layer $0$ within $H$ steps and then walk back to $\confl{\pi}{u}{i}$ by performing the $y$-DFS in reverse order, all the while keeping track of the number of steps we walk in any direction. If we see a halting configuration in layer $0$, we will, for ease, call the corresponding seed \emph{halting}.
    \item For any two halting seeds \(seed_1\) and \(seed_2\) (which can be verified using $(a)$), we can determine if both the \(y_1\)-DFS and \(y_2\)-DFS—where \(y_1 \leftarrow \text{PRG}_i(seed_1)\) and \(y_2 \leftarrow \text{PRG}_i(seed_2)\)—encounter a halting configuration in layer 0 with the same catalytic contents. To do this, we first modify point $(a)$ so that, in addition to checking if a seed is halting, it also reports the \(j\)-th bit of the catalytic part of the halting configuration that is observed, for any index \(j \in [c]\). We can then perform a bit-by-bit comparison of the catalytic parts of the halting configurations from the two \(y\)-DFS, while continuously returning to \(\confl{\pi}{u}{i}\). If the catalytic contents match, we classify the two seeds as being \emph{similar}.
    \item For any halting seed, using $(b)$, we can count how many seeds are similar to it (including it). If there are more than $\frac{1}{2^{30s}}2^m$, we call the seed \emph{relevant}.
    \item \label{figoureoutindx} We keep a counter $\pmb{count}$ initialized to $0$. For each seed, we check if it is halting, relevant, and not similar to any seed that is lexicographically smaller than it. If so, we increment $count$. Let us suppose we are at a seed for which the counter was increased. Since it's a halting seed, let the contents of the halt configuration seen be $\pi'$. Then, by definition, $\pi' \in$ TAU (since the seed is relevant), and the value of $count$ at this point is its index in TAU (since it is not similar to any lexicographically smaller seed).
    \item Finally, we compute \( idx \) as follows. Let \( seed_1 \) be any seed for which we have incremented $count$. Let the catalytic contents of the halt configuration observed during a \( y_1 \)-DFS from \(\confl{\pi}{u}{i}\), using \( y_1 \leftarrow \text{PRG}_i(seed_1) \), be  \( \pi' \). We can verify whether \( \pi' \) is identical to \( \tau \) again in a bit-by-bit manner, similar to the method described in part $(b)$. This verification is feasible because we have \( q \) and \( t \) stored, which allows us to move back and forth between \( \accz \) and \(\confl{\pi}{u}{i}\). If $\pi'=\tau$, we declare $idx$ as the value of $count$.
\end{enumerate}

 To compress, we replace the first $c$ cells of the catalytic tape with $\pi$ (recall we are effectively standing at $\confl{\pi}{u}{i}$); and we replace $\pmb{\mathrm{tar}}_{\le \log{T}}$ with $01 \circ idx \circ i \circ u \circ 0^{49s-2}$. This is feasible as $\log{T} = 100s$ and 
    $$2 + |idx| + |i| + |u| \le 2 + 30s + 20s + s = 51s +2$$
where we used that $i \le l = 2^{20s}$ and thus can be specified using $20s$ bits. The leading $01$ specifies this type of compression.
    
\textbf{Decompression}. Given the compressed tape, we find ourselves in the configuration $\confl{\pi}{u}{i}$. Knowing the value of $i$ provides us access to PRG$_i$. By using step \ref{figoureoutindx} of the compression process, we can determine a seed $q$ such that the $y$-DFS from $\confl{\pi}{u}{i}$, for $y \leftarrow \text{PRG}_i(q)$, will visit either $\accz$ or $\rejz$, within $H$ steps. Without loss of generality, we can assume it visits $\accz$. To be precise, $q$ is the seed for which the counter $count$ in step \ref{figoureoutindx} of the compression becomes equal to $idx$, which we know. As a result, we can effectively recover the $\tau$ part of the catalytic tape. Furthermore, we can move back and forth between the configurations $\confl{\pi}{u}{i}$ and $\accz$. We are now left with the task of recovering the contents of $\pmb{\mathrm{tar}}$. For this, we utilize the subroutines: $\countsi$ and $\indexsi$. For each \( j \in [\countsi] \), we invoke the subroutine \( \indexsi \), which provides us with a seed \( q' \), an integer \( t \ge 0 \), and \( b \in \{0, 1\} \) (without loss of generality, we can assume \( b = 1 \)). Consequently, we can move back and forth between \( \accz \) and the \( j \)-th vertex in \( S_i \), denoted as \( \confl{\pi'}{u'}{i} \). Thus, we can check in a bit-by-bit manner whether \( \confl{\pi'}{u'}{i} \) is the same as \( \confl{\pi}{u}{i} \)—that is, we verify if \( \pi = \pi' \) and \( u = u' \). Once we find such a \( j \), we know its value is \( val(\pmb{\mathrm{tar}}_{\le \log{T}}) \), which allows us to recover the compressed contents of \( \pmb{\mathrm{tar}} \) as well. 

We observe that, since we are in a situation where all the $y$-DFS from $\accz$ and $\rejz$ are small, using \autoref{subroutineclaim}, it can be verified that we can perform the compression and decompression processes using $O(s)$ workspace and with time $2^{O(s)}$. Also note that we can always change the compression to free up the last bit(s) on the catalytic tape by swapping the freed-up $0^{49s-2}$ bits with the last $49s-2$ bits of $\pmb{\mathrm{tar}}$.


\paragraph{Case $2.2$: $\pmb{|S_i| \le T}$.}\label{sigreatethantcompression}

In this scenario, we will check whether the hash function $h_{i+1}$ is good (defined shortly). If the hash function is bad, we will compress the catalytic tape again. Otherwise, we will move to the next iteration $i+1$. Before we define the goodness of a hash function, we will need the following definitions.


  




\begin{definition}\label{avbvdefn}
   Let $\conf{v}{i}$ and $\conf{z}{i+1}$ be two \textit{arbitrary} vertices/configurations in layers $i$ and $i+1$, respectively, of the layered configuration graph of $\gmx$. We introduce the following sets:
\begin{itemize}
    \item $A_{v}^{acc}=\{\text{seeds in } \{0,1\}^m \text{ for which $y$-DFS from $\accz$ visits $\conf{v}{i}$, for $y\leftarrow$ PRG$_i$(seed)}\}$. We similarly define $A_{v}^{rej}$, where the definition uses $\rejz$ instead of $\accz$.
    \item $B_{vz}= \{t\in \{0,1\}^m\ \mid \text{ there is an edge with label } t_1 \text { \textup{(}first bit of $t$\textup{)} from } \conf{z}{i+1} \text{ to } \conf{v}{i}\}$. Thus, $|B_{vz}|$ is either $0$, $2^{m-1}$, or $2^m$.
\end{itemize} 
\end{definition}


\begin{definition}[Neighborhood]
We define \( N_{i+1}(\cdot) \) as the function that identifies the neighborhood in layer \( i+1 \). Specifically, for any subset of vertices \( Y \) in layer \( i \), \( N_{i+1}(Y) \) denotes the set of vertices \( \conf{w}{i+1} \) in layer \( i+1 \) that have at least one outgoing edge directed toward a vertex in \( Y \). Moreover, we will use the notation \( N_{i+1}(\conf{v}{i}) \) to represent \( N_{i+1}(\{\conf{v}{i}\}) \), where \( \conf{v}{i} \) is a vertex in layer \( i \).
\end{definition}

Note that it follows from the definitions of $S_i,V_i$ (\autoref{videfn} and \autoref{sidefini}) that $N_{i+1}(S_i) \subseteq N_{i+1}(V_i)$.

\begin{definition}[$(A,B,\alpha)$-independence \cite{Nisan92}]\label{goodnesslemma}
Let $\mathcal{H}=\{h:\{0,1\}^m \rightarrow \{0,1\}^m\}$ be a pairwise independent hash family, and let $A,B \subseteq \{0,1\}^m$. A hash function $h:\{0,1\}^m \rightarrow \{0,1\}^m$ is called $(A,B,\alpha)$-independent if 
$$\big|\Pr_{x\in \{0,1\}^m}[x\in A \text{ and } h(x) \in B]-\rho(A)\rho(B)\big| \le \alpha$$ Here $\rho(A)=\frac{|A|}{2^m}$, and $\rho(B)=\frac{|B|}{2^m}$.
\end{definition}

\begin{definition}[Good hash function]\label{goodnessofhashfn}
A hash function from the hash family is considered \emph{good} if it is 
\begin{enumerate}
    \item $(A_v^{acc},B_{vz}, \alpha)$-independent for every $\conf{v}{i} \in S_i$ and $\conf{z}{i+1} \in N_{i+1}(\conf{v}{i})$.
    \item $(A_v^{rej},B_{vz}, \alpha)$-independent for every $\conf{v}{i} \in S_i$ and $\conf{z}{i+1} \in N_{i+1}(\conf{v}{i})$.
\end{enumerate}
where $\pmb{\alpha}$ is a parameter whose value will be set later.
\end{definition}

We now break down the analysis further into two sub-cases, depending on whether $h_{i+1}$ is good.


\subparagraph{Case $2.2.1$: $h_{i+1}$ is good.}
In this case, we increment $i$ and start over with \hyperref[firstcase]{Case $1$}. However, if $i$ becomes $l$, we will proceed to \hyperref[catch case]{Case $3$}. This assumes that we can determine whether \( h_{i+1} \) is good; we will discuss this in more detail in the next sub-case. Before we proceed, we need to analyze how the approximation errors \(\gamma_{i+1}^{acc}\) and \(\gamma_{i+1}^{rej}\), as described in \autoref{errorpromise}, grow. This analysis will help us determine the parameter \(\alpha\) and the seed length \(m\) later on. We will focus our analysis on \(\gamma_{i+1}^{acc}\), as the analysis for \(\gamma_{i+1}^{rej}\) is identical.

Consider an arbitrary vertex $\conf{z}{i+1}$ in layer $i+1$. Let the two outgoing transitions from $\conf{z}{i+1}$ into layer $i$ go to vertices $\conf{a}{i}$ and $\conf{b}{i}$ (for the ease of analysis, we assume $a \not =b$; though they can be the same). Then, we know that:

\begin{align*}
\begin{split}
&\Pr[\text{Going from } \conf{z}{i+1} \text{ to } \accz] = \\ &\frac{1}{2}\Pr[\text{Going from } \conf{a}{i} \text{ to } \accz] + \frac{1}{2}\Pr[\text{Going from } \conf{b}{i} \text{ to } \accz]
\end{split}
\end{align*}

whereby the first probability refers to the probability of transitioning from $\conf{z}{i+1}$ to $\accz$ in the layered configuration graph of $\gmx$ (see \autoref{layeredconfgrph}); the other probabilities are similarly defined. 

\begin{remark}[Notation]
When we write a quantity $q \pm \varepsilon$, we mean a value that differs from $q$ by at most $\varepsilon$.
\end{remark}

Using \autoref{errorpromise}, \autoref{avbvdefn}, and the definition of $\rho(\cdot)$ as per \autoref{goodnesslemma}, we have

\begin{align}\label{simpleprobexpanded}
\begin{split}
\Pr[\text{Going from } \conf{z}{i+1} \text{ to } \accz] &= \frac{1}{2}\bigg(\rho(A_a^{acc}) \pm \gamma_i^{acc}\bigg) + \frac{1}{2}\bigg(\rho(A_b^{acc})\pm \gamma_i^{acc}\bigg)\\
&\frac{1}{2}\rho(A_a^{acc}) + \frac{1}{2}\rho(A_b^{acc}) \pm \gamma_i^{acc}
\end{split}
\end{align}

Note, by definition $B_{az}$ consists of strings starting with $0$, and $B_{bz}$ consists of strings starting with $1$ (or the other way around); therefore $B_{az} \cap B_{bz} = \phi$. Using the fact that  PRG$_{i+1}(seed)=h_{i+1}(seed) \circ \text{PRG}_{i}(seed)$ it follows from the definition of $y$-DFS (\autoref{ydfs}) that

\begin{equation}\begin{aligned}\label{secondboundhelp1}
&\Pr_{seed}[y\text{-DFS} \text{ from } \accz \text{ visits } \conf{z}{i+1}  \text{ for } y\leftarrow\text{PRG}_{i+1}(\text{seed})]  \\
&=\Pr_{seed}[seed \in A_a^{acc} \text{ and } h_{i+1}(seed) \in B_{az} ] + \Pr_{seed}[seed \in A_b^{acc} \text{ and } h_{i+1}(seed) \in B_{bz} ] \\
& \le \Pr_{seed}[seed \in A_a^{acc}] + \Pr_{seed}[seed \in A_b^{acc}] \\
& = \rho(A_a^{acc}) + \rho(A_b^{acc})
\end{aligned}\end{equation}


To simplify our notation, we define $actual = \Pr[\text{Going from } \conf{z}{i+1} \text{ to } \accz]$ and $approx = \Pr_{seed}[y\text{-DFS from } \accz \text{ visits } \conf{z}{i+1} \text{ for } y \leftarrow \text{PRG}_{i+1}(seed)]$. Hence, we are interested in upper-bounding $|actual-approx|$, which provides an upper-bound on $\gamma_{i+1}^{acc}$ since $\conf{z}{i+1}$ is an arbitrary vertex in layer $i+1$. We consider an exhaustive list of possibilities:

\begin{itemize}
\item $\pmb{\conf{z}{i+1} \not \in N_{i+1}(V_i)}$. Then, $\conf{a}{i},\conf{b}{i} \not \in V_i$. Thus, it follows from \autoref{videfn} and \autoref{avbvdefn}  that $\rho(A_a^{acc})=\rho(A_b^{acc})=0$. Therefore, $actual \le \gamma_i^{acc}$ by \autoref{simpleprobexpanded}. Moreover, by \autoref{secondboundhelp1} we get that $approx = 0$. Therefore, we have:
\begin{equation}\label{error_one}
\begin{split}
& |  actual - approx| \le \gamma_i^{acc}
\end{split}
\end{equation}

\item $\pmb{\conf{z}{i+1} \in N_{i+1}(V_i) \backslash N_{i+1}(S_i)} \textbf{ and } \pmb{\conf{a}{i},\conf{b}{i} \in V_i \backslash S_i}$. 
Using \autoref{simpleprobexpanded} we know that
\begin{equation}\label{secondboundhelp2}
\mid actual - \frac{1}{2}\big(\rho(A_a^{acc})+\rho(A_b^{acc})\big) \mid \; \le \gamma_i^{acc}
\end{equation}

Thus, we have:

\begin{equation}\begin{aligned}\label{errorboundtwo}
|actual - approx| &= |actual - \frac{1}{2}\big(\rho(A_a^{acc})+\rho(A_b^{acc})\big) +  \frac{1}{2}\big(\rho(A_a^{acc})+\rho(A_b^{acc})\big) -approx| \\
&\le  |\frac{1}{2}\big(\rho(A_a^{acc})+\rho(A_b^{acc})\big)-approx| + |actual - \frac{1}{2}\big(\rho(A_a^{acc})+\rho(A_b^{acc})\big)|  \\
&\le \frac{1}{2}(\rho(A_a^{acc})+ \rho(A_b^{acc})) + \gamma_i^{acc} \\
&\le \frac{1}{2^{30s}} + \gamma_i^{acc}
\end{aligned}\end{equation}

The second-to-last inequality follows from \autoref{secondboundhelp1}, the fact that \( \text{approx} \ge 0 \), and \autoref{secondboundhelp2}. The final inequality is grounded in the observation that \( \rho(A_a^{acc}) \) and \( \rho(A_b^{acc}) \) are both less than \( \frac{1}{2^{30s}} \). This conclusion follows from the definitions of \( S_i \) and the fact that \( \conf{a}{i} \) and \( \conf{b}{i} \) belong to \( V_i \setminus S_i \).
\item $\pmb{\conf{z}{i+1} \in N_{i+1}(V_i) \backslash N_{i+1}(S_i)} \textbf{ such that wlog. } \pmb{\conf{a}{i} \in V_i \backslash S_i}$ \textbf{ and } $\pmb{\conf{b}{i} \not \in V_i}$. Since $\conf{b}{i}\not \in V_i$, $\rho(A_b^{acc})=0$. \autoref{simpleprobexpanded} gives us that
$$
|actual - \frac{1}{2}\rho(A_a^{acc})|\le \gamma_i^{acc} 
$$
and \autoref{secondboundhelp1} gives that
$$
approx \le \rho(A_a^{acc})
$$
Therefore, similar to the previous case, we get
\begin{equation}
\begin{aligned}
|actual-approx| &\le \frac{1}{2}\rho(A_a^{acc}) + \gamma_i^{acc} \\
& \le \frac{1}{2}\cdot \frac{1}{2^{30s}} + \gamma_i^{acc}
\end{aligned}
\end{equation}
where the last inequality follows from the fact that $\conf{a}{i} \in V_i \backslash S_i$ and thus $\rho(A_a^{acc}) < \frac{1}{2^{30s}}$.
\item $\pmb{\conf{z}{i+1} \in N_{i+1}(S_i)}\textbf{ and }\pmb{\conf{a}{i},\conf{b}{i} \in S_i}$. Since by our assumption $a$ and $b$ are different configurations, $|B_{az}|=|B_{bz}|=2^{m-1}$. Thus we can rewrite \autoref{simpleprobexpanded} as

\begin{equation}\label{rewrite}
actual = \rho(B_{az})\rho(A_a^{acc}) + \rho(B_{bz})\rho(A_b^{acc}) \pm \gamma_i^{acc}
\end{equation}

Since $\conf{z}{i+1}\in N_{i+1}(\conf{a}{i})\cap N_{i+1}(\conf{b}{i})$ and both $\conf{a}{i}$ and $\conf{b}{i}$ are in $S_i$, and since we are in the case where $h_{i+1}$ is good, using \autoref{goodnessofhashfn} we obtain that

\begin{equation}\label{2ee}
\begin{aligned}
&|\Pr_{seed}[seed \in A_a^{acc} \text{ and } h_{i+1}(seed) \in B_{az}] - \rho(B_{az})\rho(A_a^{acc})| \le \alpha \\
&|\Pr_{seed}[seed \in A_b^{acc} \text{ and } h_{i+1}(seed) \in B_{bz}] - \rho(B_{bz})\rho(A_b^{acc})| \le \alpha
\end{aligned}
\end{equation}
Using \autoref{rewrite} and \autoref{2ee} we get

\begin{align}\label{gettingtoitone}
\begin{split}
actual =\\ &\Pr_{seed}[seed \in A_a^{acc} \text{ and } h_{i+1}(seed) \in B_{az} ] \\ &+ \Pr_{seed}[seed \in A_b^{acc} \text{ and } h_{i+1}(seed) \in B_{bz} ]  \pm (\gamma_i^{acc} + 2\alpha)
\end{split}
\end{align}
Using \autoref{secondboundhelp1} and \autoref{gettingtoitone} we get
\begin{equation}
|actual-approx| \le \gamma_i^{acc} + 2\alpha
\end{equation}
\item $\pmb{\conf{z}{i+1} \in N_{i+1}(S_i)} \textbf{ such that wlog. } \pmb{\conf{a}{i} \in S_i} \textbf{ and } \pmb{\conf{b}{i} \not \in V_i}$. As $\conf{b}{i} \not \in V_i$, $\rho(A_b^{acc})= 0$, and thus \autoref{rewrite} gives us that
$$\
actual = \rho(A_a^{acc})\rho(B_{az}) \pm \gamma_i^{acc}
$$
and using \autoref{secondboundhelp1} we have
$$
approx = \Pr_{seed}[seed \in A_a^{acc} \text{ and } h_{i+1}(seed) \in B_{az}]
$$
As $\conf{z}{i+1} \in  N_{i+1}(\conf{a}{i})$ and $\conf{a}{i} \in S_i$, by the goodness of $h_{i+1}$, we have
$$
|\Pr_{seed}[seed \in A_a^{acc} \text{ and } h_{i+1}(seed) \in B_{az}] - \rho(B_{az})\rho(A_a^{acc})| \le \alpha
$$
Therefore, using the above, we get
$$
actual =  \Pr_{seed}[seed \in A_a^{acc} \text{ and } h_{i+1}(seed) \in B_{az}] \pm (\gamma_i^{acc} + \alpha)
$$
i.e.,
$$
|actual - approx | \le \gamma_i^{acc} + \alpha
$$
\item $\pmb{\conf{z}{i+1}  \in N_{i+1}(S_i)} \textbf{ such that wlog } \pmb{\conf{a}{i} \in S_i} \textbf{ and } \pmb{\conf{b}{i} \in V_i \backslash S_i}$. 
Let $Y = \big(\rho(B_{az})\rho(A_a^{acc}) + \rho(B_{bz})\rho(A_b^{acc})\big)$. We have that
\begin{equation}
\begin{aligned}
|actual -approx | &= |actual - Y + Y - approx| \\
& \le |approx - Y| + |actual - Y| \\
&\le |approx - Y| + \gamma_i^{acc} \\
& =  \mid\Pr[seed \in A_a^{acc} \text{ and } h_{i+1}(seed) \in B_{az}] + \\&\quad+ \Pr[seed \in A_b^{acc} \text{ and } h_{i+1}(seed) \in B_{bz}] - Y\mid + \gamma_i^{acc} \\
& \le |\Pr[seed \in A_a^{acc} \text{ and } h_{i+1}(seed) \in B_{az}] - \rho(B_{az})\rho(A_a^{acc})| \\
&\quad + |\Pr[seed \in A_b^{acc} \text{ and } h_{i+1}(seed) \in B_{bz}] - \rho(B_{bz})\rho(A_b^{acc})| + \gamma_i^{acc} \\
&\le \alpha + |\Pr[seed \in A_b^{acc} \text{ and } h_{i+1}(seed) \in B_{bz}] - \rho(B_{bz})\rho(A_b^{acc})| + \gamma_i^{acc} \\
& =|\Pr[seed \in A_b^{acc} \text{ and } h_{i+1}(seed) \in B_{bz}] - \frac{1}{2}\rho(A_b^{acc})| + \alpha + \gamma_i^{acc}\\
&\le \max\{\Pr[seed \in A_b^{acc} \text{ and } h_{i+1}(seed) \in B_{bz}], \frac{1}{2}\rho(A_b^{acc})\} + \alpha + \gamma_i^{acc}\\
&\le \max\{\rho(A_b^{acc}), \frac{1}{2}\rho(A_b^{acc})\} + \alpha + \gamma_i^{acc}\\
&= \rho(A_b^{acc})  + \alpha + \gamma_i^{acc}\\
&\le \frac{1}{2^{30s}}  + \alpha + \gamma_i^{acc}
\end{aligned}
\end{equation}
where the second inequality follows from \autoref{rewrite}; the second equality from \autoref{secondboundhelp1}; the fourth inequality from the goodness of $h_{i+1}$; third equality from the fact that $|B_{bz}|= 2^{m-1}$; fifth inequality from the fact that the terms in absolute difference are non-negative quantities; and the last inequality from the fact that $\conf{b}{i}\in V_i \backslash S_i$ and thus $\rho(A_b^{acc}) < \frac{1}{2^{30s}}$.
\end{itemize}
Thus, in summary, if \( h_{i+1} \) is good, we get that \( \gamma_{i+1}^{acc} \le \frac{1}{2^{30s}} + \gamma_i^{acc} + 2\alpha \). By considering \( \rejz \) instead of \( \accz \) and repeating the same analysis as above, one can show that the same recurrence relation holds for \( \gamma_{i+1}^{rej} \) as well.

\; \\ 
\textbf{Setting parameters $\alpha$ and $m$}. We pause our description of further cases to first establish the values for the seed length \( m \) and the parameter \( \alpha \) in \autoref{goodnessofhashfn} with foresight. In the scenario where \( \mathcal{F}^L \) does not compress the catalytic tape, and all the hash functions \( h_1 \) to \( h_l \) are good in their respective iterations, we would like to ensure that \( \gamma_l = \max\{\gamma_l^{acc}, \gamma_l^{rej}\} \)—which we refer to as the \emph{error} of PRG\(_l\)—is small. Since in this case all hash functions are good, we can use the recurrence $
\gamma_{i+1}^{acc} \le \frac{1}{2^{30s}} + \gamma_i^{acc} + 2\alpha 
$ (and the same for \( \gamma_{i+1}^{rej} \)). Since \( \gamma_0^{acc} = \gamma_0^{rej} = 0 \), we derive that \( \gamma_l \) is at most 

\begin{equation}\label{prgerrorapp}
l \cdot \left(2\alpha + \frac{1}{2^{30s}}\right)
\end{equation}

By setting $\alpha = \frac{1}{2l^2} = \frac{1}{2 \cdot 2^{40s}}$, we find from \autoref{prgerrorapp} that $\gamma_l$ is at most $\frac{1}{2^{20s}} + \frac{2^{20s}}{2^{30s}}$. This is at most $\frac{1}{2^{5s}}$ for sufficiently large $s$. The following lemma assists in bounding the fraction of bad hash functions, which will aid in compressing the catalytic tape if \( h_{i+1} \) is bad.

\begin{lemma}[Nisan \cite{Nisan92}]\label{actualnisanlemmmmes}
Let $\mathcal{H}=\{h:\{0,1\}^m \rightarrow \{0,1\}^m\}$ be a pairwise independent hash family, and let $A,B \subseteq \{0,1\}^m$. Then, the probability $h$ chosen from $\mathcal{H}$ uniformly at random is not $(A, B,\alpha)$-independent is less than or equal to $ \frac{\rho(A)\rho(B)(1-\rho(B))}{2^m\alpha^2}$.
\end{lemma}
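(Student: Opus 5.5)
This is Nisan's standard hash-mixing lemma, and I would prove it by a second-moment (Chebyshev) argument over the random choice of $h \in \mathcal{H}$. Throughout I write $\rho_A = \rho(A)$ and $\rho_B = \rho(B)$.

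\emph{Setting up the random variable.} For each $x \in A$, let $Z_x$ be the indicator of the event $h(x) \in B$. Set
$$Z = \frac{1}{2^m}\sum_{x\in A} Z_x = \Pr_{x}[x\in A \text{ and } h(x)\in B].$$
The goal is to bound $\Pr_h[|Z - \rho_A\rho_B| > \alpha]$.

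\emph{First moment.} Pairwise independence of $\mathcal{H}$ gives in particular that each $h(x)$ is uniform on $\{0,1\}^m$. Hence $\mathbb{E}[Z_x] = \rho_B$, and therefore $\mathbb{E}[Z] = 2^{-m}|A|\rho_B = \rho_A\rho_B$.

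\emph{Second moment.} For distinct $x, x' \in A$, the values $h(x)$ and $h(x')$ are independent by pairwise independence. So $Z_x$ and $Z_{x'}$ are independent, and all covariances vanish. The variance therefore adds:
$$\mathrm{Var}[Z] = 2^{-2m}\sum_{x\in A}\mathrm{Var}[Z_x] = 2^{-2m}|A|\rho_B(1-\rho_B) = \frac{\rho_A\rho_B(1-\rho_B)}{2^m}.$$

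\emph{Conclusion.} Chebyshev's inequality gives
$$\Pr_h\big[|Z-\rho_A\rho_B| > \alpha\big] \le \frac{\mathrm{Var}[Z]}{\alpha^2} = \frac{\rho_A\rho_B(1-\rho_B)}{2^m\alpha^2}.$$
The event on the left is exactly the event that $h$ fails to be $(A,B,\alpha)$-independent in the sense of \autoref{goodnesslemma}, which is the claimed bound.

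The only point needing care is the variance computation. It relies on pairwise (not full) independence of distinct hash values, and it uses that each hash value is marginally uniform. Both properties are part of the standard definition of a pairwise independent family, which the paper's family $\mathcal{H}$ of size $2^{2m}$ satisfies. There is no other obstacle.
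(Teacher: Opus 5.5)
Your argument is correct and is exactly Nisan's original second-moment proof: indicators $Z_x$ for $x \in A$, mean $\rho(A)\rho(B)$ from marginal uniformity of $h(x)$, variance $2^{-m}\rho(A)\rho(B)(1-\rho(B))$ from pairwise independence, then Chebyshev. The paper does not reprove this lemma but cites it from \cite{Nisan92}, where this same Chebyshev computation is the proof.
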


\begin{corollary}\label{mostaregood}
The probability that a hash function (chosen uniformly at random from the family) is bad is upper-bounded by $\frac{2d_M\cdot T}{2^m \alpha^2}$, where $d_M$ is a constant that depends on the machine $M$.
\end{corollary}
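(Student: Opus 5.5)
The plan is a union bound over all the independence conditions in \autoref{goodnessofhashfn}, using \autoref{actualnisanlemmmmes} for each one. Goodness of $h_{i+1}$ asks for $(A_v^{acc},B_{vz},\alpha)$- and $(A_v^{rej},B_{vz},\alpha)$-independence for every pair $(\conf{v}{i},\conf{z}{i+1})$ with $\conf{v}{i}\in S_i$ and $\conf{z}{i+1}\in N_{i+1}(\conf{v}{i})$. A uniformly random $h$ is therefore bad with probability at most the sum, over all such pairs and both choices $acc/rej$, of the probability that $h$ fails that single condition.

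First I count the conditions. We are in Case $2.2$, so $|S_i|\le T$. By \autoref{definition_config_graphs}, every vertex of $\gmx$ has at most $d_M$ incident edges, and the layered graph inherits this. Hence each $\conf{v}{i}$ has at most $d_M$ in-neighbours $\conf{z}{i+1}$, i.e.\ $|N_{i+1}(\conf{v}{i})|\le d_M$. The extra self-loop edges at halting configurations add at most one more, which is absorbed into the constant. So there are at most $T\cdot d_M$ pairs $(v,z)$, and at most $2d_MT$ conditions once both $acc$ and $rej$ are counted.

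Second I bound each term. For any fixed $A,B\subseteq\{0,1\}^m$, \autoref{actualnisanlemmmmes} gives failure probability at most $\rho(A)\rho(B)(1-\rho(B))/(2^m\alpha^2)$. Since $\rho(A),\rho(B)\in[0,1]$, this is at most $1/(2^m\alpha^2)$. The sets $A_v^{acc},A_v^{rej}$ depend only on PRG$_i$, i.e.\ on $h_1,\dots,h_i$, and $B_{vz}$ depends only on the graph. So for the fixed earlier hash functions these are fixed sets, independent of the random choice of $h_{i+1}$, and the lemma applies to each one.

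Multiplying the count by the per-condition bound gives $\frac{2d_M T}{2^m\alpha^2}$. The only delicate point is justifying $|N_{i+1}(\conf{v}{i})|\le d_M$, including the halting self-loops. If needed, I would redefine $d_M$ to be one larger than the bound in \autoref{definition_config_graphs}, which still depends only on $M$.
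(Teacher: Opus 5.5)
Your proposal is correct and follows the paper's proof: a union bound via Nisan's lemma over the at most $2d_M|S_i| \le 2d_M T$ independence conditions, each failing with probability at most $1/(2^m\alpha^2)$. You also make explicit two points the paper leaves implicit: the halting self-loop edges may add one in-neighbour, which is absorbed by enlarging $d_M$, and the sets $A_v^{acc}$, $A_v^{rej}$, $B_{vz}$ are fixed independently of the choice of $h_{i+1}$.
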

\begin{proof}
In the layered configuration graph of $\gmx$, the number of incoming edges to any vertex or configuration is bounded by the constant $d_M$, which represents the number of edges incident to any vertex. Thus, given that we are in the case where \( |S_i| \leq T \), we can apply \autoref{actualnisanlemmmmes} and use the union bound to upper-bound the probability that a randomly chosen hash function from the family is not good by:
\begin{equation}\label{badupperbound}
\begin{aligned}
\frac{1}{2^m \alpha^2} \big ( 2d_M|S_i|\big) \le \frac{2d_M\cdot T}{2^m \alpha^2} 
\end{aligned}
\end{equation}
\end{proof}

We set \(m = 500s\). Thus, substituting the values of \(T\), \(\alpha\), and \(m\) in \autoref{mostaregood}, we can upper bound the fraction of hash functions that are not good (in a specific iteration) by 

\begin{equation}\label{fractionbad}
 \frac{8 \cdot d_M \cdot 2^{180s}}{2^{500}}   
\end{equation}

which is at most \(\frac{1}{2^{200s}}\) for sufficiently large \(s\).

Note that our choice of seed length satisfies the constraints we encountered previously, as \(m > \max\left\{\frac{42s + 3}{2}, \frac{100s}{3}\right\}\). Additionally, with this seed length, we have \(L = 2^{100s} < H = 2^{3m} = 2^{1500s}\), ensuring that the thresholds make sense. 

We now resume the description of the further cases.



\subparagraph{Case $2.2.2$: $h_{i+1}$ is bad.}
We now consider the scenario when the hash function \( h_{i+1} \) is bad. In this case, we compress the catalytic tape by reusing the idea from Pyne \cite{Pyne25}, which was used to compress bad hash functions in the context of de-randomizing \(\BPL\) in a catalytic setting. 

We first assume that we can efficiently check whether a hash function from the family is good or bad, and describe the compression scheme.

\textbf{Compression}. We iterate over all the hash functions from the family up to \( h_{i+1} \) in lexicographical order and count how many bad hash functions exist. This allows us to determine the index \( idx \) of \( h_{i+1} \) among the set of bad hash functions, which corresponds to the total count at the end of this process. Each hash function is represented by \( 2m = 1000s \) bits; therefore, we can iterate through them using \( O(s) \) space and in \( 2^{O(s)} \) time. Based on our choice of seed length, we know that the fraction of bad functions, as indicated in \autoref{fractionbad}, is at most \( \frac{1}{2^{200s}} \). Consequently, the index \( idx \) can be represented using \( 800s \) bits, which frees up \( 200s \) bits. In other words, we replace \( h_{i+1} \) with \( idx \circ 0^{200s} \) on the catalytic tape. As we cannot remember which hash function we compressed, we copy the contents of the first \(20s + 2\) bits of \(\pmb{\mathrm{tar}}\) (recall that the length of \(\pmb{\mathrm{tar}}\) is \(3m = 1500s\)), given by \(\pmb{\mathrm{tar}}_{\leq 20s + 2}\), into the freed-up \(200s\) bits. That is, we change the contents (in place of \(h_{i+1}\)) to \(idx \circ \pmb{\mathrm{tar}}_{\leq 20s + 2} \circ 0^{180s - 2}\). In place of the original $\pmb{\mathrm{tar}}_{\le 20s+2}$, we write $10\circ i$. This is feasible because $i \le l$, so it can be specified using $20s$ bits. Here, the $10$ in the beginning indicates this compression type. We further copy the last $180s-2$ bits of $\pmb{\mathrm{tar}}$, given by $\pmb{\mathrm{tar}}_{\ge 1320s +3}$, into the remaining free $180s-2$ bits. Thus, finally changing the contents in place of $h_{i+1}$ to $idx \circ \pmb{\mathrm{tar}}_{\leq 20s + 2} \circ \pmb{\mathrm{tar}}_{\ge 1320s +3}$; which effectively frees up the last $180s-2$ bits of $\pmb{\mathrm{tar}}$ (and thus the catalytic tape). Notice that in this compression, we do not change the first $c$ bits of the catalytic tape.

\textbf{Decompression}. We decompress as follows: We read the first \(20s + 2\) bits of \(\pmb{\mathrm{tar}}\), which gives us \(i\) and thus indicates the location on the catalytic tape where \(h_{i+1}\) was written. Consequently, we now have access to the contents \(idx \circ \pmb{\mathrm{tar}}_{\le 20s + 2} \circ \pmb{\mathrm{tar}}_{\ge 1320s + 3}\). We then loop through all hash functions again in lexicographical order. When we find a bad one with index \(idx\), we know it is \(h_{i+1}\). We copy \(\pmb{\mathrm{tar}}_{\le 20s + 2}\) and \(\pmb{\mathrm{tar}}_{\ge 1320s + 3}\) to their original locations and replace \(idx \circ \pmb{\mathrm{tar}}_{\le 20s + 2} \circ \pmb{\mathrm{tar}}_{\ge 1320s + 3}\) with \(h_{i+1}\).  

We now describe in a step-by-step manner how we can check if a hash function is bad:

\begin{enumerate}[(a)]
    \item Since we are in the sub-case where all the $y$-DFS from $\accz$ and $\rejz$ have a size of at most \( H = 2^{1500s} \), and where additionally \( |S_i| \leq T = 2^{100s} \), we can utilize the subroutines \(\countsi\) and \(\indexsi\) from \autoref{subroutineclaim}. This enables us to visit each configuration \(\conf{v}{i} \in S_i \). Furthermore, for any \( j \in [c+s] \), we can output the \( j \)-th bit in the description of \( v \) while ensuring that the subroutines reset the first \( c \) bits of the catalytic tape back to \( \tau \). According to \autoref{subroutineclaim}, this entire process can be accomplished using \( O(s) \) workspace and requires \( 2^{O(s)} \) time. 
 \item For any \(\conf{v}{i}\) in \(S_i\) (which can be visited using $(a)$), we know that there are at most \(d_M\) (a constant) elements in \(N_{i+1}(\conf{v}{i})\). While at \(\conf{v}{i}\), we can take a peek at any of its neighbors \(\conf{z}{i+1}\) in layer \(i+1\). We can keep track of the local changes made to move from \(\conf{v}{i}\) to \(\conf{z}{i+1}\) using \(O(s)\) space, which can be undone. This allows us to effectively visit \(\conf{z}{i+1}\) and determine the label(s) for the edge from \(\conf{z}{i+1}\) to \(\conf{v}{i}\), which can be \(0\), \(1\), or both. Consequently, we can figure out if \(B_{vz}\) is the set of seeds with the first bit as \(0\) or \(1\) or the set of all possible seeds, and thus we can compute \(\rho(B_{vz})\), which will be either \(\frac{1}{2}\) or \(1\).
  
\item Now we describe how to check the goodness of a hash function. It suffices to demonstrate how to determine if a given hash function \( h \) is \((A_v^{acc}, B_{vz})\)-independent for some \(\conf{v}{i} \in S_i\) and \(\conf{z}{i+1} \in N_{i+1}(\conf{v}{i})\). The same approach can be used to check \((A_v^{rej}, B_{vz})\)-independence. We can then iterate through all possible \(\conf{v}{i}\) and \(\conf{z}{i+1}\) using steps $(a)$ and $(b)$ to check the goodness of \(h\).

We will keep two counters, both initialized to zero. Given that all \(y\)-DFS from \(\accz\) are of size at most $H$, we do the following for every seed (of PRG$_i$): We run the \(y\)-DFS from \(\accz\) for $H$ steps. For each configuration visited, we move back and forth via \(\accz\) to compare it bit-by-bit with \(\conf{v}{i}\). This allows us to check if the $y$-DFS from $\accz$ visits \(\conf{v}{i}\). If it does, we increase the first counter. If, for such a seed, \(h(seed) \in B_{vz}\), we increase the second counter as well. Notice that after executing the $y$-DFS for each seed, the value of the first counter, as per \autoref{avbvdefn}, is \(|A_v^{acc}|\). Meanwhile, the value of the second counter is \(2^m \Pr_{seed}[seed \in A_v^{acc} \text{ and } h(seed) \in B_{vz}]\). Consequently, we can check if $h$ is \((A_v^{acc}, B_{vz}, \alpha)\)-independent (as per \autoref{goodnesslemma}) for our value of $\alpha=\frac{1}{2\cdot 2^{40s}}$; using the value of the second counter, $|A_v^{acc}|$, and $|B_{vz}|$ computed using $(b)$.
\end{enumerate}

The hash functions from the family can be computed using space \( O(s) \), and the subroutines we employed also use space \( O(s) \) while requiring time \( 2^{O(s)} \). This allows us to efficiently determine whether a hash function is bad. Additionally, after performing the check, we reset the first \( c \) bits of the catalytic tape (used by the subroutines) back to the value \( \tau \).


\phantomsection
\subsubsection*{Case $3$: Can build upto PRG$_l$}\label{catch case}
The final case is where $\mathcal{F}^L$ does not achieve compression anywhere, and all the hash functions \( h_1 \) to \( h_l \) are good in their respective iterations.  In this case, $\mathcal{F}^L$ computes two fractions:

\begin{equation}\label{fractions}
\begin{aligned}
&f_{acc} \coloneqq \Pr_{seed}[y\text{-DFS from } \accz \text{ visits } \conf{\start}{l} \text{ for } y \leftarrow \text{PRG}_l], \\
&f_{rej} \coloneqq \Pr_{seed}[y\text{-DFS from } \rejz \text{ visits } \conf{\start}{l} \text{ for } y \leftarrow \text{PRG}_l].
\end{aligned}
\end{equation}

Here, $f_{acc}$ denotes the fraction of seeds for which the $y$-DFS from $\accz$ according to PRG$_l$, visits $\conf{\start}{l}$, while $f_{rej}$ is similarly defined. Let the constant $\zeta \coloneqq \frac{2\delta}{1+\frac{\delta}{2\epsilon}} < 2\epsilon$ (follows from the assumption that $\delta < 2\epsilon$). If $f_{acc} \ge \frac{1}{2} + \epsilon - \zeta$, then $\mathcal{F}^L$ accepts the input $x$. Conversely, if $f_{rej} \ge \frac{1}{2} + \epsilon - \zeta$, then $\mathcal{F}^L$ rejects. If neither condition is satisfied, it outputs $\perp$.

Note that all the $y$-DFS from $\accz$ and $\rejz$, as per PRG$_{l-1}$, have a size of at most $H$ in this situation. Otherwise, $\mathcal{F}^L$ would have compressed the catalytic tape. It follows that the size of each $y$-DFS from $\accz$ and $\rejz$, as per PRG$_l$, is at most $d_M H$, where $d_M$ is a constant that denotes the maximum number of edges incident to any vertex in the layered configuration graph of $\gmx$. Thus, to compute the fraction $f_{acc}$, we can do the following: For each seed (of PRG$_l$), we perform the $y$-DFS from $\accz$ for $d_MH$ steps. For each step, we check if the configuration visited is a starting configuration in layer $l$. If yes, we can move back and forth between the visited configuration and $\accz$ to check if the catalytic contents of the configuration are $\tau$. This way, we can count the number of seeds for which $y$-DFS from $\accz$ (as per PRG$_l$) visits $\conf{\start}{l}$, and thus compute $f_{acc}$. This can be done using $O(s)$ workspace and in time $2^{O(s)}$. The fraction $f_{rej}$ can be computed similarly.

\subsection{Correctness}\label{correctness}
It can be verified that $\mathcal{F}^L$ (and the decompression algorithms, i.e., $\mathcal{R}^L$) use only $O(s)$ workspace and run in time $2^{O(s)}$. Moreover, in all cases, $\mathcal{F}^L$ frees up the last bit on the catalytic tape, except when it ends up in \hyperref[catch case]{Case $3$}, where it outputs accept, reject, or $\perp$. We will first argue the correctness of these outputs.

Consider the case when $f_{acc} \ge \frac{1}{2}+\epsilon  -\zeta$. Using \autoref{prgerrorapp}, we know the error by PRG$_l$ is at most $\frac{1}{2^{5s}}$. That is,

\begin{equation}\label{useless0}
\Big|f_{acc} - \Pr[\text{Going from } \conf{\start}{l} \text{ to }\accz \text { in the layered graph of $\gmx$}] \Big| \le \frac{1}{2^{5s}}
\end{equation}

Thus, it follows from \autoref{layeredconfgrph}

\begin{equation*}
\Big|f_{acc} - \Pr[\text{Going from } \start \text{ to } \acc \text{ within } l \text{ steps in } \gmx]\Big| \le \frac{1}{2^{5s}}
\end{equation*}
Therefore, we get
\begin{align*}
&\Pr[M \text{ accepts $x$, given initial catalytic contents $\tau$}] \\
&\ge \Pr[\text{Going from } \start \text{ to } \acc \text{ within } l \text{ steps in } \gmx]\\
&\ge f_{acc} - \frac{1}{2^{5s}}\\
&\ge \frac{1}{2}+\epsilon -\zeta - \frac{1}{2^{5s}}\\ &= \frac{1}{2} -\epsilon + (2\epsilon - \zeta) - \frac{1}{2^{5s}}
\end{align*}
Since the constant $(2\epsilon - \zeta) > 0$, it holds for large enough $s$ that $\frac{1}{2^{5s}} < \frac{(2\epsilon - \zeta)}{2}$; thus, the probability is lower-bounded by $\frac{1}{2}-\epsilon + \frac{(2\epsilon-\zeta)}{2}$. But this means the input $x$ is in the language. This is because if it weren't the case, by the definition of $M$, the probability $M$ that accepts $x$ is $\le \frac{1}{2}-\epsilon$, leading to a contradiction. Similarly, if $f_{rej} \ge\frac{1}{2}+\epsilon-\zeta$, it implies $x$ is not in the language. Thus, whenever $\mathcal{F}^L$ outputs accept/reject, it is correct.

We are left to show that the fraction of initial catalytic tapes for which $\mathcal{F}^L$ outputs $\perp$ is upper-bounded by $\frac{1}{2^{4s}}$. Let $\tau$ be the contents of the initial $c$ bits of the catalytic tape used by $\mathcal{F}^L$. Then, recalling \autoref{taunodedefn} and the discussion from \autoref{proofepsdelderan}, we know that for $0 < \beta  < 1-\delta$ (this is valid as $\delta < 2\epsilon \le 1$), the $\tau^{\beta}$-graph consists of $\start$, at least one of $\acc$ and $\rej$, and does not contain $\acc[\tau']$/$\rej[ \tau']$, for $\tau \not = \tau'$. Taking $\beta$ to be

$$0< \beta = \frac{1}{4}\bigg(1-\frac{\delta}{2\epsilon}\bigg) < 1-\frac{\delta}{2\epsilon} \le 1-\delta$$

and using \autoref{tauepsp}, we get that the probability of leaving the $\tau^{\beta}$-graph to (null vertex) $\perp$, starting the walk from $\start$  is 

$$\le  \frac{\delta}{(1-\beta)} = \frac{2\delta}{1.5 + \frac{\delta}{4\epsilon}} \coloneqq \eta $$

That is with probability $\ge 1-\eta$ we reach $\acc$ or $\rej$ within (i.e., without leaving it) the $\tau^{\beta}$-graph, starting the walk from $\start$. Consider the case when $x$ is in the language. Then, by definition, $M$ rejects it with probability $\le \frac{1}{2}-\epsilon$. Thus, the probability of reaching $\acc$ from $\start$ within the $\tau^{\beta}$-graph is at least 

\begin{equation}\label{useless}
\begin{aligned}
    1-\bigg(\frac{1}{2}-\epsilon\bigg) - \eta &= \frac{1}{2} + \epsilon  -\eta \\
    &= \frac{1}{2}+\epsilon -\zeta + (\zeta-\eta)
\end{aligned}
\end{equation}

Similarly, when $x  \not\in L$, the probability of reaching $\rej$ from $\start$ within the $\tau^{\beta}$-graph is at least $\frac{1}{2}+\epsilon -\zeta + (\zeta-\eta)$. Plugging in the values for $\eta,\zeta$ we get that

\begin{align*}
(\zeta -\eta) &= \frac{2\delta}{1+\frac{\delta}{2\epsilon}} - \frac{2\delta}{1.5+\frac{\delta}{4\epsilon}} \\
& =2\delta\cdot \frac{0.5 -\frac{\delta}{4\epsilon}}{\big(1+\frac{\delta}{2\epsilon}\big)\big(1.5+\frac{\delta}{4\epsilon}\big)}\\
& > 0
\end{align*}
where the last inequality follows from the assumption that $2\epsilon > \delta > 0$. Finally, using \autoref{avgsize} we know that the average size of $\tau^{\beta}$-graph is at most $2^{4s}$. Thus, by Markov's inequality, the probability (over the uniform choice of $\tau$) that the $\tau^{\beta}$-graph has size $\ge 2^{9s}$ is at most $\frac{1}{2^{5s}}$. In other words, the probability (over $\tau$) that there exists a path of length $\ge 2^{9s}$ from $\start$ to $\acc$/$\rej$ within the $\tau^{\beta}$-graph is at most $\frac{1}{2^{5s}}$. Since $l=2^{20s}$, using \autoref{useless} and \autoref{taunodedefn}  we have that with probability (over $\tau$) $\ge 1-\frac{1}{2^{5s}}$:

\begin{align*}
x \in L &\Rightarrow  \\
&\Pr[\text{Going from } \start \text{ to } \acc \text{ within } l \text{ steps in } \gmx] \ge \frac{1}{2}+\epsilon - \zeta + (\zeta-\eta)    \\
x \not \in L &\Rightarrow \\
&\Pr[\text{Going from } \start \text{ to } \rej \text{ within } l \text{ steps in } \gmx] \ge \frac{1}{2}+\epsilon - \zeta + (\zeta-\eta) 
\end{align*}

Since $(\zeta-\eta)> 0$, for large enough $s$, $\frac{1}{2^{5s}} \le \frac{(\zeta-\eta)}{2}$. Since PRG$_l$ has an error of $\le \frac{1}{2^{5s}}$, using \autoref{useless0} and the above, we have that, with a probability (over $\tau$) of at least $1-\frac{1}{2^{5s}}$,

\begin{align*}
x \in L &\Rightarrow  \\
&f_{acc} \ge \frac{1}{2}+\epsilon - \zeta + (\zeta-\eta) - \frac{1}{2^{5s}} \ge  \frac{1}{2}+\epsilon - \zeta + \frac{(\zeta-\eta)}{2} > \frac{1}{2}+\epsilon - \zeta \\
x \not \in L &\Rightarrow \\
&f_{rej} \ge \frac{1}{2}+\epsilon - \zeta + (\zeta-\eta) - \frac{1}{2^{5s}} \ge  \frac{1}{2}+\epsilon - \zeta + \frac{(\zeta-\eta)}{2} >    \frac{1}{2}+\epsilon - \zeta 
\end{align*}

In other words, we have that $\mathcal{F}^L$ outputs $\perp$ with probability (over $\tau$) $\le \frac{1}{2^{5s}}$. This completes the proof of  \autoref{lemmvianisansprggeneral} for the case $\delta > 0$.

\subsection{When \texorpdfstring{$\delta = 0$}
{delta = 0}.}\label{deltazero}


For \(\delta = 0\), \(\mathcal{F}^L\) behaves exactly as it does for \(\delta > 0\), except when it falls into \hyperref[catch case]{Case $3$}, where it behaves slightly differently. In this case, it outputs accept if \(f_{acc} \ge \frac{1}{2} + \frac{\epsilon}{2}\) and reject if \(f_{rej} \ge \frac{1}{2} + \frac{\epsilon}{2}\) (recall that, by definition, \(\epsilon > 0\)). If neither of these conditions is true, as we show, it will compress the catalytic tape. Thus, \(\mathcal{F}^L\) never outputs \(\perp\).

Firstly, recall that when \(\mathcal{F}^L\) ends up in Case $3$, all the hash functions from \(h_1\) to \(h_l\) are good, and thus the error by PRG\(_l\) is at most \(\frac{1}{2^{5s}}\) (using \autoref{prgerrorapp}). Therefore, it can be argued, as done in \autoref{correctness}, that when \(\mathcal{F}^L\) does output accept or reject, it is correct. Next, observe that when \(\delta = 0\), by \autoref{taunodedefn}, the \(\tau^{\beta}\)-graph (\(\beta > 0\)) is precisely the configuration graph \(G_{M,x,\tau}\). In other words, the \(\tau^{\beta}\)-graph has no null vertex (\(\perp\)), since it follows from \autoref{tauepsp} that the probability of leaving the \(\tau^{\beta}\)-graph (from $\start$) to the \(\perp\)-vertex is \(0\), for \(\delta = 0\). Moreover, repeating the argument from \autoref{correctness}, one can verify that if the longest path  from $\start$ to $\acc$ or $\rej$ within the $\taubeta$-graph (which is the same as $\gmxtau$ in this scenario) has length $<2^{9s}$, then $\mathcal{F}^L$ outputs accept or reject, as either $f_{acc}$ or $f_{rej}$ is at least $\frac{1}{2} + \frac{\epsilon}{2}$. We now describe how $\mathcal{F}^L$ compresses the catalytic tape in the scenario in which neither of these fractions is at least $\frac{1}{2}+\frac{\epsilon}{2}$.


Let $P$ be the longest path starting from $\start$ in $G_{M,x,\tau}$. Then, we know that the length of $P$ is at least $2^{9s}$. Focus on the last $2^{9s}$ configurations along $P$, which we label as $v_1, v_2, \ldots, v_{2^{9s}}$, where each $v_i$ is defined as $\conf{\pi_i}{u_i}$. Then, by the definition of $P$ the longest path from $v_i$ (in $\gmxtau$) to either of the halt configurations $\acc$/$\rej$ has a length of $\le 2^{9s}$. Furthermore, for every $i\in [2^{9s}]$, $M$ always resets the catalytic tape to $\tau$, starting from $v_i$ (follows from the fact that $\delta=0$). Consider an arbitrary $i\in [2^{9s}]$. Then, without loss of generality, the probability that $M$, when run from $v_i$, reaches $\acc$ within $2^{9s}$ steps is at least $\frac{1}{2}$. However, because PRG$_l$ has an error of $\le \frac{1}{2^{5s}}$ and $l=2^{20s}$, using \autoref{errorpromise}, it follows that the fraction of seeds for which $y$-DFS (as per PRG$_l$) from $\accz$ visits $\confl{\pi_i}{u_i}{l}$ is at least $\frac{1}{2} - \frac{1}{2^{5s}}$. In general, for every $i$, the fraction of seeds for which the $y$-DFS from $\accz$ or $\rejz$ visits $\confl{\pi_i}{u_i}{l}$ is at least $\frac{1}{2}-\frac{1}{2^{5s}} \ge \frac{1}{4}$, for large enough $s$.  

 Consider the following set $S'_l$, which is defined similarly to the set $S_l$ (\autoref{sidefini}), with the difference that we now consider a configuration $\conf{v}{l}$ to be in the set $S'_l$ iff for at least $\frac{1}{4}$ (instead of $\frac{1}{2^{30s}}$) fraction of the seeds, the $y$-DFS (as per PRG$_l$) from $\accz$ or $\rejz$ visits $\conf{v}{l}$. That is, a configuration $\conf{v}{l}$ (in layer $l$) is in $S'_l$ iff:

 \begin{equation*}
 \Pr_{\text{seed}}[y\text{-DFS from } \accz \text{ or } \rejz \text{ visits } \conf{v}{l} \text{ for } y \leftarrow \text{PRG}_l(\text{seed})] \ge \frac{1}{4}.
 \end{equation*}

 Since, for every $i\in [2^{9s}]$, the configuration $\confl{\pi_i}{u_i}{l}$ lies in $S'_l$, we have that $|S'_l| \ge 2^{9s}$. Similar to subroutines $\countsl$ and $\indexsl$ (introduced in \hyperref[secondcasesubsection]{Case $2$}), we can implement $\countslprime$ and $\indexslprime$ to compute the size of the set $S'_l$ and to index every configuration in it. Recall that all the $y$-DFS from $\accz$ and $\rejz$, as per PRG$_{l-1}$, have a size of at most $H$ in this situation. Otherwise, $\mathcal{F}^L$ would have compressed the catalytic tape (and we would not have ended up in \hyperref[catch case]{Case $3$}). Thus, the size of each $y$-DFS from $\accz$ and $\rejz$, according to PRG$_l$, is at most $d_M H$. Therefore, referring to \autoref{subroutineclaim}, we can implement $\countslprime$ and $\indexslprime$ using $O(s)$ workspace and in time $2^{O(s)}$.

The compression we describe is identical to the one used in \hyperref[case2pointone]{Case $2.1$}, with minor modifications.

\textbf{Compression}. Let \(val(\pmb{\mathrm{tar}}_{\le 9s})\) denote the integer value of the first \(9s\) bits of \(\pmb{\mathrm{tar}}\), incremented by \(1\), which is an integer in \([2^{9s}]\). Let \(\confl{\pi}{u}{l}\) be the configuration in \(S'_l\) with index \(val(\pmb{\mathrm{tar}}_{\le 9s})\). Using \(\indexslprime\), we can visit \(\confl{\pi}{u}{l}\). Similar to Case 2.1, we define the set TAU with minor changes. Now, TAU is the set of all \(c\)-bit catalytic settings such that \(\pi' \in \text{TAU}\) iff, for at least \(\frac{2^m}{4}\) seeds (instead of \(\frac{2^m}{2^{30s}}\)), the \(y\)-DFS from \(\confl{\pi}{u}{l}\) visits either \(\conf{\acc[\pi']}{0}\) or \(\conf{\rej[\pi']}{0}\) within \(d_{M}H\) steps. From the definition of TAU, it follows that \(|\text{TAU}| \leq 4\). Additionally, as in Case 2.1, we can argue that \(\tau\) (the string representing the first \(c\) bits of our catalytic tape) is in TAU. We can also determine the index \(idx\) of \(\tau\) in the set TAU. To achieve compression, we replace \(\tau\) with \(\pi\) (by visiting \(\confl{\pi}{u}{l}\)) and \(\pmb{\mathrm{tar}}_{\le 9s}\) with \(11 \circ idx \circ u \circ 0^{8s-4}\), freeing \(8s-4\) bits (where \(11\) denotes this compression type). This is feasible since \(idx\) takes up \(2\) bits and \(u\) requires \(s\) bits.

\textbf{Decompression}. The decompression is also the same as Case 2.1. Given the compressed tape, we are effectively standing at \(\confl{\pi}{u}{l}\) and know \(idx\). As in Case 2.1, using \(idx\), we can recover \(\tau\), such that after recovering \(\tau\), we can move back and forth between \(\accz\)/\(\rejz\) and \(\confl{\pi}{u}{l}\). Finally, as in Case 2.1, we can  figure out \(val(\pmb{\mathrm{tar}}_{\le 9s})\) (and hence recover \(\pmb{\mathrm{tar}}\)), by using \(\countslprime\) and \(\indexslprime\) to compute the index of $\confl{\pi}{u}{l}$ in the set $S'_l$.

Note that, as before, we can always copy the last $8s-4$ bits of $\pmb{\mathrm{tar}}$ into the freed-up bits so that we have free bits at the end of the catalytic tape.

The correctness of the compression scheme and the fact that the compression/decompression takes $O(s)$ space and runs in time $2^{O(s)}$ can be argued similarly to Case $2.1$.


\subsection{Proof of  \autoref{lemmvianisansprgforcl}}\label{lemmapart2}

Let $L \in \BPepsCdeltL{\delta}
{\epsilon}\PT$ where $\epsilon,\delta$ are constants such that $\delta < 2\epsilon$. Without loss of generality, let the machine for $L$ use $s= d\log{n}$ bits of work space, $c=n^d$ bits of catalytic space; and let it run in time $n^{d}$ for constants $d$ and input length $n$. Then, the proof of  \autoref{lemmvianisansprgforcl} follows from the proof of \autoref{lemmvianisansprggeneral}. This is because any path from $\start$ to $\acc$/$\rej$ in $\gmxtau$ has length $\le n^d = 2^s < 2^{9s}$. Thus, following our discussion from \autoref{correctness}, $\mathcal{F}^L$ never outputs $\perp$.

\section{Proof of  \autoref{exponetiallysmall}}\label{appexposmall}
We present a proof sketch here, assuming that the reader is already familiar with the de-randomization proof of the class $\BPCL$, as described by Cook et al. \cite{CookLiMertzPyne25}. Their proof uses a variant of the Nisan-Wigderson pseudorandom generator that enables random walks on the configuration graph. A key observation they make is that, after running a $\BPCL$ machine from some starting configuration $\start$, using the output of the PRG as random bits, we can return to $\start$ by performing a $y$-DFS (see \autoref{ydfs}) from the resulting configuration (which can be assumed to lie in layer $0$), for $y$ being all possible strings (and their prefixes) produced by the PRG. One of these $y$-DFS is guaranteed to encounter $\start$. Moreover, because the configuration graphs corresponding to different initial catalytic settings are vertex-disjoint, we can be confident that no other starting configuration apart from $\start$ is encountered in any of these $y$-DFS. However, this assurance fails when dealing with machines that do not always reset the catalytic tape, as the configuration graphs may no longer be disjoint.

Let $M$ be a $\BPepsCdeltSpace{d\log{n}}{n^d}{\frac{1}{2^{4n^d+3}}}{\epsilon}$ machine (for some constant $d$). The claim is that we can still use the same approach as in \cite{CookLiMertzPyne25}, where they use walks of length at most $4n^d$. 

\begin{claim}
Let \( v \) be a configuration obtained from \( \start[\tau] \) by following a walk of length \(4n^d\) (in $\gmxtau$). Then, while performing a \(y\)-DFS starting from \(\conf{v}{0}\), for some \(y\) with \(|y| \le 4n^d\), we cannot encounter a different starting configuration \(\conf{\start[\tau']}{*}\) with \(\tau' \ne \tau\).
\end{claim}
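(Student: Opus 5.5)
The plan is to argue by contradiction, using the fact that $\delta = 2^{-(4n^d+3)}$ is so small that \emph{any} configuration reachable by a specific random walk of length at most $4n^d$ must, from that point on, reset the catalytic tape to the original contents with high probability. Such a configuration therefore cannot ``belong'' to two different initial tapes.

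First I unpack what it means for the $y$-DFS from $\conf{v}{0}$ to encounter $\conf{\start[\tau']}{j}$. By \autoref{ytree}, a vertex $\conf{\start[\tau']}{j}$ lies in $y$-tree$(\conf{v}{0})$ exactly when the following holds: starting at $\start[\tau']$ and resolving the machine's random choices by the last $j$ bits of $y$ (padding with stationary steps at halting configurations, per \autoref{layeredconfgrph}), one arrives at $v$. Hence there is a fixed string $r'$ of at most $j \le |y| \le 4n^d$ random bits along which $M$, started from $\start[\tau']$, reaches $v$. This uses the fact that once a halting configuration is reached the layered graph only self-loops, so the endpoint is still $v$. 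By hypothesis there is likewise a string $r$ of at most $4n^d$ random bits along which $M$ goes from $\start[\tau]$ to $v$. Since the randomness is read-once and uniform, each of these events has probability at least $2^{-4n^d}$.

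Next I define $p_\sigma$ to be the probability, over the remaining randomness, that $M$ run from $v$ halts with catalytic contents different from $\sigma$. Because $M$ always halts, this is well defined. If $v$ is itself halting, $p_\sigma \in \{0,1\}$ according to whether its tape equals $\sigma$. Conditioning on the prefix $r$, the overall probability that $M$ started from $\start[\tau]$ fails to reset is at least $2^{-4n^d} p_\tau$. This must be at most $\delta = 2^{-4n^d-3}$, so $p_\tau \le 1/8$. The identical argument with $r'$ and $\start[\tau']$ gives $p_{\tau'} \le 1/8$.

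Finally, since $\tau \neq \tau'$, the events ``halt with tape $\tau$'' and ``halt with tape $\tau'$'' from $v$ are disjoint. Their probabilities are at least $7/8$ each, so they would sum to at least $7/4 > 1$, a contradiction. The main point needing care is the first step: correctly translating membership in a $y$-tree into an actual forward computation path of length at most $4n^d$ in $\gmx$. In particular, I must handle the halting self-loop edges of the layered graph and check that the layer index $j$ of the encountered start node is indeed bounded by $|y| \le 4n^d$, so that the probability lower bound $2^{-4n^d}$ applies.
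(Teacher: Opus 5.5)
Your proof is correct and follows the same route as the paper. The destruction probability $2^{-(4n^d+3)}$ is tiny, so any configuration reachable from the start configuration of $\tau$ within $4n^d$ steps must reset to $\tau$ with high probability: the paper calls $v$ a $\frac{3}{4}$-node for $\tau$, and you obtain the sharper $\frac{7}{8}$. The same holds for $\tau'$, and the two disjoint reset events cannot both have probability above $\frac{1}{2}$. Your explicit translation of $y$-tree membership into a forward path of length at most $4n^d$ (layer index bounded by $|y|$, halting self-loops handled) spells out a step the paper only asserts in one line.
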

\begin{proof}
 Since \(v\) is reachable from \(\start[\tau]\) within \(4n^d\) steps, we assert that \(v\) must be a \(\taubeta[\frac{3}{4}]\)-node (see \autoref{taunodedefn}). Suppose this is not the case. Then, beginning from \(\start\), we would be able to destroy the catalytic tape with a probability of at least \(\frac{1}{2^{4n^d}} \cdot \frac{1}{4} = \frac{1}{2^{4n^d + 2}}\). This, however, would violate the definition of \(M\), which permits destruction of the catalytic tape only with a probability of at most \(\frac{1}{2^{4n^d + 3}}\).

Furthermore, if performing a $y$-DFS from \(\conf{v}{0}\) can lead us to \(\conf{\start[\tau']}{*}\), it suggests that \(v\) is reachable from \(\start[\tau']\) within \(4n^d\) steps. This means that \(v\) is a \(\tauprimebeta[\frac{3}{4}]\)-node as well. However, this implies that starting from \(v\), we could reach a halt configuration with catalytic tape contents \(\tau\), with a probability of \(\frac{3}{4}\)—and similarly for \(\tau'\)—which is not possible.
\end{proof}

\end{document}